\begin{document}
	
	\title{On the Time and Space Complexity of ABA Prevention and Detection\footnote{This research was undertaken, in part, thanks to funding from the Canada Research Chairs program and from the Discovery Grants program of the Natural Sciences and Engineering Research Council of Canada (NSERC).}}
	\author{ Zahra Aghazadeh \\ University of Calgary
		\and Philipp Woelfel \\ University of Calgary}
	\date{}                
		
	\maketitle
	\begin{abstract}
	  We investigate the time and space complexity of detecting and preventing ABAs in shared memory algorithms for systems with $n$ processes and bounded base objects.
	  To that end, we define ABA-detecting registers, which are similar to normal read/write registers, except that they allow a process $q$ to detect with a read operation, whether some process wrote the register since $q$'s last read.
	  ABA-detecting registers can be implemented trivially from a single unbounded register, but we show that they have a high complexity if base objects are bounded:
	  An obstruction-free implementation of an ABA-detecting single bit register cannot be implemented from fewer than $n-1$ bounded registers.
	  Moreover, bounded CAS objects (or more generally, conditional read-modify-write primitives) offer little help to implement ABA-detecting single bit registers: We prove a linear time-space tradeoff for such implementations.
	  We show that the same time-space tradeoff holds for implementations of single bit LL/SC primitives from bounded writable CAS objects.
	  This proves that the implementations of LL/SC/VL by Anderson and Moir \cite{AM1995a} as well as Jayanti and Petrovic \cite{JP2003a} are optimal.
	  
	  We complement our lower bounds with tight upper bounds: 
	  We give an implementation of ABA-detecting registers from $n+1$ bounded registers, which has step complexity $O(1)$. 
	  We also show that (bounded) LL/SC/VL can be implemented from a single bounded CAS object and with $O(n)$ step complexity.
	  Both upper bounds are asymptotically optimal with respect to their time-space product. 
	  
	  These results give formal evidence that the ABA problem is inherently difficult, that even writable CAS objects do not provide significant benefits over registers for dealing with the ABA problem itself, and that there is no hope of finding a more efficient implementation of LL/SC/VL from bounded CAS objects and registers than the ones from \cite{JP2003a,AM1995a}.
\end{abstract}

	\sloppy
	\section{Introduction}
\oldtodo{Check the paper by Michael (2006) Practical Lock-free and wait-free ..., at page 2. He has a reasoning why double CAS is not feasible in modern systems.}
Since the beginning of shared memory computing, programmers and researchers have had to deal with the ABA problem:
Even though a process retrieves the same value twice in a row from a shared memory object, it is still possible that the value of the object has changed multiple times.

Especially algorithms using the standard Compare-and-Swap (CAS) primitive seem to be susceptible.
A CAS object provides two operations: \Read{} returns the value of the object, and \CAS{$x$,$y$} changes the value of the object to $y$ provided that its value $v$ prior to the operation equals $x$, and it returns $v$.
(According to some specifications a \CAS{$x$,$y$} returns a Boolean, which is \True if and only if the \CAS{} succeeded, i.e., it wrote $y$.)
\oldpwtodo{Indicate in each section, which specification we use.}
Often, \CAS{} objects are used in the following way:
First, a process $p$ reads the value $x$ stored in the CAS object, then it performs some computation, and finally it tries to propagate the result of the computation by performing a \CAS{$x$,$y$}.
The idea is that if another process has already updated the data structure, $p$'s \CAS{} should fail, and so inconsistencies are avoided.
However, if multiple successful \CAS{} operations have occurred and the value of the object has changed back to $x$, $p$'s \CAS{} might still succeed, possibly yielding inconsistencies.

ABAs are also a problem for algorithms using other strong primitives, or even only registers.
For example, in mutual exclusion algorithms often processes busy-wait for certain events to happen, by repeatedly reading the same register.
In systems with caches, the cost of waiting is small, because as long as no process changes the register value, all reads are cache hits.
The event is signaled by other processes through a change in the register value.
But it may also be desirable to eventually reset the register to its state, before the event was signaled, in order to be able to reuse it.
But this may result in the ABA problem, and as a consequence waiting processes may miss events.
Therefore, algorithm designers have to devise more complicated code in order to avoid unnoticed cache misses, or even lack of progress.
\pwtodo{Remove paragraph above, if we need more space.}

Many shared memory algorithms and data structures have to deal with the ABA problem.
Often this is done in an ad-hoc, application specific way \cite{TZ2001a}, or solutions are based on tagging \cite{SHC2000a,Sto1990a,MS1996a,MS1998a,HSY2010a,PLJ1991a,Mic2002a,Mic2004a} (see below).
Other papers combine tagging and memory management techniques, or suggest both as alternatives \cite{LS2008a,HSY2010a}.

Tagging, introduced by IBM \cite{IBM1983a}, requires augmenting an object with a tag (which is sometimes called sequence number) that gets changed with every write operation.
This technique avoids the ABA problem only, if tags never repeat.
Therefore, theoretically, an infinite number of tags and thus base objects of unbounded size are required.
One may argue that, in practice, for reasonably large base objects, a system will never run out of tags.
However, this is unrealistic in cases where the tag has to be stored together with other information in the same object.
In some cases, it is possible to store the tag in a separate object (e.g., \cite{JP2003a}), but this requires technically difficult algorithms and tedious correctness proofs.
Some architectures like the IBM System/370 \cite{IBM1983a} introduced a double-width CAS primitive, which allows one of two (32-bit) words to be used for storing tags.
While using bounded tags does not completely avoid the ABA problem (because tag values may wrap around), it has been argued \cite{MS1996a,MS1998a,SHC2000a,Sto1990a} that an erroneous algorithm execution due to an unexpected ABA becomes very unlikely.
From a theoretical perspective this is unsatisfactory.
Moreover, for practical applications, it is often necessary to use the entire object space (today usually comprising 64 bits) for data, so the tagging technique requires double-width atomic instructions.
Those are not supported by most mainstream architectures \cite{Mic2004c}.

ABAs cause problems in algorithms that use some form of memory management, where a pointer to some memory space may change its value in an ABA fashion.
In this context, memory reclamation techniques based on reference counting \cite{Val1995a}, Hazard pointers~\cite{Mic2004b,Mic2004c}, the repeat-offender problem technique~\cite{HLM2002a}, or the memory reclamation technique introduced in \cite{AGW2014a} deal with the ABA problem.
But those techniques are application specific.

A more methodological approach has been followed by research that showed how a load-linked store-conditional (LL/SC) object can be implemented from CAS objects and registers.
Such an object provides two operations, \LL{} and \SC{}, where \LL{} 
returns the current value of the object.
\SC{$x$} may either fail and not change anything, or succeed and write the value $x$ to the object.
Specifically, an \SC{$x$} operation by process $p$ succeeds if and only if no other \SC{} operation succeeded since $p$'s last \LL{}.
A Boolean return value of an \SC{} operation indicates its success (\True) or failure (\False).
An extended specification also allows for a \VL{} (verify-link) operation, which does not change the state of the object, but it returns \False if a successful \SC{} has been performed since the calling process' last \LL{}, and \True otherwise.
LL/SC (or LL/SC/VL) objects can in almost all cases replace CAS objects in algorithms, and are an effective way of avoiding the ABA problem.
Unfortunately, existing multiprocessor systems only provide weak versions of LL/SC that restrict programmers severely in how they can use the objects \cite{Moi1997a}, and hence they ``offer little or no help with preventing the ABA problem'' \cite{Mic2004d}.

For that reason, a line of research has been dedicated to finding time and space efficient LL/SC implementations from CAS objects and registers \cite{AM1995a,Moi1997a,JP2003a,DHLM2004a,JP2006a,Mic2004c,Mic2004d}.
While many of those solutions are wait-free and often even guarantee constant time execution of each \LL{} and \SC{} operation, they still have drawbacks:
Existing implementations either require unbounded tags (e.g., \cite{Moi1997a}) and thus use unbounded CAS objects or registers, or they need at least linear space. 
Jayanti and Petrovic \cite{JP2003a} and Anderson and Moir \cite{AM1995a} presented the most space efficient implementations of an LL/SC object from bounded CAS and registers, which achieve constant step-complexity: they use only one CAS object but require $\Theta(n)$ registers.
This raises the question, whether time efficient implementations of LL/SC from a smaller number of bounded CAS objects and registers may exist.
More generally, in order to understand the power and limits of shared memory primitives, it seems important to learn how much time and space is required to avoid or detect ABAs, and not to restrict this question to the implementation of LL/SC objects from CAS objects and registers.

CAS and LL/SC objects have a consensus number of infinity \cite{Her1991a}, while registers have a consensus number of one.
Therefore, it is impossible to implement wait-free LL/SC from registers or other objects with a bounded consensus number.
Time and space lower bounds for implementations of LL/SC objects may not necessarily imply that it is the ABA problem that is hard to solve, but such lower bounds may follow inherently from other properties of the LL/SC specification.

\paragraph{Results.}
To investigate the complexity of detecting or preventing ABAs, we define a natural object, the \emph{ABA-detecting register}.
It supports two operations, \DRead{} and \DWrite{}.
Operation \DWrite{$x$} writes value $x$ to the register, and returns nothing.
Operation \DRead{} by process $p$ returns, in addition to the value of the register, a Boolean flag, which is \True if and only if some process executed a \DWrite{} since $p$'s last \DRead{} operation.
We distinguish between \emph{single-writer} ABA-detecting registers, where only one dedicated process is allowed to call \DWrite{}, and \emph{multi-writer} ones that don't have this restriction.

A wait-free ABA-detecting register can be implemented from registers, and thus has consensus number 1.
(Therefore, they are weaker with respect to wait-freedom than CAS or LL/SC.)
Using a single unbounded register with an unbounded tag that gets changed whenever some process writes to it, it is trivial to obtain an ABA-detecting register with constant time complexity.
But if base objects have only bounded size, the situation is completely different:
For implementations of ABA-detecting registers in a system with $n$ processes and bounded registers, we obtain a linear (in $n$) space lower bound, even if the implementation satisfies only nondeterministic solo-termination (the non-deterministic variant of obstruction-freedom), which is a progress condition strictly weaker than wait-freedom.
The availability of CAS seems to be of little help:
For wait-free implementations from CAS objects and registers we obtain a time-space tradeoff that is linear in $n$.
The same asymptotic time-space tradeoff is obtained if the base objects support arbitrary conditional read-modify-write operations~\cite{FHS2006a}.
Each conditional operation can be simulated by a single operation on a \emph{writable} CAS objects, i.e., an object that supports a \Write{} operation in addition to \Read{} and \CAS{}.
For that reason we state the lower bound for implementations from conditional read-modify-write operations in terms of writable CAS base objects. 
\begin{theorem}\label{thm:mainlb:ABA_detecting}
  Any linearizable implementation of a single-writer 1-bit ABA-detecting register from $m$ bounded base objects satisfies:
  \begin{enumerate}[label=(\alph*)]
    \item $m\geq n-1$ if the base objects are bounded registers, and the implementation satisfies nondeterministic solo-termination;
    \item $m\geq (n-1)/t$, if the the base objects are bounded CAS objects and registers, and the implementation is deterministic and wait-free with worst-case step-complexity at most $t$; and
    \item $m\geq (n-1)/(2t)$, if the base objects are bounded writeable CAS objects, and the implementation is deterministic and wait-free with worst-case step-complexity at most $t$.
  \end{enumerate}
\end{theorem}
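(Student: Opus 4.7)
The plan is to establish all three parts through a unified covering and indistinguishability argument, with the main work consisting of an inductive construction that forces the adversary to ``capture'' each of $n-1$ reader processes in distinct critical configurations.

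For part (a), I would proceed by induction on $k$ to construct, for each $k \in \{0,1,\ldots,n-1\}$, a reachable configuration $C_k$ and a set $B_k=\{b_1,\ldots,b_k\}$ of $k$ distinct base registers with the invariant that in $C_k$, $k$ distinct reader processes $q_1,\ldots,q_k$ are poised to take steps that ``cover'' the registers $b_1,\ldots,b_k$, and that any solo \DWrite{} extension of the writer from $C_k$ must touch a register not in $B_k$. The inductive step starts from $C_k$, lets the writer perform a solo \DWrite{}, and argues that if the writer's write set were contained in $B_k$ then we could interleave the $k$ covering steps of $q_1,\ldots,q_k$ after the \DWrite{} so that, in the resulting state, every register in $B_k$ holds the same value as before the \DWrite{}; by nondeterministic solo-termination a fresh reader $q_{k+1}$ can then complete a \DRead{} and, by indistinguishability from an execution with no \DWrite{}, must return \False{}, contradicting linearizability. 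The forced write of the writer to a fresh register $b_{k+1}\notin B_k$ is then used to drive $q_{k+1}$'s \DRead{} to a configuration where it too becomes poised on $b_{k+1}$, completing the induction and ultimately yielding $m\geq n-1$.

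For parts (b) and (c), the plan is to upgrade the same inductive skeleton to CAS and writable-CAS base objects. Since every \DRead{} now uses at most $t$ primitive steps, each reader $q_i$ can only ``cover'' a bounded number of base objects during its pending operation. The inductive invariant is reshaped to track reader--object incidences rather than single covered registers: each base object contributes to at most $t$ reader covers for plain CAS, and at most $2t$ for writable CAS (since a writable CAS object can signal a change either via its current value, detectable by a \Read{} or \CAS{}, or via an interposed \Write{}). The indistinguishability step becomes: if the writer's \DWrite{} stays within the objects already incident to the readers' covering structures, a careful interleaving of reader steps restores all modified base objects to values consistent with a no-\DWrite{} execution, yielding the same contradiction. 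Counting incidences, $mt$ (resp.\ $2mt$) covers must be at least $n-1$.

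The main obstacle will be the inductive construction in part (a), specifically maintaining the invariant that all previously captured readers remain poised on their assigned registers while driving $q_{k+1}$'s \DRead{} far enough to reach a new covering configuration. This interleaving relies only on nondeterministic solo-termination, so the adversary can only schedule one process at a time to make progress; showing that a reader eventually becomes poised on a \emph{fresh} register requires an ``erasure'' subargument that repeatedly overwrites $B_k$ with the covering steps and applies correctness to force the writer to touch a new register each time. A secondary difficulty, in parts (b) and (c), is accurately accounting for the power of a single CAS step to serve as both a probe and a signal of a concurrent \DWrite{}; the counting argument must be tight enough to avoid double-charging while still capturing the full expressiveness of writable CAS, which is where the factor of~$2$ in part~(c) arises.
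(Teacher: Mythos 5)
Your proposal correctly identifies the general framework (a covering induction combined with an indistinguishability contradiction, and a final incidence count of $mt$ resp.\ $2mt$ against $n-1$ for parts (b) and (c)), but the inductive step for part (a) has a genuine gap, and it is the heart of the matter. You argue that the \emph{writer's} solo \DWrite{} must touch a register outside the covered set $B_k$ --- that much is a standard and correct covering argument --- and then assert that this ``forced write of the writer to a fresh register $b_{k+1}$ is used to drive $q_{k+1}$'s \DRead{} to a configuration where it too becomes poised on $b_{k+1}$.'' This does not follow. To grow the cover you need the fresh \emph{reader} $q_{k+1}$ to become poised to \emph{write} to a register outside $B_k$; with a single writer, the writer can contribute at most one covered register, so all $n-1$ covering processes must be readers. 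Nothing in your argument rules out a reader that completes its \DRead{} while writing only inside $B_k$ (or not writing at all), and there is no reason the reader would ever write to the particular register the writer touched. The paper's proof closes exactly this case: if $p_{k+1}$ finishes its \WRead{} writing only inside the covered set $\RR_i$, its writes are obliterated by the block-write, and the resulting configuration is $p_{k+1}$-clean; one then runs the other processes forward through several complete \WWrite{}s until the register contents return to an \emph{identical} tuple, producing a $p_{k+1}$-dirty configuration that $p_{k+1}$ cannot distinguish from the clean one --- contradiction. Only then is the reader forced to cover a fresh register.

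This also exposes the second omission: your part (a) never uses that the registers are \emph{bounded}, yet the theorem is false without that hypothesis (a single unbounded register with a tag implements the object). Boundedness enters precisely in the pigeonhole step above: among the infinitely many post-block-write configurations $D_1,D_2,\dots$ two must satisfy $\reg(D_i)=\reg(D_j)$, which is what lets the hidden reader be transported past complete \DWrite{}s into an indistinguishable dirty configuration. The same device (a pigeonhole over ``signatures'') is needed in parts (b) and (c), where the analogue of your ``careful interleaving'' is a separate hiding lemma showing that the fresh reader's steps can be absorbed into a block of \Write{}s and successful \CAS{}es unless it gets stuck poised on an object covered by fewer than $t$ writers (resp.\ fewer than $t$ writers and $t$ \CAS{}-ers, whence the factor $2$ --- not from a single CAS step playing two roles). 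Without the pigeonhole/transport step, your clean-versus-dirty contradiction cannot be completed, and the induction does not go through.
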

The requirement that base objects are bounded is necessary for this lower bound, because, as mentioned earlier, an ABA-detecting register can be trivially obtained by augmenting a normal register with an unbounded tag.

There is a simple implementation of a (bounded) ABA-detecting registers with constant step-complexity from a single (bounded) LL/SC/VL object of the same size:
Each process uses a local variable $old$.
To \DWrite{$x$}, the process executes a \LL{} operation followed by a \SC{$x$}.
To \DRead{}, the process first executes a \VL{}. 
If \VL{} returns \True, the process returns $(old, \False)$; otherwise, it reads the value of the LL/SC/VL object into $old$ (by executing \LL{}), and then returns $(old, \True)$.
It is not hard to see that this implementation is linearizable.
(See Appendix~\ref{sec:LL/SC/VL->ABA-Detecting} for the algorithm and proof of correctness.)
Thus, by reduction we obtain the same lower bound as the one stated in \cref{thm:mainlb:ABA_detecting} for implementations of single bit LL/SC/VL.
Unfortunately, for that reduction the \VL{} operation is needed, and at least we do not know how to obtain a similarly efficient ABA-detecting register from an LL/SC object that does not support \VL{}.
However, the proofs of \cref{thm:mainlb:ABA_detecting} can be easily modified to accommodate LL/SC objects:
\begin{corollary}\label{cor:mainlb:LLSC}
  Any linearizable implementation of a single bit LL/SC object from $m$ bounded base objects satisfies
  \begin{enumerate}[label=(\alph*)]
    \item $m\geq (n-1)/t$, if the the base objects are bounded CAS objects and registers, and the implementation is deterministic and wait-free with worst-case step-complexity at most $t$; and
    \item $m\geq (n-1)/(2t)$, if the base objects are bounded writeable CAS objects, and the implementation is deterministic and wait-free with worst-case step-complexity at most $t$.
  \end{enumerate}
\end{corollary}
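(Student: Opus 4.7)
The plan is to adapt the adversarial constructions underlying parts (b) and (c) of \cref{thm:mainlb:ABA_detecting} so that they apply directly to a single bit LL/SC implementation, rather than attempting to invoke the ABA-detecting register bound as a black box (which would require \VL{}). The key observation is that LL/SC already provides the detection power exploited for ABA-detecting registers: after a process's \LL{}, a subsequent \SC{} succeeds iff no other successful \SC{} occurred in between, so the failure of an \SC{} plays exactly the role of the ``write-happened'' flag returned by a \DRead{}.

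\textbf{Translation of high-level operations.} Each step of the ABA-detecting adversary would be translated as follows. A \DWrite{$x$} by the writer becomes an \LL{} immediately followed by \SC{$x$}; since the adversary schedules this pair contiguously, the \SC{} always succeeds and changes the logical value to $x$. A \DRead{} by a reader becomes an \LL{} followed by \SC{$v$}, where $v$ is the value just returned by that \LL{}. The reader interprets a failing \SC{} as ``change observed'' and a succeeding \SC{} as ``no change'', matching the semantics of \DRead{}. Because a successful detection \SC{} rewrites the value just read, it leaves the logical state of the LL/SC object intact, so the sequence of observations available to an adversary is indistinguishable from the ABA-detecting setting.

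\textbf{Lifting the lower bound.} With this translation in place, I would re-run the covering and indistinguishability arguments used in the proof of \cref{thm:mainlb:ABA_detecting} essentially verbatim. To correctly report an intervening \SC{$x$} by the writer, each reader's trailing \SC{} must access some base object whose state was modified by the writer's \SC{$x$}; otherwise a standard indistinguishability argument produces an execution in which the writer's operation is invisible, yet the reader's detection \SC{} succeeds and spuriously reports ``no change''. Counting the distinct base-object accesses required across the $n-1$ readers and weighing them against the per-operation step budget $t$ yields $m \geq (n-1)/t$ in the CAS-plus-register setting and $m \geq (n-1)/(2t)$ in the writable-CAS setting, matching parts (a) and (b) of the corollary.

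\textbf{Main obstacle.} The delicate point is scheduling the readers' detection \SC{}s so that they do not spuriously interfere with one another: once one reader's detection \SC{} succeeds, any other reader whose \LL{} preceded it will see its own \SC{} fail, which is not a problem for the lower bound itself but must be reconciled with the covering construction. The adversary must release the detection \SC{}s in an order compatible with the base-object coverings used in the proof of \cref{thm:mainlb:ABA_detecting}, and one must verify that linearizability of the LL/SC object does not impose additional constraints incompatible with this schedule. This scheduling bookkeeping is the only step I expect to genuinely depart from the ABA-detecting proof, but it is confined to the outer structure of the execution and does not affect the asymptotic counting.
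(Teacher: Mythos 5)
Your high-level plan is the right one --- adapt the constructions behind parts (b) and (c) of \cref{thm:mainlb:ABA_detecting} directly rather than reduce through an ABA-detecting register (which would need \VL{}), and translate the writer's \WWrite{} into an \LL{} immediately followed by \SC{$x$}. But your translation of the readers' operations is where the argument breaks. You replace each reader's \WRead{} by an \LL{} followed by a ``detection'' \SC{$v$} that re-writes the value just read, and you claim this leaves the logical state of the object intact. It leaves the \emph{value} intact, but not the state: by the LL/SC specification, a successful \SC{} by a reader $q$ is a successful \SC{}, so every process whose most recent \LL{} preceded it --- including the writer, process $0$ --- will have its next \SC{} fail. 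This destroys two invariants the construction depends on. First, the writer's \SC{}s are no longer guaranteed to succeed, so the configurations that are supposed to be $p$-dirty (a successful \SC{} completed, with no \LL{} by $p$ pending after it) need not be dirty at all. Second, a reader's detection \SC{} can now fail because of \emph{another reader's} detection \SC{} rather than because of the writer, so ``failing \SC{} $=$ change observed'' is no longer the semantics of \DRead{}. You flag this interference as scheduling bookkeeping ``confined to the outer structure,'' but it is not: in the covering construction many readers are simultaneously mid-operation (that is the point of covering), so their eventual \SC{}s cannot be serialized out of each other's way, and the hidden reader's completed operation would have to contain a successful \SC{} that, by linearizability, is visible to all other processes --- contradicting the requirement that the resulting configuration be indistinguishable to them from one in which the reader's steps had no effect.

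The fix, which is what the paper does, is to keep the readers' \SC{}s out of the constructed execution entirely: translate \WRead{} as a bare \LL{}, and move the \SC{} into the indistinguishability argument. Concretely, the paper re-proves the clean/dirty separation (\cref{obs:LL:clean-dirty-distinguishability}) by observing that if $C_1$ is $p$-clean, $C_2$ is $p$-dirty, and $C_1\indist_p C_2$, then a \emph{solo} \SC{$y$} by $p$ run from each configuration separately must behave identically, yet it must succeed from $C_1$ and fail from $C_2$ --- a contradiction. That solo \SC{} is a per-configuration thought experiment, never executed inside the main schedule, so it perturbs nobody. With readers performing only \LL{} and process $0$ being the only process that ever invokes \SC{}, every \SC{} in the construction succeeds, and the proofs of \cref{lem:registerlb_main} and \cref{lem:caslb_main} go through with the replacements made verbatim. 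Your final counting step (at most $t$, respectively $2t$, covering processes per base object) coincides with the paper's, but it only becomes available once the translation is repaired as above.
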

A linear space lower bound (corresponding to Part~(a) of \cref{thm:mainlb:ABA_detecting}) for nondeterministic solo-terminating implementations of LL/SC from (even unbounded) registers follows from the fact that LL/SC objects are \emph{perturbable} \cite{JTT2000a}.

As in \cref{thm:mainlb:ABA_detecting}, the assumption that base objects are bounded is necessary, because there is an implementation of an LL/SC/VL object from a single unbounded CAS object with constant step complexity by Moir \cite{Moi1997a}.
Our time-space tradeoff is asymptotically tight for implementations with constant step-complexity, as it matches known upper bounds \cite{AM1995a,JP2003a}.
We show that it also asymptotically tight for implementations using a single CAS object:
\begin{theorem}\label{Thm:LLSC-from-CAS}
  A single bounded CAS object suffices to implement a bounded LL/SC/VL object or a bounded multi-writer ABA-detecting register with $O(n)$ step-complexity.
\end{theorem}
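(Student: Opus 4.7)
The plan is to exhibit a single bounded CAS object together with operations $\LL$, $\SC$, $\VL$ whose correctness and $O(n)$ step complexity can be established by an amortized accounting of failed CAS attempts. By the simple reduction from $\LL/\SC/\VL$ to the multi-writer ABA-detecting register sketched earlier in the paper, the same implementation also yields the ABA-detecting register half of the statement, so I focus on $\LL/\SC/\VL$.

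I would encode the state of the CAS object as a pair $(v, B)$, where $v$ is the current value and $B$ is an $n$-bit bitmap with $B[p]=1$ meaning ``process $p$ has a live $\LL$ that has not yet been invalidated by a successful $\SC$ of another process.'' The size is $O(n)$ bits independent of the number of operations, so the object is bounded. $\LL$ by $p$ reads $(v,B)$; if $B[p]=1$ it returns $v$ immediately, and otherwise it does a $\CAS$ that sets $B[p]:=1$, retrying on failure. $\SC(v)$ by $p$ reads $(v_{\mathrm{cur}},B_{\mathrm{cur}})$: if $B_{\mathrm{cur}}[p]=0$ it returns $\False$, and otherwise it $\CAS$es to install $(v,B')$ with $B'[p]=1$ and $B'[q]=0$ for $q\neq p$, retrying until it either succeeds or observes $B[p]=0$. $\VL$ by $p$ is a single read returning $B[p]=1$. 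I would linearize each $\LL$ and $\VL$ at its returning read (or the successful $\CAS$, in the $\LL$ case) and each $\SC$ at its successful $\CAS$; the invariant that $B[p]$ stays $1$ from $p$'s last $\LL$ until overwritten by a successful $\SC$ by some $q\neq p$ then gives the $\LL/\SC/\VL$ specification directly.

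The main obstacle is the step complexity bound. For $\SC$ the argument is straightforward: between two consecutive successful $\SC$s, each other process can cause at most one bitmap change via $\LL$ (once $B[q]=1$, further $\LL$s by $q$ skip the $\CAS$), and any intervening successful $\SC$ is detected by the next observation of $B[p]=0$, so $\SC$ finishes in $O(n)$ retries. For $\LL$, the naive retry loop can be starved by a long sequence of $\SC$s by other processes repeatedly clearing $B[p]$. To bound $\LL$ to $O(n)$ as well, I would piggyback a few extra bits onto the CAS state—a ``pending-$\LL$'' section—and arrange that every successful $\SC$ copies those flags into the new $B$, so that $p$'s $\LL$ can be completed by helping within one round of concurrent activity. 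An amortized accounting analogous to the $\SC$ bound then shows that either $p$'s own $\CAS$ wins within $O(n)$ attempts or a concurrent $\SC$ installs $B[p]=1$ on $p$'s behalf within the same budget.

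The subtle part I expect to occupy most of the proof is verifying that the bitmap $B$ and the pending-$\LL$ section, sharing a single $\CAS$ word, interact consistently under every interleaving: in particular that the helping step inside $\SC$ cannot spuriously validate an $\LL$ that has been superseded by a later $\SC$, and that a process is not credited with two distinct live $\LL$s simultaneously. This will require a careful case analysis of the states observed across consecutive $\CAS$es, together with a linearization order that places each helped $\LL$ strictly before the $\SC$ that helped it.
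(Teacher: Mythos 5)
There is a genuine gap, and it is exactly at the point you flag as ``the subtle part'': the $O(n)$ bound for \LL{}. Your overall shape matches the paper's construction --- a single bounded CAS word holding the value plus an $n$-bit per-process validity bitmap, \SC{} linearized at its successful \CAS{}, and a counting argument showing that $O(n)$ failed \CAS{} attempts by one process force a successful \SC{} by someone else --- but your proposed fix for \LL{} starvation (a ``pending-\LL{}'' section that successful \SC{}s copy into the new bitmap) does not work with only a single CAS object. To announce its pending \LL{} in the shared word, $p$ must itself win a \CAS{}, which is precisely the step that can be starved; and even if it could announce, a helped $p$ would still need to learn which value to return within a bounded number of its own steps. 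The helping mechanism therefore has a bootstrapping problem that the sketch does not resolve.

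The paper's resolution is different and requires no helping: if $p$'s \CAS{} fails $n$ times, then (since each successful \LL{}-\CAS{} flips a distinct bit in one direction only and $p$'s own bit is not among them) at least one of the interfering successful \CAS{}es must belong to an \SC{}, so a successful \SC{} provably linearized \emph{during} $p$'s \LL{}. The \LL{} then simply gives up, linearizes at its \emph{initial} read of $X$, returns the value read there, and records in a local flag $b$ that its link is already invalid, so that $p$'s next \SC{} or \VL{} fails even though the shared bit does not say so. Ironically, under your inverted bit convention ($B[p]=1$ meaning ``valid'') giving up after $O(n)$ attempts would leave $B[p]=0$, which already encodes the invalid link correctly, so you are closer to a working algorithm than your helping detour suggests --- but you would still need to supply the counting argument certifying that an \SC{} really did linearize whenever \LL{} gives up, since otherwise a starved \LL{} would wrongly report an invalid link. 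Separately, your \SC{} installs $B'[p]=1$, letting $p$'s own link survive its own successful \SC{}; the paper (and the standard specification its correctness proof uses) invalidates every link, including the writer's, by writing $2^n-1$, so as stated your encoding would let $p$ perform two consecutive successful \SC{}s without an intervening \LL{}.
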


These results raise the question, whether bounded CAS objects are helpful for ABA detection.
We determine that for this problem bounded CAS objects do not provide additional benefits over bounded registers:
\begin{theorem}\label{Thm:ABAReg-from-Reg}
  There is a linearizable wait-free implementation of a multi-writer $b$-bit ABA-detecting register from $n+1$ $(b+2log n+O(1))$-bit registers with constant step complexity.
\end{theorem}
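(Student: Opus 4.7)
The plan is to give an explicit implementation and then verify its three required properties: constant step complexity, linearizability, and correct ABA detection. I would allocate one shared ``main'' register $M$ storing the pair $(v,t)$ of the current value together with a small tag $t$, plus one single-writer register $A_p$ per process storing $p$'s local bookkeeping. A DWrite$(x)$ by process $p$ reads $M$ (and possibly $A_p$), installs $(x,t')$ in $M$ for a suitably chosen fresh tag $t'$, and updates $A_p$. A DRead by $p$ reads $M$ and $A_p$, compares the tag observed in $M$ with the tag previously stored in $A_p$, updates $A_p$ with the currently observed tag, and returns the pair (value, detection flag). The tagging discipline would be designed so that tags live in a pool of size $\Theta(n^2)$, which yields the $b+2\log n+O(1)$ bit bound per base register, and so that each operation touches only a constant number of base registers.

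Once the construction is fixed, the proof divides into three parts. Constant step complexity and wait-freedom follow by inspection, since both operations are loop-free and touch a constant number of base registers. For linearizability I would linearize each DWrite at its write to $M$ and each DRead at its read of $M$, then verify by a case analysis over overlapping operations that the value a DRead returns coincides with the value installed by the most recently linearized DWrite. For the detection property, I would fix a process $p$ and two consecutive DReads $D_1$ and $D_2$ by $p$, and show that the flag returned by $D_2$ is \textbf{True} if and only if at least one DWrite is linearized strictly between $D_1$ and $D_2$. The ``if'' direction is the easy half: if no DWrite is linearized in between, then the tag read from $M$ by $D_2$ equals the tag $D_1$ stored in $A_p$, so $D_2$ returns \textbf{False}.

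The principal obstacle is the ``only if'' direction: because the tag space is bounded, a naive cycling scheme would permit arbitrarily many DWrites between $D_1$ and $D_2$ to bring the tag in $M$ back to the value recorded in $A_p$ at $D_1$, so that $D_2$ would miss the intervening writes. The correctness argument therefore must establish a suitable invariant that forbids such coincidences, roughly asserting that at every reachable configuration, the tag any DWrite could next install into $M$ differs from the tag currently stored in every $A_q$-register belonging to a process $q$ whose most recent DRead has already observed the current contents of $M$. Establishing this invariant requires a pigeonhole argument on the tag space (whose size, quadratic in $n$, is exactly what forces the $2\log n$ overhead) combined with an induction over execution steps showing that DWrites and DReads jointly preserve it. This inductive argument, and the corresponding design of the rule that picks the next tag to install, is where the bulk of the technical work lies.
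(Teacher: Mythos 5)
Your overall architecture matches the paper's: one main register holding the value together with a small tag, plus $n$ single-writer announce registers, with the tag pool sized so that pairs (writer ID, sequence number) cost $2\log n+O(1)$ extra bits. But the proposal has two genuine gaps, and they sit exactly where you defer ``the bulk of the technical work.'' First, you never specify how a writer learns which tags are currently announced. Reading only $M$ and its own register $A_p$ cannot suffice: to avoid reinstalling an announced tag, a \DWrite{} must read the \emph{other} processes' announce registers, and with constant step complexity it can read only $O(1)$ of them per call. The paper is forced into a lazy scan: each \DWrite{} reads one announce entry, and a sequence number is quarantined for the writer's next $n$ calls (via a local queue of length $n+1$), which is precisely why a pool of size $2n+2$ per writer suffices and why the avoidance invariant only holds \emph{eventually}, not instantaneously. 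Without this rule being pinned down, the invariant you state cannot be proved.

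Second, and more seriously, your \DRead{} --- a single read of $M$, linearized at that read --- cannot support the invariant you need. Between $q$'s read of $M$ (observing tag $t$) and $q$'s subsequent write of $t$ into $A_q$, the writer may complete arbitrarily many \DWrite{}s; at the moment it selects its tags, $A_q$ does not yet contain $t$, so nothing prevents it from selecting $t$ again. The tag in $M$ can thus return to $t$ after $q$'s announcement is in place, and $q$'s next \DRead{} would wrongly return \False. The paper closes this window by reading the main register \emph{twice} in each \DRead{}, with the announcement in between: only if the second read still shows $(x,p,s)$ is the announcement known to have been in place while $X$ held that triple, which is the hypothesis of the key claim that $p$ will not rewrite $(\cdot,p,s)$ while the announcement stands. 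If the second read differs, the \DRead{} must linearize at the \emph{first} read and record, in a local flag carried to the next \DRead{}, that a write has already occurred after its linearization point, so that the next \DRead{} returns \True unconditionally. This two-candidate linearization point and the carried flag are not cosmetic; without them the ``only if'' direction of detection fails for any bounded tag scheme of the shape you describe.
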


Not only do our lower bounds show that Anderson's and Moir's \cite{AM1995a} as well as Jayanti's and Petrovic's \cite{JP2003a} implementations of LL/SC from CAS objects and registers are optimal with respect to their time- and space-product, but they also clearly indicate that ABA detection is inherently difficult, even if bounded conditional read-modify-write primitives such as (writable) CAS objects are available.
Therefore, other primitives that provide a solution to the ABA problem would most likely be as difficult to obtain as LL/SC.
Our upper bounds demonstrate that bounded CAS objects (and in fact any conditional read-modify-write operations) are not more helpful than bounded registers with respect to ABA detection.
On the other hand, ABA detection is difficult only if base objects are bounded, but for our lower bounds it does not matter how large that bound on the size of the base object is, as long as it is finite.

%
%
%

\paragraph{Other Related Work.}
Our lower bounds use covering arguments.
Covering arguments were first used by Burns and Lynch \cite{BL1993a} to prove a space lower bound for mutual exclusion, and essentially all space lower bounds are based on this technique.
Examples are space lower bounds for one-time test-and-set objects \cite{SP1989a}, consensus \cite{FHS1998a}, timestamps \cite{EFR2008a,HHPW2014a}, and the general class of perturbable objects \cite{JTT2000a} (which includes LL/SC among others).
These lower bounds have in common that they do not apply if CAS objects are available as base objects.
(They allow for registers, swap objects, and, in case of \cite{JTT2000a}, resettable consensus.)
An overview of covering arguments can be found in Attiya's and Ellen's recent textbook\cite{AE2014:impossibility_results}.

In our time-space tradeoffs we construct executions, where a sequence of operations by a process $p$ is interleaved with successful \CAS{} and \Write{} operations of other processes, so that $p$'s steps remain ``hidden''.
Such a technique has been also used by Fich, Hendler, and Shavit \cite{FHS2006a} to prove linear space lower bounds for wait-free implementations of \emph{visible} objects implemented from conditional read-modify-write (i.e., writable CAS) objects.
Visible objects include counters, queues, stacks, or snapshots.
Neither ABA-detecting registers nor LL/SC objects are visible, because they can be implemented from a single unbounded CAS object.
In fact, we are not aware of any other non-trivial lower bounds that, like ours, separate bounded from unbounded base objects.

\paragraph{Preliminaries.}
We consider a system with $n$ processes with unique IDs in $\PP=\set{0,\dots,n-1}$.
Processes communicate through shared memory operations, called steps, that are executed on atomic base objects provided by the system.
Each process executes a possibly nondeterministic program.
If processes are deterministic, a \emph{schedule} is a sequence of process IDs, that determines the order in which processes execute their steps.
If processes are nondeterministic, a \emph{schedule} is a sequence of process IDs together with coin-flips, and it describes the order in which processes take steps together with the nondeterministic decisions they make.
The sequence of shared memory steps taken by processes is called \emph{execution}.
A \emph{history} on some implemented object is the sequence of method call invocations and responses that occur in an execution on that object.
A \emph{configuration} describes the state of the system, i.e., of all processes and all base objects.

Our implementations are deterministic and \emph{wait-free}, which means that every method call terminates within a finite number of the calling process' steps, in any execution.
The step-complexity of a deterministic wait-free method is the maximum number of steps a process needs to terminate the method call in any execution.
Our lower bounds hold for implementations that satisfy a progress condition which is strictly weaker than wait-freedom:
A nondeterministic method $m$ satisfies \emph{nondeterministic solo-termination}, if for every process $p$ and every configuration $C$ in which a call of method $m$ by $p$ is pending, there is a $p$-only execution that starts in $C$ and during which $p$ finishes method $m$.
For deterministic algorithms, nondeterministic solo-termination is the same as \emph{obstruction-freedom}.
Our algorithms are \emph{linearizable} \cite{HW1990a}, but our lower bounds work for much weaker correctness conditions.
	\section{Lower Bounds}

\oldpwtodo{Define contatenation operator for scheules/executions}
For a configuration $C$ and a schedule $\sigma$, let $\Exec(C,\sigma)$ denote the execution arising from processes taking steps, starting in configuration $C$, in the order defined by $\sigma$, and using the nondeterministic decisions defined by $\sigma$, if the algorithm is nondeterministic.
Let $\Conf(C,\sigma)$ denote the configuration resulting from execution $\Exec(C,\sigma)$ started in $C$.
For two configurations $C$ and $D$ and a schedule $\alpha$, we write $C\yields{\alpha} D$ to indicate that $\Conf(C,\alpha)=D$. 
Let $C_{init}$ denote the initial configuration.
If there exists a schedule $\alpha$ such that $C\yields{\alpha} D$, then we say $D$ is reachable from $C$, and if $D$ is reachable from $C_{init}$, we simply say $D$ is reachable.

An execution $E$ or a schedule $\alpha$ is $P$-only for a set $P\subseteq\set{0,\dots,n-1}$ of processes, if only processes in $P$ take steps during $E$ respectively $\alpha$.
If $P=\set{p}$ is the set of a single process, then we sometimes write $p$-only instead of $\set{p}$-only.

For an execution $E$, let $\prec_E$ denote the happens-before order on operations in $E$, i.e., if operation $op$ responds in $E$ before $op'$ gets invoked, then and only then $op\prec_E op'$ ($op$ happens before $op'$).
We write simply $\prec$ instead of $\prec_E$, if is clear from the context which execution $E$ the relation refers to.
For a schedule $\alpha$, an execution $E$ and a process $p$, $E|p$ and $\alpha|p$ denote the sub-sequences of steps by $p$ in $E$ and in $\alpha$, respectively.

Two configurations $C$ and $D$ are \emph{indistinguishable} to process $p$, if every register has the same value in $C$ as in $D$, and $p$ is in the same state in both configurations.
We write $C\indist_p D$ to denote that $C$ and $D$ are indistinguishable to $p$.
We write $C\indist_S D$ for a set $S$ of processes to denote that $C\indist_p D$ for every process $p\in S$.
We say process $p$ is \emph{idle} in configuration $C$, if it has no pending method call, and if all processes are idle, then the configuration is \emph{quiescent}.
A process \emph{completes} a method call in an execution $E$, if that method terminates in $E$.

For our lower bounds, we do not require that the implementation of the ABA-detecting registers is linearizable.
Instead, we consider methods \WRead{} and \WWrite{} that take no arguments, and where \WRead{} returns a Boolean value, and \WWrite{} returns nothing.
A correct concurrent implementation of these methods must guarantee for every execution, that a \WRead{} operation $r$ by process $p$ returns \True if and only if there exists a \WWrite{} operation $w$ such that $w$ happens before $r$ and every other \WRead{} operation by $p$ happens before $w$. 

Linearizability of an ABA-detecting register $R$ guarantees that the operations $R.\DRead{}$ (in place of \WRead{}) and $R.\DWrite{}$ (in place of \WWrite{}) satisfy the correctness properties above.
Therefore, every lower bound on the time and/or space complexity for correct implementations of those methods implies the same lower bound for linearizable ABA-detecting registers.

Let $p$ be some process and $C$ a configuration.
We say $C$ is \emph{$p$-clean}, if there exists a schedule $\alpha$, $C_{init}\yields{\alpha} C$, such that $\Exec(C_{init},\alpha)$ contains a complete $\WRead{}$ operation $r^\ast$ by $p$, and every \WWrite{} happens before $r^\ast$.
Configuration $C$ is \emph{$p$-dirty}, if there exists a schedule $\alpha$, $C_{init}\yields{\alpha} C$, and $\Exec(C_{init},\alpha)$ contains a complete \WWrite{} operation $w^\ast$ such that no $\WRead{}$ by $p$ is pending at any point after $w^\ast$ has been invoked.
Note that some configurations are neither $p$-dirty nor $p$-clean.

Throughout this section we assume that each process executes an infinite program, in which it repeatedly calls \WRead{} and \WWrite{} methods.
More specifically, process 0 repeatedly executes \WWrite{}, while every process in $\set{1,\dots,n-1}$ repeatedly calls \WRead{}.

Then in a $p$-only execution starting from a configuration $C$, the first \WRead{} operation by $p$ returns \False if $C$ is $p$-clean and \True if $C$ is $p$-dirty. 
Therefore, each process must be able to distinguish $p$-clean configurations from $p$-dirty ones.
The full proof of the following observation can be found in \cref{sec:proof:obs:clean-dirty}.
\begin{observation}\label{obs:clean-dirty-distinguishability}
  Suppose the $\WRead{}$ method satisfies nondeterministic solo-termination.
  For any process $p\in\set{1,\dots,n-1}$ and any two reachable configurations $C_1,C_2$, if $C_1$ is $p$-clean and $C_2$ is $p$-dirty, then $C_1\nindist_p C_2$.
\end{observation}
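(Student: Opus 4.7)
The plan is to argue by contradiction. Suppose $C_1 \indist_p C_2$, and let $w^\ast$ and $r^\ast$ denote the witnessing operations from the $p$-dirty and $p$-clean conditions, respectively. The first step is to observe that $p$ is idle in both configurations. The $p$-dirty condition requires that no \WRead{} by $p$ is pending at any point after $w^\ast$'s invocation in the prefix leading to $C_2$, so in particular $p$ has no pending \WRead{} in $C_2$; since $p\in\set{1,\dots,n-1}$ only ever invokes \WRead{} operations, $p$ is idle in $C_2$. Because $p$'s local state is the same in $C_1$ and $C_2$, $p$ is idle in $C_1$ as well.

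Next, I would use solo-termination to drive a fresh \WRead{} to completion from each configuration and exploit indistinguishability. From $C_2$, let $p$ take one step to invoke a new \WRead{}; by nondeterministic solo-termination, there is a $p$-only schedule $\beta$ that completes this \WRead{}. Since $C_1 \indist_p C_2$ and $\beta$ is $p$-only, the same schedule $\beta$ (fixing the same nondeterministic choices by $p$) applied from $C_1$ has $p$ read the identical values from every base object it accesses; hence $p$ executes the same local steps and its \WRead{} returns the same Boolean value $b$ in both $\Exec(C_1,\beta)$ and $\Exec(C_2,\beta)$.

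Finally, I would derive a contradiction from the correctness specification for \WRead{}/\WWrite{}. In the extension from $C_2$, the new \WRead{} is invoked strictly after $w^\ast$ completes (so $w^\ast$ happens before it), and by the $p$-dirty condition every earlier \WRead{} by $p$ completed before $w^\ast$ was invoked; thus $w^\ast$ witnesses $b=\True$. In the extension from $C_1$, every \WWrite{} in the prefix happens before $r^\ast$, and $r^\ast$ is itself a \WRead{} by $p$, so no \WWrite{} in the full execution can serve as a True-witness (any such \WWrite{} would have to happen after $r^\ast$, but no \WWrites{} exist past $r^\ast$); therefore $b=\False$. This contradiction yields $C_1 \nindist_p C_2$. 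The only delicate bookkeeping is tracking that the $p$-dirty condition forbids any \WRead{} by $p$ to be invoked after $w^\ast$ (else it would be momentarily pending), which is what pins all of $p$'s earlier \WReads{} strictly before $w^\ast$ and makes the True-witness work.
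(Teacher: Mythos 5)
Your proposal is correct and follows essentially the same route as the paper's own proof: assume $C_1\indist_p C_2$, use nondeterministic solo-termination to complete one fresh \WRead{} via a $p$-only schedule from $C_2$, observe that indistinguishability forces the identical return value from $C_1$, and then derive the contradiction True-vs-False from the specification together with the witnesses $w^\ast$ and $r^\ast$. The only difference is that you explicitly verify that $p$ is idle in both configurations, a point the paper leaves implicit; everything else matches.
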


\subsection{A Space Lower Bound for Implementations from Bounded Registers}
Let $\RR$ be a set of $k$ registers and $P$ a set of processes.
We say the processes in $P$ \emph{cover} $\RR$ in configuration $C$, if for each register $R\in\RR$ there is a process in $P$ that is poised to write to $R$.
A \emph{block-write} to $\RR$ is an execution in which $k$ processes participate, and each of them takes exactly one step in which it writes to a distinct register in $\RR$.
(The only block-write to $\emptyset$ is the empty execution.)

In the following we assume an implementation of methods $\WRead{}$ and $\WWrite{}$ from $m$ bounded registers.
The \emph{register configuration} of a configuration $C$ is an $m$-tuple, $\reg(C)=(v_1,\dots,v_m)$, where $v_i$ is the value of the $i$-th register.

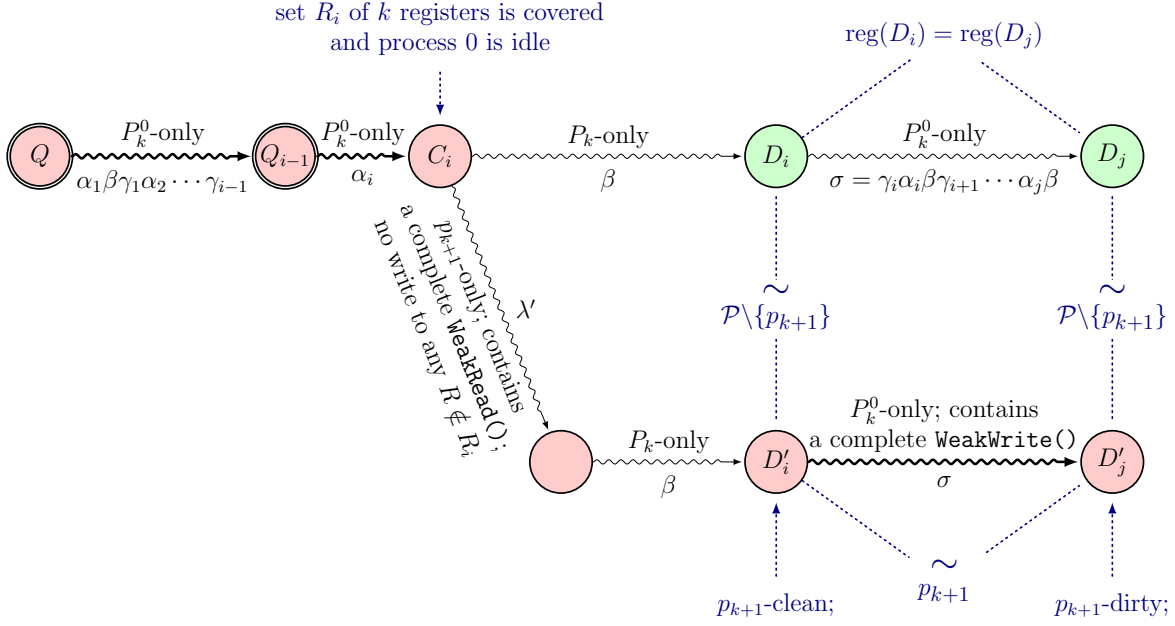
\begin{figure*}[!htb]
	\begin{center}\large
		\resizebox{.95\textwidth}{!}{
			\begin{tikzpicture}[auto, thick, line cap=round, node distance=2cm]
			\node [quiescent] (Q) {$Q$};
			\node [quiescent,right=3 of Q] (Qi-1) {$Q_{i-1}$};
			\node [config,right=1.5 of Qi-1] (Ci) {$C_i$};
			\node [config,right=4.5 of Ci,fill=green!20] (Di) {$D_i$};
			\node [config,right=4.5 of Di,fill=green!20] (Dj) {$D_j$};
			\node [config,below=4 of Di] (Di') {$D_i'$};
			\node [config,left=2.5 of Di'] (Ci') {};			
			\node [config,below=4 of Dj] (Dj') {$D_j'$};
			
			\path [yields,very thick] (Q) -- node [schedulelabel] {$\alpha_1\beta\gamma_1\alpha_2\cdots\gamma_{i-1}$} node [scheduledescr,black] {$P_k^0$-only} (Qi-1);
			\path [yields,very thick] (Qi-1) -- node [schedulelabel] {$\alpha_i$} node [scheduledescr] {$P_k^0$-only} (Ci);
			\path [yields] (Ci) -- node [schedulelabel] {$\beta$} node [scheduledescr] {$P_k$-only} (Di);
			\path [yields] (Di) -- node [schedulelabel] {$\sigma=\gamma_i\alpha_i\beta\gamma_{i+1}\cdots\alpha_j\beta$} node [scheduledescr] {$P_k^0$-only} (Dj);
			\path [yields] (Ci) -- node [schedulelabel,right=3pt] {$\lambda'$} node [scheduledescr,below,sloped] {$p_{k+1}$-only; contains \\a complete \WRead{};\\no write to any $R\notin R_i$} (Ci');
			\path [yields] (Ci') -- node [schedulelabel] {$\beta$} node [scheduledescr] {$P_k$-only} (Di');			
			\path [yields,very thick] (Di') -- node [schedulelabel] {$\sigma$} node [scheduledescr] {$P_k^0$-only; contains \\a complete \WWrite{}} (Dj');
			
			\node [indistlabel,above=1 of Ci,align=center] (covered) {set $R_i$ of $k$ registers is covered\\ and process 0 is idle};
			\path [indist,arrow] (covered) -- (Ci);
			
			\path [indist] (Di) to node [indistlabel] {\LARGE$\displaystyle\indist_{\PP\setminus\set{p_{k+1}}}$} (Di');

			\path [indist] (Dj) to node [indistlabel] {\LARGE$\displaystyle\indist_{\PP\setminus\set{p_{k+1}}}$} (Dj');
			\path (Di') -- (Dj') node [midway,yshift=-2cm] (Di'Dj') [indistlabel] {\LARGE$\displaystyle\indist_{p_{k+1}}$};
			\path [indist] (Di') -- (Di'Dj') -- (Dj');
			
			\path (Di) -- (Dj) node [midway,yshift=2cm] (DiDj) [indistlabel] {$\reg(D_i)=\reg(D_j)$};
			\path [indist] (Di) -- (DiDj) -- (Dj);
			
			\node [indistlabel,below=1.5 of Di',align=center] (clean) {$p_{k+1}$-clean;};
			\path [indist,arrow] (clean) -- (Di');

			\node [indistlabel,below=1.5 of Dj',align=center] (dirty) {$p_{k+1}$-dirty;};
			\path [indist,arrow] (dirty) -- (Dj');
			\end{tikzpicture}}
	\end{center}
	\caption{Proof of \cref{lem:registerlb_main}. Let $P_k^0$ denote the set $P_k\cup\set{0}$. Double circles denote quiescent configurations.
		\label{fig:lem:reglb_main}}
\end{figure*}

\begin{lemma}\label{lem:registerlb_main}
  Suppose methods $\WRead{}$ and $\WWrite{}$ satisfy nondeterministic solo-termination. 
  For any quiescent configuration $Q$ and any set $P_k=\set{p_1,\dots,p_k}\subseteq\PP\setminus\set{0}$, where $k\in\set{0,\dots,n-1}$, there exists a $(P_k\cup\set{0})$-only schedule $\alpha$ such that in $\Conf(Q,\alpha)$ process 0 is idle and $k$ distinct registers are covered by $p_1,\dots,p_k$.
\end{lemma}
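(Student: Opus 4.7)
Proof plan. I will argue by induction on $k$, with the inductive step proceeding from $k$ to $k+1$ as illustrated in \cref{fig:lem:reglb_main}. The base case $k=0$ is immediate: the empty schedule leaves the quiescent $Q$ unchanged, so process $0$ is idle and the vacuous covering condition holds.

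For the inductive step, assume the claim for $k$ and consider $P_{k+1}=\set{p_1,\dots,p_{k+1}}$ together with a quiescent $Q$. Starting from $Q_0=Q$, I build a sequence of ``rounds''. In round $i$, from the quiescent $Q_{i-1}$, the inductive hypothesis supplies a $(P_k\cup\set{0})$-only schedule $\alpha_i$ leading to a configuration $C_i$ in which $P_k$ covers $k$ distinct registers $R_i$ and process $0$ is idle. By nondeterministic solo-termination, there is a $p_{k+1}$-only execution $\lambda_i'$ from $C_i$ during which $p_{k+1}$ completes a $\WRead$. If at some point in $\lambda_i'$ the process $p_{k+1}$ is poised to write to a register not in $R_i$, I truncate the schedule at that point: together with $P_k$ still covering $R_i$, this yields $k+1$ processes covering $k+1$ distinct registers with process $0$ idle, and the inductive step is proved. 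Otherwise, I extend the schedule by the block-write $\beta$ by $P_k$ into $R_i$ to reach $D_i$, then by a $(P_k\cup\set{0})$-only continuation $\gamma_i$ that completes every pending $\WRead$ of $P_k$ and contains at least one complete $\WWrite$ by process $0$, arriving at a new quiescent $Q_i$, and I proceed to round $i+1$.

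Assume for contradiction that no round ever triggers the truncation. Since there are only finitely many register configurations on $m$ bounded registers, the pigeonhole principle yields rounds $i<j$ with $\reg(D_i)=\reg(D_j)$. Set $D_i'=\Conf(C_i,\lambda_i'\beta)$. Because during $\lambda_i'$ the process $p_{k+1}$ writes only to registers in $R_i$, which are overwritten by $\beta$, we obtain $\reg(D_i')=\reg(D_i)$ and $D_i\indist_{\PP\setminus\set{p_{k+1}}} D_i'$; moreover $D_i'$ is $p_{k+1}$-clean, since $\lambda_i'$ contains a complete $\WRead$ by $p_{k+1}$ and all $\WWrite$s completed in $\gamma_1,\dots,\gamma_{i-1},\alpha_i$ happened before that $\WRead$. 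Now let $\sigma=\gamma_i\alpha_{i+1}\beta\cdots\alpha_j\beta$ be the portion of the construction from $D_i$ to $D_j$. Applying $\sigma$ from $D_i'$ is a valid execution (the processes active in $\sigma$ have identical states and see identical register values in $D_i$ and $D_i'$), producing $D_j'$ with $D_j'\indist_{\PP\setminus\set{p_{k+1}}} D_j$ and $\reg(D_j')=\reg(D_j)$. Since $p_{k+1}$ never moves during $\sigma$, combining $\reg(D_i')=\reg(D_i)=\reg(D_j)=\reg(D_j')$ with $p_{k+1}$'s unchanged state gives $D_i'\indist_{p_{k+1}} D_j'$. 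However, $D_j'$ is $p_{k+1}$-dirty, because $\sigma$ contains the complete $\WWrite$ invoked in $\gamma_i$ and no $\WRead$ by $p_{k+1}$ is pending after $\lambda_i'$. This contradicts \cref{obs:clean-dirty-distinguishability}.

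The main obstacle I foresee is the careful bookkeeping of cleanliness and dirtiness: I must track precisely which $\WWrite$s have completed and when, and design $\gamma_i$ so that $Q_i$ is quiescent and that a complete $\WWrite$ by process $0$ genuinely separates $\lambda_i'$ from every continuation of $p_{k+1}$. A secondary delicate point is verifying that $\sigma$, built along the execution through $D_i$, remains a valid $(P_k\cup\set{0})$-only schedule when applied from $D_i'$; this reduces to observing that the behaviour of those processes depends only on their local states and the contents of the registers, both of which agree in $D_i$ and $D_i'$.
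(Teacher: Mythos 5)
Your proposal is correct and follows essentially the same route as the paper's proof: the same round-based construction $Q_{i-1}\yields{\alpha_i}C_i\yields{\beta}D_i\yields{\gamma_i}Q_i$, the same pigeonhole argument on register configurations, the same case split on whether $p_{k+1}$ ever covers a register outside $\RR_i$, and the same clean/dirty contradiction via \cref{obs:clean-dirty-distinguishability}. The only (immaterial) difference is that you probe the $p_{k+1}$-only execution in every round and phrase the whole argument as a single proof by contradiction, whereas the paper first fixes the pigeonhole pair $i<j$ and only then examines $\lambda'$ from $C_i$.
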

\cref{lem:registerlb_main} immediately implies \cref{thm:mainlb:ABA_detecting}\,(a).
\pwtodo{Journal version: add a picture}
\begin{proof}[Proof of \cref{lem:registerlb_main}]
	The proof is by induction on $k$.
	If $k=0$, we let $\alpha$ be the empty schedule, and the lemma is immediate because $\Conf(Q,\alpha)=Q$ is a quiescent configuration (so 0 is idle).
	
	Now suppose we have proved the inductive hypothesis for some integer $k<n-1$.
	Let $\beta=(p_1,\dots,p_k)$ be the schedule in which each of $p_1,\dots,p_k$ takes exactly one step.
	Let $Q_0=Q$.
	By the inductive hypothesis there is a schedule $\alpha_1$ such that in $C_1:=\Conf(Q_0,\alpha_1)$ a set $\RR_1$ of exactly $k$ registers is covered, and process 0 is idle.
	Hence, $\Exec(C_1,\beta)$ is a block-write to $\RR_1$ yielding a configuration $D_1=\Conf(C_1,\beta)$.
	We let $\gamma_1$ be the schedule such that in $\Exec(D_1,\gamma_1)$ first each process in $\set{p_1,\dots,p_k}$ takes enough unobstructed steps to finish its \WRead{} method call, and after that process 0 takes enough unobstructed steps to complete exactly one \WWrite{} method.
	Then $Q_1=\Conf(D_1,\gamma_1)$ is quiescent, and during $\Exec(D_1,\gamma_1)$ exactly one complete \WWrite{} gets executed.  
	Repeating this construction (using the inductive hypothesis repeatedly) we obtain a schedule
	\begin{math}
		\alpha_1\beta\gamma_1\alpha_2\beta\gamma_2\alpha_3\dots
	\end{math}
	and configurations $Q_0,C_1,D_1,Q_1,C_2,D_2,Q_2,\dots$ and sets of $k$ registers $\RR_1,\RR_2,\dots$, such that for any $i\geq 1$:
	\begin{itemize}
		\item $Q_{i-1}\yields{\alpha_i} C_i\yields{\beta} D_i\yields{\gamma_i} Q_i$;
		\item $Q_i$ is quiescent;
		\item during $\Exec(D_i,\gamma_i)$ process 0 executes a complete \WWrite{} operation; and
		\item in $C_i$ process 0 is idle and $\RR_i$ is covered by $P_k$ (and thus $\Exec(C_i,\beta)$ is a block-write to $\RR_i$).
	\end{itemize}
	Since the number of registers is finite, and all registers are bounded, there exist indices $1\leq i<j$ such that 
	$\reg(D_i)=\reg(D_j)$.
	Let $\sigma=\gamma_i\alpha_{i+1}\beta\gamma_{i+1}\alpha_{i+2}\dots\alpha_j\beta$, i.e.,
	\begin{equation}\label{eq:register_lb:25}
	C_i\yields{\beta} D_i\yields{\sigma} D_j.
	\end{equation}
	This situation is depicted in \cref{fig:lem:reglb_main}. 
	Now let $\lambda'$ be a  $p_{k+1}$-only schedule such that in $\Exec(C_i,\lambda')$ process $p_{k+1}$ completes exactly one $\WRead{}$ method call.
	By the nondeterministic solo-termination property, such a schedule $\lambda'$ exists.
	Let $\lambda$ be the prefix of $\lambda'$, such that $\Exec(C_i,\lambda)$ ends when $p_{k+1}$ is poised to write to a register $R\not\in\RR_i$ for the first time, or $\lambda=\lambda'$ if $p_{k+1}$ finishes its \WRead{} method call without writing to a register outside of $\RR_i$.
	
	First assume $\lambda\neq\lambda'$, i.e., in $\Exec(C_i,\lambda)$ process $p_{k+1}$ does not finish its \WRead{} method call, but instead the execution ends when $p_{k+1}$ covers a register $R\not\in\RR_i$.
	Since in $C_i$ process $0$ is idle and $\RR_i$ is covered by $P_k$, and since $\lambda$ is $p_{k+1}$-only, in configuration $\Conf(C_i,\lambda)=\Conf(Q,\alpha_1\beta\gamma_1\dots\alpha_i\lambda)$ processes $p_1,\dots,p_{k+1}$ cover $k+1$ registers, and process 0 is still idle.
	This completes the proof of the inductive step for $\alpha=\alpha_1\beta\gamma_1\dots\alpha_i\lambda$.
	
	Now we consider the case $\lambda=\lambda'$, i.e., during $\Exec(C_i,\lambda)$ process $p_{k+1}$ finishes its \WRead{} method call without writing to a register outside of $\RR_i$.
	To complete the proof of the lemma, it suffices to show that this case cannot occur.
	(This case is depicted in \cref{fig:lem:reglb_main}.)
	
	Since in $C_i$ the processes in $P_k$ cover $\RR_i$, and $p_{k+1}$ only writes to registers in $\RR_i$ during $\Exec(C_i,\lambda)$, it follows that $\Exec(C_i,\lambda\beta)$ ends with a block-write by $P_k$ in which all writes by $p_{k+1}$ get obliterated.
	In particular, for $D_i':=\Conf(C_i,\lambda\beta)$ we have
	\begin{equation}
	D_i'\indist_{\PP\setminus\set{p_{k+1}}} D_i\label{eq:register_lb:50}
	\end{equation}
	Hence, since schedule $\sigma$ is $(P_k\cup\set{0})$-only, i.e., $p_{k+1}$ does not participate, we obtain
	\begin{math}
		\Exec(D_i',\sigma)=\Exec(D_i,\sigma),
	\end{math}
	and in particular using \cref{eq:register_lb:25} 
	\begin{equation}\label{eq:register_lb:75}
	C_i\yields{\lambda\beta}D_i'\yields{\sigma}D_j',
	\quad\text{where}\quad
	D_j\indist_{\PP\setminus\set{p_{k+1}}} D_j'.
	\end{equation}
	Now recall that we chose $i$ and $j$ in such a way that $\reg(D_i)=\reg(D_j)$.
	Thus, from \cref{eq:register_lb:50,eq:register_lb:75} we get 
	\begin{math}
		\reg(D_i')=\reg(D_i)=\reg(D_j)=\reg(D_j').
	\end{math}
	Because $D_i'\yields{\sigma} D_j'$ (\cref{eq:register_lb:75}), and since by construction only processes $\set{0,p_1,\dots,p_k}$ appear in $\sigma$, $p_{k+1}$ is in $D_i'$ in exactly the same state as in $D_j'$.
	Hence,
	\begin{equation}
	D_i'\indist_{p_{k+1}} D_j'.\label{eq:register_lb:150}
	\end{equation}
	Now recall that $C_i\yields{\lambda\beta} D_i'$, and in the corresponding execution process $p_{k+1}$ executes a complete \WRead{} method, while process 0 takes no steps, and $p_{k+1}$ is idle in $D_i'$.
	Hence, $D_i'$ is $p_{k+1}$-clean.
	On the other hand, $\Exec(D_i',\sigma)=\Exec(D_i,\sigma)$ starts with a complete \WWrite{} operation (during the prefix $\Exec(D_i',\gamma_i)$) by process 0, while process $p_{k+1}$ takes no steps, and thus remains idle.
	It follows that the configuration resulting from that execution, $D_j'$, is $p_{k+1}$-dirty.
	Summarizing, we have two reachable configurations, $D_i'$ and $D_j'$, where one of them is $p_{k+1}$-clean and the other one is $p_{k+1}$-dirty, and both are indistinguishable to $p_{k+1}$, according to \cref{eq:register_lb:150}.
	This contradicts \cref{obs:clean-dirty-distinguishability}.
\end{proof}

\subsection{A Time-Space Tradeoff for Implementations from CAS Objects}
We now consider deterministic wait-free implementations of \WRead{} and \WWrite{} from $m$ writable bounded CAS objects. 
We assume without loss of generality that every \CAS{$x$,$y$} operation satisfies $x\neq y$. (A \CAS{$x$,$x$} operation can be replaced by a \Read{}).

For any configuration $C$ and any shared CAS object $R$ let $\CCov(C,R)$ and $\WCov(C,R)$ denote the sets of processes that are poised in $C$ to execute a \CAS{} respectively \Write{} operation on $R$.
Let $P$ be a set of processes and $C$ a configuration.
A schedule $\beta$ is called \emph{$P$-successful} for $C$, if it contains every process in $P$ exactly once, and every step of $\Exec(C,\beta)$ is either a \Write{} or a successful \CAS{}.
If a configuration $C$ has a $P$-successful schedule $\beta$, then we also say 
execution $\Exec(C,\beta)$ is $P$-successful.

As before, we assume that all processes run an infinite loop, where process 0 repeatedly calls \WWrite{} while all other processes repeatedly call \WRead{}.

\begin{lemma}\label{lem:tst:hide_process}
  Let $P\subsetneq\PP\setminus\{0\}$, $q\in\PP\setminus P$, $q\neq 0$.
  Let $C$ be a configuration, in which either $q$ is idle, or in no execution starting from $C$ process $q$ executes more than $t$ shared memory steps before finishing a pending \WRead{} call.
  If $\beta$ is a $P$-successful schedule for $C$, then there is a schedule $\sigma$ such that
  \begin{equation}\label{eq:tst:claim}
    \Conf(C,\beta)\indist_{\PP\setminus\set{q}} \Conf(C,\sigma),
  \end{equation}
  and at least one of the following is the case:
  \begin{enumerate}[label=(\alph*)]
    \item In $\Conf(C,\sigma)$ process $q$ is idle;
    \item in $\Conf(C,\sigma)$ process $q$ is poised to write to some object $R$ and $|\WCov(C,R)\cap P|<t$; or
    \item in $\Conf(C,\sigma)$ process $q$ is poised to execute a \CAS{$x$,$y$} operation on some object $R$, where $x$ is the value of $R$ in configuration $\Conf(C,\sigma)$, and $|\WCov(C,R)\cap P|+|\CCov(C,R)\cap P|<t$.
  \end{enumerate}
\end{lemma}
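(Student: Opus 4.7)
The plan is to construct $\sigma$ by iteratively simulating $q$'s execution from $C$ and interleaving selected steps of $\beta$ as ``helpers'' whose role is to cancel out $q$'s modifications to shared state. If $q$ is already idle in $C$, take $\sigma = \beta$; then $q$ is still idle in $\Conf(C, \sigma) = \Conf(C, \beta)$ and case (a) holds. Otherwise, I start with the empty schedule and maintain a set $U \subseteq P$ of ``unused'' processes, initially $U = P$.

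At each iteration, let $s$ be the step $q$ is poised to execute in the current configuration $\Conf(C, \sigma)$. If $s$ is a read or a CAS whose expected value does not match the current value of its target, simply append $s$ to $\sigma$ --- it does not modify shared state, and nothing more is needed. Otherwise $s$ writes to some object $R$, or is a CAS on $R$ that would succeed. Search $U$ for a process $p$ whose $\beta$-step writes to $R$: that is, $p \in \WCov(C, R)$, or, in the successful-CAS case, $p \in \CCov(C, R)$. If such a $p$ exists, append to $\sigma$ first $s$ and then $p$'s $\beta$-step (which overwrites $q$'s change), and remove $p$ from $U$. If no such $p$ exists, halt. Once the simulation halts, or $q$ completes its pending \WRead{}, append the $\beta$-steps of the remaining processes in $U$, in their original $\beta$-order, to $\sigma$.

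The counting argument yields the three conclusions cleanly. Since $q$ executes at most $t$ shared-memory steps before finishing its \WRead{}, at most $t-1$ helpers are consumed before any halting iteration. Hence when halting on a write-step on $R$, every member of $\WCov(C, R) \cap P$ has already been used as a helper, so $|\WCov(C, R) \cap P| < t$, giving (b); when halting on a would-succeed CAS-step on $R$, every member of $(\WCov(C, R) \cup \CCov(C, R)) \cap P$ has been consumed, giving $|\WCov(C, R) \cap P| + |\CCov(C, R) \cap P| < t$ and case (c). If the simulation finishes $q$'s \WRead{} call without halting, $q$ is idle in $\Conf(C, \sigma)$, giving (a).

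The main obstacle is verifying that $\Conf(C, \sigma) \indist_{\PP \setminus \{q\}} \Conf(C, \beta)$: each shared object must end with the same value, and each $P$-process must end in the same local state. This reduces to showing that every $P$-step --- whether used mid-schedule as a helper or appended at the end --- has the same effect in $\sigma$ as in $\beta$, and in particular that every $\beta$-step which is a successful CAS still observes its expected value in $\sigma$. The natural way to establish this is an induction on the number of $q$-steps processed, maintaining the invariant that immediately before each unused $\beta$-step executes in $\sigma$, the shared object it targets holds the same value as just before the corresponding step in $\beta$. The invariant is preserved because each $q$-modification is immediately canceled by the helper that follows, while reads and failed CASes do not alter state. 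Handling helper-CASes, whose own success depends on their expected value being present at the right moment, is the most delicate bookkeeping ingredient.
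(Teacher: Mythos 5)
Your overall architecture (interleave $q$'s steps with ``helper'' steps drawn from $\beta$ that cancel $q$'s modifications, then count consumed helpers to get cases (b) and (c)) is the same as the paper's, and your counting argument is sound. But the cancellation mechanism you commit to --- ``append to $\sigma$ first $s$ and then $p$'s $\beta$-step (which overwrites $q$'s change)'' --- is wrong whenever the helper's step is a \CAS{} rather than a \Write{}, and this is precisely the case you defer as ``delicate bookkeeping'' without resolving. Concretely: suppose $q$ is poised to execute a \CAS{$x$,$y$} on $R$ that would succeed, and the helper $p\in\CCov(C,R)\cap P$ is poised to execute \CAS{$a$,$b$}, which succeeds in $\Exec(C,\beta)$ because $R$ holds $a$ at that moment (so $a=x$ if $p$ is the first $P$-step on $R$). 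In your order, $q$'s \CAS{} fires first and changes $R$ from $x$ to $y\neq x$; the helper's \CAS{$a$,$b$} then \emph{fails} instead of overwriting, leaving $R$ with value $y$ and leaving $p$ in a local state (failed \CAS{}) different from its state after $\beta$. Indistinguishability $\Conf(C,\sigma)\indist_{\PP\setminus\set{q}}\Conf(C,\beta)$ is lost. The paper's key idea, which your proposal is missing, is to \emph{reverse the order in this case}: schedule the helper's successful \CAS{} first, which changes $R$ away from $x$ so that $q$'s \CAS{$x$,$y$} subsequently fails and has no effect --- no cancellation is then needed at all.

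A second instance of the same problem: when $q$ is poised to \Write{} to $R$, you pick an arbitrary $p\in\WCov(C,R)\cap U$ and move its write forward to just after $q$'s write. Any process in $\CCov(C,R)\cap P$ whose successful \CAS{} on $R$ precedes $p$'s write in $\beta$ will now execute only in the appended tail, after $R$ has been overwritten, and will fail --- again breaking indistinguishability. The paper handles this by first scheduling, in $\beta$-order, \emph{all} $P$-processes that access $R$ before the first write to $R$ in $\Exec(C,\beta)$ (their \CAS{} operations still succeed because $R$ is untouched), then inserting $q$'s write, then the first $\beta$-writer $z_\ell$, which genuinely does overwrite unconditionally. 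Your invariant (``each unused $\beta$-step sees the same value as just before the corresponding step in $\beta$'') is the right thing to maintain, but it is not preserved by the construction as you have specified it; fixing it requires exactly the two order-sensitive case distinctions above.
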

\begin{proof}
  We prove the lemma by induction on $t$.
  If $t=0$, then $q$ is idle in $C$.
  Hence, for $\sigma=\beta$ we obtain \cref{eq:tst:claim} and Case~(a).
  
  Let $op_q$ be the step $q$ is poised to execute in $C$, and let $V$ be the object affected by $op_q$.  
  Further, let $\val_C(V)$ denote the value of $V$ in configuration $C$.  

  \textbf{Case~1:}
  First, assume that $op_q$ is a \Read{} or a \CAS{} operation.
  Let $z$ be the first process in $P$ that executes a step in $\Exec(C,\beta)|V$, and let $op_z$ be that step.
  We construct a two-step schedule $\lambda$ that contains $q$ and $z$, such that
  \begin{equation}\label{eq:tst:10}
      C':=\Conf(C,\lambda)\indist_{\PP\setminus\set{q}} \Conf(C,z).
  \end{equation}
  First suppose $op_z$ is a \CAS{$a,b$} operation and $op_q$ a \CAS{$x,y$} operation that would succeed in $C$ (i.e., $x=\val_C(V)$).
  Then we define $\lambda=(z,q)$.
  Since $\beta$ is $P$-successful, the \CAS{$a,b$} by $z$ in configuration $C$ succeeds and changes the value of $V$ from $a$ to $b$.
  In this case, $x=a$, so in the execution $\Exec(C,\lambda)$ the \CAS{$x,y$} by $q$ fails.
  \Cref{eq:tst:10} follows.
  
  In all other cases (i.e., if $op_z$ is a \Write{} operation or $op_q$ is a \Read{} or a \CAS{} that fails in $C$), then we let $\lambda=(q,z)$.
  Then in $\Exec(C,\lambda)$ either operation $op_q$ does not change the value of object $V$, or $op_z$ executes a \Write{} and overwrites any changes that may have resulted from $op_q$.
  It follows that \cref{eq:tst:10} is true.

  Now let $\beta'=\beta|(P\setminus\set{z})$, and recall that $C'=\Conf(C,\lambda)$.
  Since $\beta$ is $P$-successful 
  and in $\Exec(C,\beta)$ process $z$ executes the first step on $V$, 
  it follows that $\beta'$ is $P$-successful in $C'$.  
%

  Hence, we can apply the inductive hypothesis for configuration $C'$, process set $P'=P\setminus\{z\}$, and schedule $\beta'$, to obtain a schedule $\sigma'$ that satisfies one of the Cases~(a)-(c).
  Let $\sigma=\lambda\circ\sigma'$.
  Then by construction, 
  \begin{math}
    \Conf(C,\sigma)=\Conf(C',\sigma')\indist_{P\setminus\{q,z\}}\Conf(C',\beta')=\Conf(C,\beta)
  \end{math}.
  Because of \cref{eq:tst:10}, process $z$ can also not distinguish between $\Conf(C,\sigma)$ and $\Conf(C,\beta)$, so we obtain \cref{eq:tst:claim}.
  If (a) of the inductive hypothesis applies for $C'$ and $\sigma'$, then the same also applies for $C$ and $\sigma$, because $\Conf(C,\sigma)=\Conf(C',\sigma')$.
  Now suppose that Case~(b) applies for $C'$ and $\sigma'$.
  Let $R$ be the object on which process $q$ is poised to execute a \Write{} 
  in $\Conf(C',\sigma')=\Conf(C,\sigma)$.
  Starting from configuration $C'$, process $q$ must finish its \WRead{} method within $t'=t-1$ steps. 
  Hence, $|\WCov(C',R)\cap P'|\leq t'$.
  Since all processes other than $z$ are poised to execute the same step in $C$ as in $C'$, we have 
  $|\WCov(C,R)\cap P|\leq |\WCov(C',R)\cap P'|+1\leq t'+1=t$.
  Hence, Case~(b) follows for $C$, $\sigma$ and $P$.
  With exactly the same argument, if Case~(c) applies for $C'$, $\sigma'$, and $P'$, then it also applies for $C$, $\sigma$, and $P$.

  \textbf{Case~2:}
  We now assume that in $C$ process $q$ is poised to execute a \Write{} operation $op_q$ on object $V$.
  If $|\WCov(C,V)\cap P|<t$, we let $\sigma=\beta$.
  Then \cref{eq:tst:claim} and Case~(b) (for $R=V$) of the lemma are trivially satisfied.
  
  Hence, assume that $|\WCov(C,V)\cap P|\geq t$.
  Then $\Exec(C,\beta)$ contains at least $t$ writes to $V$.
  Let $z_1,\dots,z_{\ell-1}$ be the processes accessing $V$ in this order in $\Exec(C,\beta)|V$ before the first write to $V$ occurs, and let $z_\ell$ be the first process writing to $V$.
  Let $Z=\set{z_1,\dots,z_\ell}$, $\lambda=(z_1,\dots,z_{\ell-1},q,z_\ell)$, $\gamma=\beta|Z=(z_1,\dots,z_\ell)$, $\beta'=\beta|(P\setminus Z)$ and $P'=P\setminus Z$.
  Then in $\Exec(C,\gamma\circ\beta')$ all processes in $P$ execute exactly one step, as they do in $\Exec(C,\beta)$, and for each object $U$ the steps executed on $U$ occur in the same order in both executions.
  Hence, processes cannot distinguish these executions from each other, and in particular
  \begin{equation}\label{eq:tst:30}
  \Conf(C,\gamma\circ\beta')=\Conf(C,\beta).
  \end{equation}
  In $\Exec(C,\lambda)$, first processes $z_1,\dots,z_{\ell-1}$ execute successful \CAS operations on $V$, then $q$ writes to $V$, and finally, $z_\ell$ overwrites what $q$ has written.
  It follows that
  \begin{equation}\label{eq:tst:40}
    C':=\Conf(C,\lambda)\indist_{\PP\setminus\set{q}}\Conf(C,\gamma).
  \end{equation}
  Combining this with \cref{eq:tst:30} we obtain that $\beta'$ is $P'$-successful in $C'$.
  Moreover, since $q$ executed one step in the execution leading from $C$ to $C'$, in any execution starting from $C'$ it finishes its \WRead{} method after at most $t'=t-1$ steps.
  Thus, we can apply the inductive hypothesis to obtain a schedule $\sigma'$ such that $\Conf(C',\beta')\indist_{\PP\setminus\set{q}}\Conf(C',\sigma')$, and one of Cases (a)-(c) holds.
  Let $\sigma=\lambda\circ\sigma'$.
  Then $\Conf(C,\sigma)
  =\Conf(C',\sigma')\indist_{\PP\setminus\set{q}}\Conf(C',\beta')\indist_{\PP\setminus\set{q}}\Conf(C,\beta)$, where the last relation follows from \cref{eq:tst:30,eq:tst:40}.
  This proves \cref{eq:tst:claim}.
  
  If Case~(a) of the inductive hypothesis holds for $C'$ and $\sigma'$, then it is also true for $C$ and $\sigma$ because $\Conf(C,\sigma)=\Conf(C',\sigma')$.
  Now suppose either Case (b) or (c) applies to $C'$ and $\sigma'$.
  Let $R$ be the object process $q$ is poised to access in $\Conf(C,\sigma)=\Conf(C',\sigma')$.
  If $R\neq V$, then by construction we have $|\WCov(C,R)\cap P|=|\WCov(C',R)\cap P'|$ and $|\CCov(C,R)\cap P|=|\CCov(C',R)\cap P'|$.
  Hence, Case (b) or (c) for $C'$, $\sigma'$, and $P'$ immediately implies the same case for $C$, $\sigma$, an $P$.
  Finally, suppose Case~(b) or (c) occurs for $R=V$. 
  By construction in $\Exec(C,\lambda)$ only one process among all processes in $\WCov(C,V)$ writes to $V$, namely process $z_\ell$.
  Hence, in the configuration $C'$ reached by $\Exec(C,\lambda)$ all other processes in $\WCov(C,V)$ are still poised to write to $V$.
  Thus, for $R=V$ we obtain
  \begin{math}
  |\WCov(C',R)\cap P'|=|\WCov(C,R)\cap P|-1\geq t-1=t'
  \end{math},
  so neither Case~(b) nor Case~(c) can apply to $C'$, $\sigma'$, and $P'$---contradiction.
\end{proof}

\begin{figure*}[!htb]
	\begin{center}\large
		\resizebox{.95\textwidth}{!}{
			\begin{tikzpicture}[auto, thick, line cap=round, node distance=2cm]
			\node [quiescent] (Q) {$Q$};
			\node [quiescent,right=2.5 of Q] (Qi-1) {$Q_{i-1}$};
			\node [config,right=1.5 of Qi-1,fill=green!20] (Ci) {$C_i$};
			\node [config,right=3.5 of Ci] (Di) {$D_i$};
			\node [quiescent,right=3 of Di] (Qj-1) {$Q_{j-1}$};
			\node [config,right=of Qj-1,fill=green!20] (Cj) {$C_j$};
			\node [config,below=4 of Di] (Di') {$D_i'$};
			\node [config,below=4 of Cj] (Cj') {$C_j'$};  
			\node [config,right=2.5 of Cj'] (Dj') {$D_j'$};
			
			\path [yields,very thick] (Q) -- node [schedulelabel] {$\lambda$} node [scheduledescr,black] {$P_k^0$-only} (Qi-1);
			\path [yields,very thick] (Qi-1) -- node [schedulelabel] {$\alpha_i$} node [scheduledescr] {$P_k^0$-only} (Ci);
			\path [yields] (Ci) -- node [schedulelabel] {$\beta_i$} node [scheduledescr] {$P_k$-successful} (Di);
			\path [yields] (Di) -- node [schedulelabel] {$\lambda'$} node [scheduledescr] {$P_k^0$-only} (Qj-1);
			\path [yields] (Qj-1) -- node [schedulelabel] {$\alpha_j$} node [scheduledescr] {$P_k^0$-only} (Cj);
			\path [yields,very thick] (Ci) -- node [schedulelabel,above right=1pt] {$\sigma_i$} node [scheduledescr,below,sloped] {$P_{k+1}$-only;\\ Obtained by \cref{lem:tst:hide_process}} (Di');
			\path [yields,very thick] (Di') -- node [schedulelabel] {$\lambda'\alpha_j$} node [scheduledescr] {$P_k^0$-only;\\ contains a complete \Write{}} (Cj');
			\path [yields] (Cj') -- node [schedulelabel] {$\beta_i$} node [scheduledescr] {$P_k$-successful} (Dj');
			
			\path [indist] (Di) to node [indistlabel] {\LARGE$\displaystyle\indist_{\PP\setminus\set{p_{k+1}}}$} (Di');
			\path [indist] (Cj) to node [indistlabel] {\LARGE$\displaystyle\indist_{\PP\setminus\set{p_{k+1}}}$} (Cj');
			
			\path (Di') -- (Dj') node [midway,yshift=-2cm] (Di'Dj') [indistlabel] {\LARGE$\displaystyle\indist_{p_{k+1}}$};
			\path [indist] (Di') -- (Di'Dj') -- (Dj');
			
			\path (Ci) -- (Cj) node [midway,yshift=3cm] (CiCj) [indistlabel] {$\sig(C_i)=\sig(C_j)$};
			\path [indist] (Ci) -- (CiCj) -- (Cj);
			
			\node [indistlabel,below left=1 of Di',align=center] (clean) {if $p_{k+1}$ is idle, then $D_i'$ is $p_{k+1}$-clean;};
			\path [indist,arrow] (clean) -- (Di');
			
			\node [indistlabel,above=1.5 of Dj',align=center] (dirty) {if $p_{k+1}$ is idle,\\ then $D_j'$ is $p_{k+1}$-dirty;};
			\path [indist,arrow] (dirty) -- (Dj');
			\end{tikzpicture}}
	\end{center}
	\caption{Proof of \cref{lem:caslb_main}. Double circles denote quiescent configurations.
		\label{fig:lem:caslb_main}}
\end{figure*}
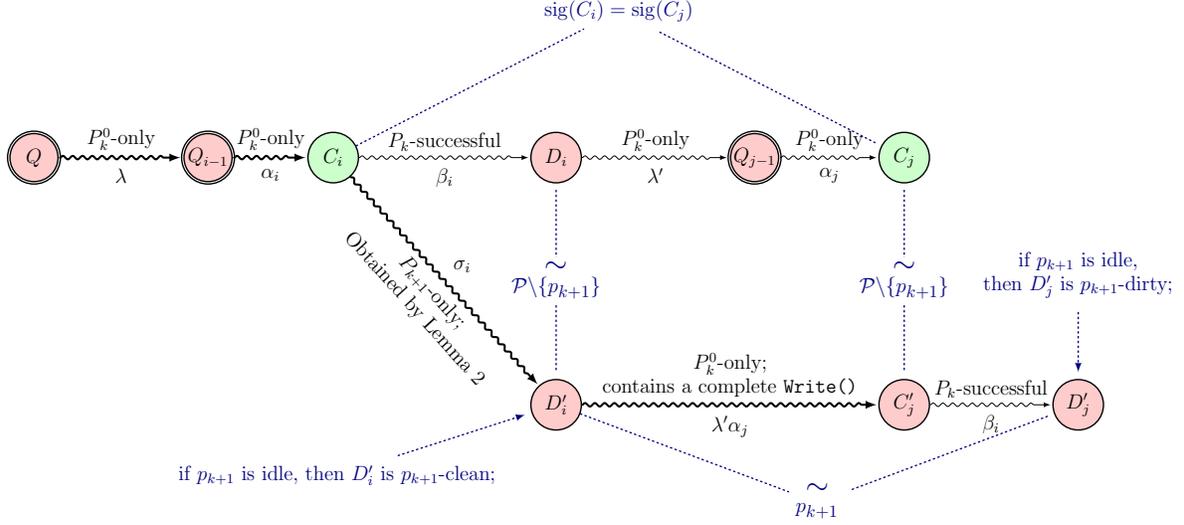

\begin{lemma}\label{lem:caslb_main}\samepage
Suppose \WRead{} and \WWrite{} have step complexity at most $t$. 
For any reachable quiescent configuration $Q$ and any set $P_k=\set{p_1,\dots,p_k}\subseteq\PP\setminus\set{0}$, where $k\in\set{0,\dots,n-1}$, there exists a $(P_k\cup\set{0})$-only schedule $\alpha$ such that $C:=\Conf(Q,\alpha)$ satisfies all of the following:
\begin{enumerate}[label=(\roman*)]
  \item all processes in $\PP\setminus P_k$ are idle in $C$;
  \item there is a $P_k$-successful schedule for $C$; and
  \item $|\WCov(C,R)\cap P_k|\leq t$ and $|\CCov(C,R)\cap P_k|\leq t$ for all objects $R$.
\end{enumerate}
\end{lemma}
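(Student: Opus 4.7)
The plan is to prove the lemma by induction on $k$, closely following the template of the proof of \cref{lem:registerlb_main} but using \cref{lem:tst:hide_process} in place of its register-specific block-write erasure argument. The base case $k=0$ is immediate with empty $\alpha$: $C=Q$ is quiescent, so (i)--(iii) hold trivially. For the inductive step, assume the lemma for $k$ and consider $P_{k+1}=P_k\cup\{p_{k+1}\}$. Iteratively apply the IH starting from $Q_0:=Q$ to build a chain $Q_{i-1}\yields{\alpha_i}C_i\yields{\beta_i}D_i\yields{\gamma_i}Q_i$, where $C_i$ satisfies (i)--(iii) for $P_k$, $\beta_i$ is a $P_k$-successful schedule for $C_i$, and $\gamma_i$ is a $(P_k\cup\{0\})$-only schedule (supplied by wait-freedom) that drains all pending method calls and lets process $0$ execute one complete \WWrite{}, returning to a quiescent $Q_i$. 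Let $\sig(C)$ encode the values of all base objects together with the local states of the processes in $P_k\cup\{0\}$; since everything is bounded, by pigeonhole there exist indices $i<j$ with $\sig(C_i)=\sig(C_j)$, which in particular ensures that $\beta_i$ is also $P_k$-successful for $C_j$.

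For each $i$, apply \cref{lem:tst:hide_process} to $(C_i,P_k,p_{k+1},\beta_i)$, viewing $p_{k+1}$ as having just invoked its next \WRead{} (whose remaining step complexity is at most $t$), to obtain a schedule $\sigma_i$ with $D_i':=\Conf(C_i,\sigma_i)\indist_{\PP\setminus\{p_{k+1}\}}D_i$ and falling under one of cases (a), (b), (c). If case (b) or (c) occurs for some $i$, then in $D_i'$ process $p_{k+1}$ is poised to perform a successful operation on some object $R$ with either $|\WCov(C_i,R)\cap P_k|<t$ or $|\WCov(C_i,R)\cap P_k|+|\CCov(C_i,R)\cap P_k|<t$, respectively; these strict inequalities are exactly what is needed so that adding $p_{k+1}$ to the cover of $R$ keeps each per-object count at most $t$, and from this the desired $C^*$ will be assembled. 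Otherwise case (a) holds for every $i$, so $p_{k+1}$ completes its \WRead{} and is idle in each $D_i'$. For the pigeonhole pair $i<j$, apply the $(P_k\cup\{0\})$-only schedule $\eta:=\gamma_i\alpha_{i+1}\beta_{i+1}\gamma_{i+1}\cdots\alpha_j$ from $D_i'$ to reach $C_j'\indist_{\PP\setminus\{p_{k+1}\}}C_j$, and use the signature match to conclude that $\beta_i$ is still $P_k$-successful for $C_j'$, defining $D_j':=\Conf(C_j',\beta_i)$ (as in \cref{fig:lem:caslb_main}). Since $p_{k+1}$ takes no steps from $D_i'$ to $D_j'$ and the base-object values at $D_j'$ match those at $D_i'$ (propagating the signature match through $\eta\beta_i$), $D_i'\indist_{p_{k+1}}D_j'$; but $D_i'$ is $p_{k+1}$-clean ($\sigma_i$ is $P_{k+1}$-only and contains a complete \WRead{} by $p_{k+1}$ with no later \WWrite{}) while $D_j'$ is $p_{k+1}$-dirty (process $0$'s \WWrite{} in $\gamma_i$ completes after $p_{k+1}$'s \WRead{}, with $p_{k+1}$ taking no further steps), contradicting \cref{obs:clean-dirty-distinguishability}.

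The hardest step will be the construction of $C^*$ in the case (b)/(c) branch. In $D_i'$ the processes in $P_k$ have already taken their $\beta_i$-steps via the interleaving produced by \cref{lem:tst:hide_process} and are no longer poised at the covering positions supplied by the IH at $C_i$. Recovering a configuration in which all of $P_{k+1}$ is simultaneously poised for a successful schedule with per-object coverage still at most $t$ will require appending a $(P_k\cup\{0\})$-only continuation that drains the pending method calls of $P_k$ and $0$ and then invokes the IH once more on the ``restricted'' subsystem in which $p_{k+1}$ is treated as frozen while still poised at $R$, followed by a careful accounting showing that the newly produced $P_k$-coverage of $R$ is still strictly below $t$ so that $p_{k+1}$'s poising at $R$ does not violate (iii). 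Translating the coverage bound supplied by \cref{lem:tst:hide_process} (stated in terms of $C_i$) into a coverage bound at the final $C^*$ is where the main technical care lies.
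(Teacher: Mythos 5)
Your skeleton (induction on $k$, the chain $Q_{i-1}\yields{\alpha_i}C_i\yields{\beta_i}D_i\yields{\gamma_i}Q_i$, the pigeonhole on signatures, the application of \cref{lem:tst:hide_process} with $p_{k+1}$ viewed as having just invoked a \WRead{}, and the clean/dirty contradiction ruling out case (a)) matches the paper's proof. One minor issue first: your signature includes the full local states of the processes in $P_k\cup\set{0}$, and these need not range over a finite set (a process running an infinite loop may carry unbounded local data), so the pigeonhole is not justified as stated. The paper's signature records only the next poised shared-memory operation of each process (with its parameters) together with all object values, which is finite because the base objects are bounded, and this weaker signature suffices for everything that follows.

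The more serious problem is that the case (b)/(c) branch --- the branch in which the lemma's conclusion is actually established --- is left open, and the repair you sketch (drain the pending calls of $P_k\cup\set{0}$ and re-invoke the inductive hypothesis on a ``restricted'' subsystem with $p_{k+1}$ frozen) is not the right mechanism. The paper resolves the difficulty you identify with the \emph{same} pigeonhole pair it uses in the case (a) branch: from $D_i'$ it runs the $(P_k\cup\set{0})$-only continuation $\lambda'\alpha_j$ of the original chain, reaching $C_j'\indist_{\PP\setminus\set{p_{k+1}}}C_j$; since $\sig(C_j)=\sig(C_i)$, every process of $P_k$ is poised in $C_j'$ exactly as in $C_i$ and all objects carry the same values, so $\beta_i$ is again $P_k$-successful for $C_j'$, the coverage counts of $P_k$ at $C_j'$ equal those at $C_i$, and $p_{k+1}$ (untouched by $\lambda'\alpha_j$) adds exactly one to the coverage of its object $R^\ast$, which cases (b)/(c) bound strictly below $t$. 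The $P_{k+1}$-successful schedule is $\beta_i\circ p_{k+1}$, and the success of $p_{k+1}$'s pending \CAS{} after $\beta_i$ is deduced from $D_i'\indist_{p_{k+1}}D_j'$ (the object values agree), not from a fresh invocation of the inductive hypothesis. Your drain-and-restart plan loses precisely these identifications: after draining and re-running the IH, the value of $R^\ast$ may have changed, so $p_{k+1}$'s poised \CAS{$x$,$y$} need no longer succeed, and the new coverage of $R^\ast$ by $P_k$ bears no relation to the bound $|\WCov(C_i,R^\ast)\cap P_k|<t$ supplied by \cref{lem:tst:hide_process}. The fix is to choose the pigeonhole pair $i<j$ \emph{before} invoking \cref{lem:tst:hide_process}, apply that lemma only at $C_i$, and reuse the pair in both branches.
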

\begin{proof}
  Throughout this proof let $P_k^0$ denote the set $P_k\cup\set{0}$.
  We prove the lemma by induction on $k$.
  For $k=0$, we let $\alpha$ be the empty schedule, so $C=\Conf(Q,\alpha)=Q$.
  Then $C$ is quiescent and (i) is true.
  Statements (ii)--(iii) follow immediately from $P_k=\emptyset$.
  
  Now suppose the inductive hypothesis is true for some value of $k\in\set{0,\dots,n-2}$.
  We let $Q_0=Q$ and $P_{k+1}=P_k\cup\set{p_{k+1}}$ for an arbitrary process $p_{k+1}\in\PP\setminus P_k^0$.
  Then, for $i=1,2,\dots$ we iteratively construct executions $\alpha_i,\beta_i,\gamma_i$ and configurations $Q_i,C_i,D_i$, where
  \begin{math}
  Q_{i-1}\yields{\alpha_i} C_i\yields{\beta_i} D_i\yields{\gamma_i} Q_i,
  \end{math}
  and $\alpha_i,\beta_i,\gamma_i$ are determined as follows:
  $\alpha_i$ is a $P_k^0$-only schedule that guarantees properties (i)-(iii) from the inductive hypothesis for configuration $Q_{i-1}$;
  $\beta_i$ is a $P_k$-successful schedule for $C_i$; and 
  $\gamma_i$ is an arbitrary $P_k$-only schedule followed by a 0-only schedule such that $Q_i$ is quiescent, and where $\Exec(D_i,\gamma_i)$ contains exactly one complete \WWrite{} operation by process 0.
  By the assumption that \WRead{} and \WWrite{} are wait-free, $\gamma_i$ exists.%
  
  We define for each configuration $C_i$ a signature, $\sig(C_i)$,
  which encodes for every process $p$ the exact shared memory operation $p$ is poised to execute next (including its parameters), and for every base object $R$ its value.
  
  
  Since there is only a finite number of bounded base objects in the system, there is a finite number of signatures, and thus there exist $1\leq i<j$ 
  such that $C_i$ and $C_j$ have the same signature. 
  We let 
  \begin{math}
  \lambda=\alpha_1\beta_1\gamma_1\alpha_2\dots\alpha_{i-1}\beta_{i-1}\gamma_{i-1}, 
  \quad
  \text{and}
  \quad
  \lambda'=\gamma_i\alpha_{i+1}\beta_{i+1}\gamma_{i+1}\alpha_{i+1}\dots\alpha_{j-1}\beta_{j-1}\gamma_{j-1}
  \end{math}.
  From the construction above we have
  \begin{math}
  Q\yields{\lambda}Q_{i-1}\yields{\alpha_i} C_i\yields{\beta_i} D_i\yields{\lambda'}Q_{j-1}\yields{\alpha_j} C_j
  \end{math},  
  where
    $Q_{i-1}$ and $Q_{j-1}$ are quiescent,
    $C_i$ satisfies (i)-(iii) from the inductive hypothesis, and
    $\sig(C_i)=\sig(C_j)$.
  This situation, as well as the following construction is depicted in \cref{fig:lem:caslb_main}. 

  Now we apply \cref{lem:tst:hide_process} to configuration $C_i$.
  For the purpose of applying this lemma, we may assume that in $C_i$ process $p_{k+1}$ has just invoked a \WRead{} operation but not yet executed its first shared memory step during that operation.
  Hence, in all executions starting from $C_i$, $p_{k+1}$ will finish that pending \WRead{} operation in at most $t$ steps. 
  Then \cref{lem:tst:hide_process} yields a $P_{k+1}$-only schedule $\sigma_i$ such that for $D_i'=\Conf(C_i,\sigma_i)$
  \begin{equation}\label{eq:tst:500}
  D_i\indist_{\PP\setminus\set{p_{k+1}}} D_i',
  \end{equation}
  and one of the Cases~(a)-(c) of \cref{lem:tst:hide_process} hold.
  Let $C_j'=\Conf(D_i',\lambda'\alpha_j)$, and $D_j'=\Conf(C_j',\beta_i)$.
  (The use of $\beta_i$ instead of $\beta_j$ is intentional.)
  Then 
  \begin{math}
  Q_{i-1}\yields{\alpha_i} C_i\yields{\sigma_i}
  D_i'\yields{\lambda'\alpha_j} C_j'\yields{\beta_i} D_j'.
  \end{math}
  Since $\lambda'\alpha_j$ does not contain $p_{k+1}$, which is the only process that, according to \cref{eq:tst:500}, may be able to distinguish $D_i$ from $D_i'$, we obtain
  \begin{equation}\label{eq:tst:600}
  C_j'\indist_{\PP\setminus\set{p_{k+1}}} C_j.
  \end{equation}
  Configurations $C_i$ and $C_j$ have the same signature.
  Therefore, every process is poised to execute the same step in $C_i$ as in $C_j$, and all objects have the same values in both configurations.
  This, together with the fact that $\beta_i$ is $P_k$-only and every process appears at most once in $\beta_i$ implies
  \begin{equation}\label{eq:tst:700}
  \Exec(C_i,\beta_i)=\Exec(C_j,\beta_i)\stackrel{(\ref{eq:tst:600})}=\Exec(C_j',\beta_i).
  \end{equation}
  
  Hence, all objects have the same value in $D_i=\Conf(C_i,\beta_i)$ as in $D_j'=\Conf(C_j',\beta_i)$, and thus from \cref{eq:tst:500},
    all objects have the same value in $D_i'$ as in $D_j'$.
  Since $p_{k+1}$ does not appear in $\lambda'\alpha_j\beta_i$, and thus takes no step in the execution leading from $D_i'$ to $D_j'$, we conclude
  \begin{equation}\label{eq:tst:800}
  D_i'\indist_{p_{k+1}} D_j'.
  \end{equation}

  Now recall that $\sigma_i$ is the schedule $\sigma$ guaranteed by \cref{lem:tst:hide_process} (applied with $C=C_i$ and $q=p_{k+1}$), and the claim guarantees one of three Cases (a)-(c).
  First, assume Case~(a) occurs, i.e., $p_{k+1}$ completes a \WRead{} method call in $\Exec(C_i,\sigma_i)$ 
  (recall that in $C_i$ it had just invoked that method call)
  and is idle in $D_i'=\Conf(C_i,\sigma_i)$.
  Since process 0 takes no steps in $\Exec(C_i,\sigma_i)$ it follows that $D_i'$ is $p_{k+1}$-clean.
  On the other hand, $\Exec(D_i',\lambda'\alpha_j\beta_i)$ contains no steps by $p_{k+1}$, but instead a complete \WWrite{} by process 0.
  Hence, $D_j'=\Conf(D_i',\lambda'\alpha_j\beta_j)$, is $p_{k+1}$-dirty.
  But this contradicts \cref{obs:clean-dirty-distinguishability}, because according to \cref{eq:tst:800} process $p_{k+1}$ cannot distinguish $D_i'$ from $D_j'$.
  Hence, we know that Case~(a) from \cref{lem:tst:hide_process} cannot apply.
  
  Now, suppose that instead Case~(b) or (c) applies.
  We show that statments (i)--(iii) of the lemma are true for
    $\alpha=\lambda\alpha_i\sigma_i\lambda'\alpha_j$
    and
    $C=\Conf(Q,\alpha)=C_j'$.
  \oldpwtodo{The use of $C$ is not very good, because we used $C$ when we applied \cref{lem:tst:hide_process}.}
  By the inductive hypothesis (i), in $C_j$ all processes in $\PP\setminus P_k$ are idle, so from \cref{eq:tst:600} it follows that in $C_j'$ all processes in $\PP\setminus P_{k+1}$ are idle.
  This proves (i).
  
  According to Cases~(b) and (c) of \cref{lem:tst:hide_process}, in configuration $D_i'$ (and thus also in $C_j'$ and $D_j'$) process $p_{k+1}$ is poised to execute an operation $op$ that is either a \Write{} or a \CAS{$x$,$y$} on some object $R^\ast$.
  Moroever, 
  in case that $op$ is a \CAS{$x$,$y$}, in configuration $D_i'$ object  $R^\ast$ has value $x$.
  Then \cref{eq:tst:800} implies that the value of $R^\ast$ is also $x$ in configuration $D_j'$, and in partiular, if $p_{k+1}$ executes \CAS{$x$,$y$} in that configuration, that \CAS{} succeeds.
  We conclude that in the execution $\Exec(C_j',\beta_i\circ p_{k+1})$ process $p_{k+1}$ takes exactly one step, which is either a \Write{} or a successful \CAS{$x$,$y$}.
  By construction, $\Exec(C_i,\beta_i)$ is $P_k$-successful, and so \cref{eq:tst:700} implies that $\Exec(C_j',\beta_i)$ is also $P_k$-successful.
  It follows that $\Exec(C_j',\beta_i\circ p_{k+1})$ is $P_{k+1}$-successful, which proves statement (ii).

  Finally, since in $C_j'$ process $p_{k+1}$ is poised to execute operation $op$ on $R^\ast$, and all other processes are poised to execute exactly the same step as in $C_i$, we have: 
  In case $op$ is a \Write{},
  $\WCov(C_j',R^\ast)=\WCov(C_i,R^\ast)\cup\set{p_{k+1}}$, and 
  in case $op$ is a \CAS{}, $\CCov(C_j',R^\ast)=\CCov(C_i,R^\ast)\cup\set{p_{k+1}}$.
  All other sets $\WCov(\cdot,\cdot)$ and $\CCov(\cdot,\cdot)$ are the same for $C_i$ as for $C_j'$.
  Therefore, Cases~(b) and (c) of \cref{lem:tst:hide_process} together with the inductive hypothesis (iii) immediately imply  
  $|\WCov(C_j',R)\cap P_{k+1}|\leq t$ and $|\CCov(C_j',R)\cap P_{k+1}|\leq t$ for all objects $R$.
  This proves (iii) and completes the inductive step.
\end{proof}

Parts~(b) and (c) of \cref{thm:mainlb:ABA_detecting} follow immediately from this lemma.
(See \cref{sec:proof:mainlb:ABA_detecting} for additional details.)
Replacing each \WWrite{} with a \LL{}/\SC{} pair by process 0, and accommodating the definition of $p$-clean and $p$-dirty configurations, we obtain \cref{cor:mainlb:LLSC}.
(See \cref{sec:lb_LLSC} for additional details.)

	\section{Upper Bounds}
\subsection{Constant-Time ABA-Detecting Registers from Registers}

\begin{classfigure}[t!] 
\begin{minipage}{.47\textwidth}\vskip1ex
	\begin{method}{SC$_p$($x$)} 
		\lIf {$b$}{\Return \False\label{l:SC:returnFalse1}}

		\For {$i\leftarrow 1$ \KwTo $n$}
		{	
			$(y,a)\leftarrow X.\Read{}$\label{l:SC:readX}\;		
			\If (\tcc*[f]{if $p$'s bit is 1}){$\floor{a/2^p}$ is odd\label{l:SC:ifBitIsSet}}
			{
				\Return \False \label{l:SC:returnFalse2}
			}
			\If {$X.\CAS{$(y,a),(x,2^n-1)$}$\label{l:SC:CAS}}{\Return \True \label{l:SC:returnTrue}}
		}
		\Return \False \label{l:SC:returnFalse3}	
	\end{method} 
	\vspace{2pt}
	\begin{method}{VL$_p$()} 
		$(x,a)\leftarrow X.\Read{}$ \label{l:VL:readX}\;
		\eIf (\tcc*[f]{if $p$'s bit is 0 and $b=\False$}){($\floor{a/2^p}$ is even $\wedge$ $b=\False$)}
		{
			\Return $\True$\label{l:VL:returnTrue}
		}		
		{
			\Return $\False$\label{l:VL:returnFalse}
		}			
	\end{method} 		
\end{minipage}
\hspace{.01\textwidth}
\begin{minipage}{.5\textwidth}       
	\shared:
	\texttt{CAS} $X$\\ 
    	\local:
	\texttt{Boolean} $b$\\ 
	\begin{method}{LL$_p$()} 
		$(x,a)\leftarrow X.\Read{}$\label{l:LL:readX}\;
		\eIf (\tcc*[f]{if $p$'s bit is 0}){$\floor{a/2^p}$ is even}
		{
			$b \leftarrow \False$\label{l:LL:setBFalse1}\;
			\Return $x$\label{l:LL:returnX1}
		}		
		{
			\For{$i\leftarrow 1$ \KwTo $n$}
			{
				$(x',a') \leftarrow X.\Read{}$\label{l:LL:readX'}\;
				\If (\tcc*[f]{try to reset $p$'s bit}){$X.\CAS{$(x',a'),(x',a' - 2^p)$}$\label{l:LL:CAS}}
					{
						$b\leftarrow \False$\label{l:LL:setBFalse2}\;
						\Return $x'$\label{l:LL:returnX'}
					}
			}	
			$b\leftarrow \True$\label{l:LL:setBTrue}\;
			\Return $x$\label{l:LL:returnX2}
		}
		
	\end{method} 
\end{minipage}
\caption{An LL/SC/VL implementation from bounded CAS.\label{fig:LL/SC/VL-CAS}}
\label{Fig:LL/SC/VL-CAS}
\vspace{-15pt}
\end{classfigure}

\cref{Fig:ABA-DetectingRegister} depicts an optimal linearizable implementation of an ABA-detecting register from $n+1$ bounded registers with constant step-complexity.
We use two more registers than needed according to the lower bound in \cref{thm:mainlb:ABA_detecting}\,(a).

\begin{classfigure}[!t]
\begin{minipage}{.58\textwidth}       
    \vspace{5pt}

	\shared\\
	\texttt{Register} $X=(\bot,\bot,\bot)$\\ 
	\texttt{Register} $A[0\dots n-1]=((\bot,\bot),\dots,(\bot,\bot))$\;
    
    \vspace{5pt}
	\begin{method}{DWrite$_p$($x$)} 
		$s\leftarrow$ \GetSeq{}\label{l:Write:getSeq}\;
		$X.\Write{x,p,s}$\label{l:Write:writeX}\;
	
	\end{method} 

\vspace{3pt}

     	\begin{method}{GetSeq$_p$()} 

		$(r,s_r) \leftarrow A[c].\Read{}$\label{l:GetSeq:readA}\;
		\eIf{$r=p$\label{l:GetSeq:updateAnn}}
		{
			$na \leftarrow (na\setminus \{(c,i)\mid i \in \mathbb{N}\}) \cup (c,s_r)$\label{l:GetSeq:addtoAnn}
		}
		{
			$na \leftarrow na\setminus \{(c,i)\mid i \in \mathbb{N}\}$\label{l:GetSeq:removeFromAnn}
		}
		$c\leftarrow (c+1) \bmod n$\;

		choose arbitrary $s\in \{0,\dots,2n+1\}\setminus\big(\{i \mid (j,i)\in na\}\cup usedQ\big)$\label{l:GetSeq:chooseS}\;

		$usedQ.\enq{$s$}$\label{l:GetSeq:enqS}\;
		$usedQ.\deq{}$\label{l:GetSeq:deqS}\; 
		
		\Return $s$\;
	\end{method} 
\end{minipage}
\hspace{.015\textwidth}
\begin{minipage}{.4\textwidth}       

    	\local{(\textit{to each process})}\\
		\texttt{Boolean} b = \False\\ 
    		\texttt{Queue} $usedQ[n+1]=(\bot,\dots,\bot)$\\
		\texttt{Set} $na = \{\}$\\  
		\texttt{int} $c=0$\\  
\vspace{-5pt}
	\begin{method}{DRead$_q$()} 
		$(x,p,s)\leftarrow X.\Read{}$\label{l:Read:readX}\;
		$(r,s_r) \leftarrow A[q].\Read{}$\label{l:Read:readA[q]}\;
		$A[q].\Write{p,s}$\label{l:Read:writeA[q]}\;
		$(x',p',s')\leftarrow X.\Read{}$\label{l:Read:readX'}\;
		\eIf {$(p,s)=(r,s_r)$\label{l:Read:sameProcessSeq}}
    {
      $ret\leftarrow (x,b)$\label{l:Read:ret=xb}
     }		
     {
       $ret\leftarrow (x,\True)$\label{l:Read:ret=xTrue}
      }
		\eIf{$(x,p,s)=(x',p',s')$\label{l:Read:X=X'}}
		{
			$b\leftarrow \False$\label{l:Read:setbFalse}\;
		}		
		{
			$b\leftarrow \True$\label{l:Read:setbTrue}\;
		}
		\Return $ret$\;
	\end{method} 
\end{minipage}

\caption{An ABA-detecting register implemented from bounded registers.}
\label{Fig:ABA-DetectingRegister}

\end{classfigure}  

The main idea of the algorithm is similar to one used in the multi-layered construction of LL/SC/VL from CAS by Jayanti and Petrovic \cite{JP2003a}, which itself is a modified version of the implementation by Anderson and Moir~\cite{AM1995a}.

Here, we briefly discuss the implementation.
A complete correctness proof is provided in \cref{sec:proof:ABAReg-from-reg}.
We use a shared bounded register $X$ that stores a triple $(x,p,s)$, where $x$ is the value stored in the ABA-detecting register, $p\in\PP$ is a process ID, and $s\in\set{0,\dots,2n+1}$ is a sequence number.
We also use a shared \emph{announce array} $A[0\cdots n-1]$, where only process $q$ can write to $A[q]$.
Each array entry $A[q]$ stores a pair $(p,s)$, where $p\in\PP$ is a process ID and $s\in\set{0,\dots,2n+1}$ is a sequence number.
Register $X$ is initialized to $(\bot,\bot,\bot)$ and all entries of $A$ are initialized to $(\bot,\bot)$.

In a \DWrite{$x$} operation, the calling process $p$ first determines a suitable sequence number, $s$, using the helper method \GetSeq{}, and writes the pair $(x,p,s)$ to $X$ (\crefrange{l:Write:getSeq}{l:Write:writeX}).
Method \GetSeq{} ensures that the sequence number $s$ it returns satisfies the following:
If there is any point at which $X=(\cdot,p,s)$ and $A[q]=(p,s)$ for some process $q$, then $p$ will not use sequence number $s$ again in any following \DWrite{} call, until $A[q]\neq (p,s)$.
To achieve that, in a sequence of $n$ consecutive \GetSeq{} calls process $p$ scans through the entire announce array, reading one array entry with each \GetSeq{} call.
It then returns a sequence number that $p$ has not used in its preceding $n$ \DWrite{} method calls, and which it has not found in any array entry of $A[]$, when it read that entry last.

For its \DRead{} operation each process $q$ uses a local variable, $b$, that indicates whether a \DWrite{} operation linearized already during $q$'s previous \DRead{} operation after that operation's linearization point.
Our algorithms ensure that if $b=\True$ at the beginning of a \DRead{} operation, then such a \DWrite{} has happened.
In a \DRead{} operation, a process reads $X$ twice to obtain the triples $(x,p,s)$ in \cref{l:Read:readX}, and $(x',p',s')$ in \cref{l:Read:readX'}.
Between those two reads $q$ first reads the old value of $A[q]$ into $(r,s_r)$ (\cref{l:Read:readA[q]}), and then it \emph{announces} the pair $(p,s)$ by writing it to $A[q]$ (\cref{l:Read:writeA[q]}).
Now $(r,s_r)$ stores the ``old'' announcement from $q$'s preceding \DRead{} operation, and $A[q]$ stores the current one.
In \crefrange{l:Read:sameProcessSeq}{l:Read:ret=xTrue}, $q$ now decides the return value:
If $b=\True$ or if $(p,s)\neq (r,s_r)$, then $q$ returns $(x,\True)$, and otherwise $(x,\False)$.
Moreover, in preparation for the next \DRead{}, if $(x,p,s)=(x',p',s')$, $q$ sets $b$ to \False, and otherwise it sets it to \True.

First suppose $q$ reads two different triples from $X$ in \cref{l:Read:readX} and \cref{l:Read:readX'}, i.e., $(x,p,s)\neq (x',p',x')$.
Then the \DRead{} operation will linearize with the first read of $X$ (\cref{l:Read:readX}).
We now know that the value of $X$ has changed between the linearization point and the response of $q$'s \DRead{}.
Hence, $q$ sets flag $b$ to indicate that its next \DRead{} should return a pair $(\cdot,\True)$.
If $(x,p,s)=(x',p',s')$, then, on the other hand, it is ensured that $A[q]=(p,s)$ at the point when $q$ read $(x,p,s)$ from $X$ in \cref{l:Read:readX'}.
As explained above, in this case the pair $(p,s)$ will not be used again in any following \DWrite{} operation, until $q$ has replaced its announcement $(p,s)$ with a new one.
Hence, $q$ resets $b$ because in the following \DRead{} operation, $q$ will be able to detect any \DWrite{} that has happened inbetween by comparing $A[q]$ with the corresponding pair stored in $X$.

\subsection{LL/SC/VL from a Single Bounded CAS}
We now briefly sketch our wait-free implementation of LL/SC/VL from a single bounded CAS object.
The implementation has $O(n)$ step complexity, and thus, by \cref{cor:mainlb:LLSC}, is optimal.
The pseudo-code is presented in \cref{fig:LL/SC/VL-CAS} and correctness proofs can be found in \cref{sec:proof:LLSCVLfromCAS}.

In a CAS object $X$, we store a pair $(x,a)$, where $x$ represents the value of the implemented LL/SC/VL object, and $a$ is an $n$-bit string.
The $p$-th bit of $a$ is used to indicate whether an \SC{} operation linearized since $p$'s last \LL{} (the bit is usually set in this case).
As in the previous algorithm, we use a local variable $b$ for each process $p$.
In an \LL{} call, a process $p$ tries to reset \emph{its} bit (the $p$-th bit of the second component) of $X$.
As we explain below, this may fail, but only if an \SC{} $sc$ linearizes during $p$'s attempts to reset that bit.
If that happens, $p$ sets the flag $b$ and its \LL{} linearizes before $sc$.
Thus, in a subsequent \SC{} or \VL{}, $p$ determines from the set flag $b$ that it does not have a valid link, and that \SC{} or \VL{} can fail, even though $p$'s bit in $X$ is not set.

More precisely, in a \LL{} method call, process $p$ reads the pair $(x,a)$ from $X$ (\cref{l:LL:readX}) and checks whether its bit in $a$ is set.
If not, in \crefrange{l:LL:setBFalse1}{l:LL:returnX1} it simply resets $b$ (because in subsequent \SC{} or \VL{} calls $p$'s bit in $X$ will indicate whether $p$ has a valid link or not) and returns $x$.
That \LL{} operation linearizes with the \Read{} of $X$ in \cref{l:LL:readX}.
Now suppose $p$'s bit in $X$ is set.
Then $p$ tries to reset that bit, using a \CAS{} operation on $X$.
However, that \CAS{} may fail because of some other process' successful \CAS{} during a \LL{} or \SC{} call.
Therefore, $p$ repeatedly reads $X$ followed by a \CAS{} to set its bit in the second component of $X$, until its \CAS{} operation succeeds, or until it has failed $n$ times (\crefrange{l:LL:readX'}{l:LL:CAS}).
If a \CAS{} succeeds, $p$ resets $b$ and returns the first component of $X$ that it read just before its last, successful \CAS{} attempt (\crefrange{l:LL:setBFalse2}{l:LL:returnX'}); the \LL{} linearizes with that \CAS{}, and since $p$'s bit in $X$ is now reset, in the next \DRead{} operation $p$ can use its bit in $X$ to determine whether an \SC{} linearized since the linearization point of the current \DRead{}.
If $p$'s \CAS{} fails $n$ times, then $X$ must have changed $n$ times since $p$'s first \Read{} of $X$.
We argue that then at least one such change must be due to a \CAS{} operation during some process' \SC{}:
Suppose not.
Then $X$ must have changed at least $n$ times, and every time it must have changed because of a \CAS{} executed in a \LL{} operation.
But this is not possible, because each time such a \CAS{} succeeds, one of the bits in the second part of $X$ changes from 1 to 0, and $p$'s bit does not change at all.
We conclude that at least once, while $p$ has been trying to reset its bit in $X$, a successful \CAS{} on $X$ must have occurred during an \SC{} operation.
As we discuss below, this means that a successful \SC{} linearized.
Hence, in this case $p$ can set its bit to \True (\cref{l:LL:setBTrue}), which guarantees that $p$'s next \SC{} or \VL{} will fail, and return in \cref{l:LL:returnX2} the value $x$ it read at the very beginning from $X$.
The linearization point of that \LL{} is the \Read{} of $X$ in \cref{l:LL:readX}.

In an \SC{$y$} operation a process $p$ first checks flag $b$, and if it is set, $p$ immediately returns \False---this indicates that an \SC{} linearized during $p$'s last \LL{} but after the linearization point of that \LL{}.
If $b$ is not set, then $p$ reads $X$ to determine whether its bit in $X$ is set, and if yes, it can also return \False (\crefrange{l:SC:readX}{l:SC:returnFalse2}), because this indicates that some other \SC{} has linearized since $p$'s last \LL{}.
If $p$'s bit in $X$ is not set, then $p$ tries to write $(y,2^n-1)$ into $X$ using a \CAS{} operation
(\cref{l:SC:CAS}).
If that \CAS{} succeeds, as a result the value of the LL/SC/VL object change to $y$, and the bits of all processes are now set in $X$.
Hence, $p$'s \SC{$y$} linearizes with that successful \CAS{}, and $p$ returns \True (\cref{l:SC:returnTrue}).
If the \CAS{} fails, then $p$ repeats up to $n$ times, until either it finds that its bit in $X$ is set (and thus some other process' \SC{} succeeded), or its own \CAS{} succeeds.
If $p$'s \CAS{} fails $n$ times, then for the same reasons as explained earlier, we know that some process' \SC{} must have linearized during $p$'s ongoing \SC{$y$} operation, and thus $p$ can return \False (the unsuccessful \SC{} linearizes with its response).

Operation \VL{} is very simple: A process simply checks whether flag $b$ or its bit in $X$ is set, and if yes, it returns \False, otherwise it returns \True.
	
	\bibliographystyle{abbrv}
	\bibliography{ref}
	\appendix
	\clearpage
	\appendix
	\section{Implementation of ABA-Detecting Registers from LL/SC/VL}
\label{sec:LL/SC/VL->ABA-Detecting}\begin{classfigure}[b!]
  \begin{minipage}[t]{.5\textwidth -0.25cm}       
    \shared
    \texttt{LL/SC Object} $X=\bot$\\ 
    
    \begin{method}{DWrite$_p$($x$)}
      $X$.\LL{}\label{l:ABALLSCVL:WriteLL}\;
      $X$.\SC{$x$}\label{l:ABALLSCVL:SC}\;
    \end{method} 
  \end{minipage}
  \hspace{.25cm}
  \begin{minipage}[t]{.5\textwidth -0.25cm}       
    \local
    $old=\bot$\\    
    \begin{method}{DRead$_q$()} 
      \IlIf{$X$.\VL{}\label{l:ABALLSCVL:VL}}{
        \Return{$(old,\False)$}
      }\;
      $old:=X.\LL{}$\label{l:ABALLSCVL:ReadLL};\quad \Return{$(old,\True)$}\;
    \end{method} 
  \end{minipage}
  \caption{Implementation of an ABA-detecting register from LL/SC/VL.
    For the ease of description but w.l.o.g.\ we assume that, even if $q$ has not called $X$.\FuncSty{LL()}, an $X$.\FuncSty{VL()} call by process $q$ returns \KwSty{True} as long as no successful $X$.\FuncSty{SC()} has been executed.
    \label{fig:ABA-Detecting-from-LL/SC/VL}}
\end{classfigure}
\begin{theorem}
  There is an implementation of an ABA-Detecting register from a single LL/SC/VL object, such that each \DRead{} and \DWrite{} operation takes only two shared memory steps.
\end{theorem}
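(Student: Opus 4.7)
The plan is to verify two things: the step-complexity bound, and linearizability of the implementation in \cref{fig:ABA-Detecting-from-LL/SC/VL}.

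The step-complexity claim is immediate from the code: a \DWrite{} performs exactly one \LL{} and one \SC{}, while a \DRead{} performs one \VL{} (when it returns \True) or one \VL{} followed by one \LL{} (when \VL{} returns \False). Hence each operation uses at most two shared memory steps.

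For linearizability, I would assign linearization points as follows. A \DWrite{} whose \SC{} succeeds is linearized at that \SC{}. A \DWrite{} whose \SC{} fails is linearized immediately before the successful \SC{} on $X$ that caused the failure; by the LL/SC specification such a successful \SC{} exists and is performed strictly between this \DWrite{}'s \LL{} and its failing \SC{}, so the chosen point lies within the operation's interval. A \DRead{} is linearized at its \VL{} (\cref{l:ABALLSCVL:VL}) when \VL{} returns \True, and at the \LL{} on \cref{l:ABALLSCVL:ReadLL} when \VL{} returns \False. With these choices, I then check two obligations for each \DRead{} $r$ by process $q$: (i) the returned value equals the last value written by a linearized \DWrite{} (or $\bot$ initially); and (ii) the Boolean flag is \True iff some \DWrite{} is linearized strictly between $q$'s previous \DRead{}'s linearization point and that of $r$.

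Obligation (i) is easy: when \VL{} returns \False, $r$ refreshes $old$ by an \LL{}, whose value is exactly what $X$ holds at the linearization point; when \VL{} returns \True, no successful \SC{} has been performed on $X$ since $q$'s last \LL{}, so $X$ still holds the value $q$ read into $old$ at that \LL{}. Obligation (ii) is the main obstacle and requires a case split on whether $q$'s previous \DRead{} had \VL{} return \True (in which case ``$q$'s last \LL{} on $X$'' strictly precedes the linearization of the previous \DRead{}) or \False (in which case the two coincide). The key observation that makes both cases work is that every linearized \DWrite{} coincides with, or is placed immediately before, a successful \SC{} on $X$; combined with the semantics of \VL{}, this lets me translate ``\VL{} returned \True'' into ``no \DWrite{} linearized in the relevant window,'' and ``\VL{} returned \False'' into the existence of such a \DWrite{}. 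The subcase where the previous \DRead{} did not perform an \LL{} requires the extra step of using that its \VL{} returned \True to push any successful \SC{} since $q$'s last \LL{} past that \VL{}, and hence past the previous \DRead{}'s linearization point. The footnote about the initial behaviour of \VL{} handles the base case, since $old=\bot$ matches the initial value of $X$.
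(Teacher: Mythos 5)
Your proposal is correct and follows essentially the same route as the paper's proof: identical linearization points (successful \SC{} for a successful \DWrite{}, the point just before the first successful \SC{} after the \LL{} for a failed one, and the last shared-memory step of each \DRead{}), the same decomposition into checking the value component and the Boolean component, and the same use of the \VL{} semantics plus the placement of \DWrite{} linearization points at successful \SC{} operations to relate ``\VL{} returned \True{}'' to the absence of intervening linearized \DWrite{} operations. The paper packages your ``key observation'' as an explicit invariant about valid links, but the substance is the same.
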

\begin{proof}
  It suffices to show that the implementation in \cref{fig:ABA-Detecting-from-LL/SC/VL} is linearizable.
  
  Consider a history $H$ on the implemented ABA-Detecting register.  
  We assume w.l.o.g.\ that all operations in $H$ complete.
  
  We say a \VL{} operation \emph{succeeds}, if it returns \True.
  Note that we assumed w.l.o.g.\ (see the description of \cref{fig:ABA-Detecting-from-LL/SC/VL}) that the first $X$.\VL{} call by a process $q$ succeeds, even if $q$ has not called $X$.\LL{}, provided that no $X$.\SC{} call has been executed, either. 
  Therefore, for the purpose of this proof we may assume w.l.o.g.\ that the history $H$ starts with $n$ complete $X$.\LL{} operations, one by each process.  
  We say a process has a \emph{valid link} on $X$, whenever no successful $X$.\SC{} has occurred since $p$'s last $X$.\LL{} operation.

  For each \DWrite{} and \DRead{} operation $op$ by process $p$ respectively $q$, we define a point in time, $\lin{op}$, as follows.
  If $op$ is a \DWrite{} by process $p$, and $p$'s \SC{} in \cref{l:ABALLSCVL:SC} is successful, then $\lin{op}$ is the point when that successful \SC{} gets executed.
  If the \SC{} is unsuccessful, then $\lin{op}$ is the point immediately before the first successful \SC{} that gets executed after $p$'s \LL{} in \cref{l:ABALLSCVL:WriteLL} (such an \SC{} must occur before $p$'s \SC{} in \cref{l:ABALLSCVL:SC}, because that \SC{} by $p$ fails).
  If $op$ is a \DRead{} by $q$, then $\lin{op}$ is the point of the last shared memory operation executed by $q$ during $op$ (which is either $X$.\VL{} in \cref{l:ABALLSCVL:VL} or  $X$.\LL{} in \cref{l:ABALLSCVL:ReadLL}).
  
  We prove below that $\lin{}$ maps each operation $op$ to a linearization point of $op$; therefore, we say that $op$ linearizes at $\lin{op}$.
  Each point $\lin{op}$ occurs between the invocation and the response of $op$.
  It suffices to show that the history obtained by ordering all operations in $H$ by $\ell(op)$ is valid.
  Note that every \DWrite{} operation linearizes either at or immediately before the point of some successful \SC{}.
  Therefore, at any point the value of $X$ is equal to the value of the \DWrite{} operation that linearized last.
  
  Now consider a \DRead{} operation $op$ by process $q$.
  Initially, the value of $old$ equals the value of $X$, and $q$ has a valid link on $x$.
  It follows from the structure of the \DRead{} operation that the following invariant is maintained:
  At any point throughout $H$, $q$ has a valid link on $X$ if and only if no successful $X$.\SC{} has been executed on $X$ since $q$'s last $X$.\LL{} or its last successful $X.\VL{}$, whichever came later.
  Since a \DWrite{} operation linearizes at some point $t$ if and only if a successful $X$.\SC{} operation is executed at point $t$, and a \DRead{} operation linearizes with its successful $X$.\VL{} or, in case of an unsuccessful $X$.\VL{}, with its $X$.\LL{}, we obtain:
  Process $q$ has a valid link on $X$ if and only if no \DWrite{} has linearized since $q$'s last $X$.\DRead{} operation linearized (or since the beginning of the history, if none of $q$'s \DRead{} operations have linearized, yet).
  
  Now suppose $op$ linearizes with the $X$.\VL{} operation in \cref{l:ABALLSCVL:VL}, i.e., that \VL{} operation is successful.
  Then $q$ returns $(old,\False)$.
  The second component, \False, is correct because at $\lin{op}$ process $q$ has a valid link on $X$, i.e., no \DWrite{} operation has linearized since the linearization point of $q$'s preceding \DRead{} operation (or since the beginning of the execution, if $op$ is $q$'s first \DRead{} operation).
  The first component, $old$, is also correct, because $q$ has a valid link on $X$, which means that the value of $X$ cannot have changed since $q$'s last \LL{} in which it obtained the value of $old$ from $X$ (or since the beginning of the execution, when $X=old=\bot$).
  
  Finally, suppose $op$ linearizes with the $X$.\LL{} operation in \cref{l:ABALLSCVL:ReadLL}, i.e., the preceding $X$.\VL{} operation in \cref{l:ABALLSCVL:VL} fails.
  Then $op$ returns in \cref{l:ABALLSCVL:ReadLL}.
  Moreover, $q$ has no valid link at the point of that $X$.\VL{}, and thus also not immediately before $\lin{op}$.
  Hence, a \DWrite{} operation has linearized since the linearization point of $q$'s preceding \DRead{} operation (or since the beginning of the execution), and so the second component of the return value, \True, is correct.
  The first component of the return value is the value of $X$ at $\lin{op}$, which is also correct.  
\end{proof}

\section{Additional Details on the Lower Bound Proofs}\label{sec:lower_bound_additional_proofs}
\subsection{Proof of \cref{obs:clean-dirty-distinguishability}}\label{sec:proof:obs:clean-dirty}
\begin{proof}
  For the purpose of a contradiction, suppose $C_1\indist_p C_2$.
  Let $\alpha_i$, $i\in\set{1,2}$, be the schedules such that $C_{init}\yields{\alpha_i} C_i$, and
  \begin{itemize}  
    \item in $E_1:=\Exec(C_{init},\alpha_1)$ $p$ executes at least one complete \WRead{}, and the last one, $r^\ast$, happens after any \WWrite{}; and
    \item in $E_2:=\Exec(C_{init},\alpha_2)$ there is a complete \WWrite{} $w^\ast$ (by process 0) that overlaps with no \WRead{} by $p$, and $p$ invokes no \WRead{} after $w^\ast$.
  \end{itemize}
  
  By the nondeterministic solo-termination property, there exists a $p$-only schedule $\lambda$, such that in $\Exec(C_2,\lambda)$ process $p$ completes exactly one \WRead{} method call $r$. 
  Then in $E_2\circ \Exec(C_2,\lambda)$ the \WWrite{} $w^\ast$ happens before $r$ and after any preceding \WRead{} by $p$.
  Therefore, by the specification of \WWrite{} and \WRead{}, operation $r$ returns $\True$.
  
  Since $C_1\indist_p C_2$, process $p$ also completes its \WRead{} $r$ with return value \True during $\Exec(C_1,\lambda)$.
  But since $C_1$ is $p$-clean, there is now a complete \WRead{} operation $r^\ast$ by process $p$ that precedes $r$ in $\Exec(C_{init},\alpha_1\circ \lambda)$, and any \WWrite{} operation happens before $r^\ast$.
  By the specification of \WWrite{} and \WRead{}, operation $r$ returns \False---a contradiction.
\end{proof}

\subsection{Additional Details on the Proof of \cref{thm:mainlb:ABA_detecting}}\label{sec:proof:mainlb:ABA_detecting}
  Parts~(b) and (c) of \cref{thm:mainlb:ABA_detecting} follow almost immediately from \cref{lem:caslb_main}, as follows.
  
    By \cref{lem:caslb_main}, for $P=\set{1,\dots,n-1}$, there is a reachable configuration $C$ which has a $P$-successful schedule, and for every object $R$, $|\WCov(C,R)\cap P|\leq t$ and $|\CCov(C,R)\cap P|\leq t$.
    Hence, if the implementation uses $m$ writable CAS objects, then in $C$ at most $2t$ process are poised to access each CAS object, and thus
    \begin{displaymath}
    2mt\geq n-1.
    \end{displaymath}
    Similarly, if each of the $m$ base objects supports, in addition to \Read{}, only one of the two operations, \CAS{} or \Write{}, then in $C$ at most $t$ processes are poised to access each object, and so
    \begin{displaymath}
    mt\geq n-1.
    \end{displaymath}
    The result now follows immediately by solving for $m$.  

\subsection{Lower Bounds for Implementations of LL/SC}\label{sec:lb_LLSC}
We can easily modify the lower bounds for implementations of ABA-detecting registers to obtain the same lower bounds for implementations of LL/SC objects from bounded CAS objects and registers.

Consider a linearizable implementation of the methods \SC{} and \LL{}.
Given a process $p\in\PP\setminus\set{0}$, we define $p$-clean and $p$-dirty configurations in almost the same way as for \WRead{} and \WWrite{}:
Configuration $C$ is $p$-clean, if there exists a schedule $\alpha$, $C_{init}\yields{\alpha} C$, such that $\Exec(C_{init},\alpha)$ contains a complete \LL{} operation $r^\ast$ by $p$, and any (successful or unsuccessful) \SC{} happens before $r^\ast$.
Configuration $C$ is \emph{$p$-dirty}, if there exists a schedule $\alpha$, $C_{init}\yields{\alpha} C$, such that $\Exec(C_{init},\alpha)$ contains a successful \SC{} $w^\ast$, and no \LL{} by $p$ is pending at any point after $w^\ast$ has been invoked.

We observe analogously to \cref{obs:clean-dirty-distinguishability}:
\begin{observation}\label{obs:LL:clean-dirty-distinguishability}
	Suppose the \LL{} and \SC{} methods satisfy nondeterministic solo-termination.
	For any process $p\in\PP\setminus\set{0}$ and any two reachable configurations $C_1,C_2$, if $C_1$ is $p$-clean and $C_2$ is $p$-dirty, then $C_1\nindist_p C_2$ .
\end{observation}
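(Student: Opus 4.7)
The plan is to mirror the proof of \cref{obs:clean-dirty-distinguishability} line by line, replacing every occurrence of \WRead{} by \LL{} and every \WWrite{} by a successful \SC{}. Suppose for contradiction that $C_1\indist_p C_2$, and fix schedules $\alpha_1$ and $\alpha_2$ with $C_{init}\yields{\alpha_i}C_i$ that witness the defining properties: $E_1:=\Exec(C_{init},\alpha_1)$ contains a complete \LL{} $r^\ast$ by $p$ after which no \SC{} is executed, while $E_2:=\Exec(C_{init},\alpha_2)$ contains a successful \SC{} $w^\ast$ after whose invocation $p$ has no pending \LL{}.

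Using nondeterministic solo-termination, I would construct a $p$-only schedule $\lambda$ in which $p$, starting from $C_2$, completes a single \SC{$x$} call for some fresh value $x$ (invoking \SC{$x$} first if $p$ is idle in $C_2$). Any completed \LL{} by $p$ in $E_2$ must, by the definition of $p$-dirty, have finished no later than the invocation of $w^\ast$, so $w^\ast$ is a successful \SC{} that linearizes strictly between $p$'s most recent completed \LL{} and the response of this new \SC{$x$}. The sequential specification of \LL{}/\SC{} therefore forces this \SC{$x$} to return $\False$ in $E_2\circ\Exec(C_2,\lambda)$.

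By $C_1\indist_p C_2$ and the fact that $\lambda$ is $p$-only, process $p$ takes exactly the same steps and observes exactly the same return values in $\Exec(C_1,\lambda)$ as in $\Exec(C_2,\lambda)$; in particular, $p$'s \SC{$x$} also returns $\False$ in $E_1\circ\Exec(C_1,\lambda)$. But in this execution no \SC{} linearizes after $r^\ast$: the suffix $\lambda$ is $p$-only, and the $p$-cleanness of $C_1$ rules out any \SC{} after $r^\ast$ in $E_1$. Consequently $r^\ast$ remains $p$'s latest \LL{} with no intervening successful \SC{} before the response of $p$'s new \SC{$x$}, so the specification forces \SC{$x$} to return $\True$---a contradiction.

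The main obstacle is that, unlike \WRead{}, the \LL{} method has no explicit detection flag, so the distinguishing signal between $p$-clean and $p$-dirty must flow through the success bit of an \SC{} performed after the configuration. Care is needed so that the schedule $\lambda$ produced by nondeterministic solo-termination does not let $p$ execute a fresh \LL{} before its \SC{$x$}, since that would reinstate a valid link in the $p$-dirty scenario and kill the contradiction. This is handled by the canonical program convention used for \cref{cor:mainlb:LLSC}: every process $p\neq 0$ repeatedly alternates \LL{} and \SC{} calls, so that at $C_1$ and $C_2$ the state of $p$ is precisely ``between a completed \LL{} and its paired \SC{}'', and $\lambda$ only needs to drive that pending \SC{} to completion.
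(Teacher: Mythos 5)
Your proposal is correct and follows essentially the same route as the paper's proof: assume $C_1\indist_p C_2$, use solo-termination to run a $p$-only continuation in which $p$ completes a single \SC{} without a fresh intervening \LL{}, and derive a contradiction from the fact that this \SC{} must succeed after the $p$-clean history but fail after the $p$-dirty one. The only cosmetic difference is that the paper simply invokes correctness of the implementation for the program in which $p$ calls \SC{$y$} from the (necessarily idle) configurations $C_1$ and $C_2$, whereas you route the same point through the canonical-program convention; both are fine.
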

\begin{proof}
  For the purpose of a contradiction, suppose $C_1\indist_p C_2$.
  Let $\alpha_i$, $i\in\set{1,2}$, be the schedules such that $C_{init}\yields{\alpha_i} C_i$, and
  \begin{itemize}  
    \item in $E_1:=\Exec(C_{init},\alpha_1)$ $p$ executes at least one complete \LL{}, and the last one, $r^\ast$, happens after any \SC{}; and
    \item in $E_2:=\Exec(C_{init},\alpha_2)$ there is a complete successful \SC{} $w^\ast$ that overlaps with no \LL{} by $p$, and $p$ invokes no \LL{} after $w^\ast$.
  \end{itemize}
  
  Then process $p$ must be idle in configuration $C_2$, and thus also in configuration $C_1$.
  The implementation of \LL{}/\SC{} must remain correct, if starting in configuration $C_1$ respectively $C_2$ process $p$ executes a \SC{$y$} call, where $y$ is an arbitrary value.
  In a long enough $p$-only execution, that \SC{$y$} call must complete, but since $C_1$ and $C_2$ are indistinguishable, the resulting $p$-only executions starting in $C_1$ respectively $C_2$ must be the same. 
  I.e., there is a $p$-only execution $E$ such that $E_1\circ E$ and $E_2\circ E$ are also executions, and in $E$ process $p$ completes exactly one \SC{} method $s^\ast$.
  In $E_1\circ E$, all \SC{} operations except for $s^\ast$ terminate before $p$'s last \LL{}, $r^\ast$, and that \LL{} is followed by $p$'s \SC{}, $s^\ast$.
  By the semantics of LL/SC, $s^\ast$ succeeds.
  On the other hand, in $E_2\circ E$ a successful \SC{} $w^\ast$ happens before $p$'s \SC{} $s^\ast$ and no \LL{} by $p$ is pending at any point after the invocation of $w^\ast$.
  Hence, $s^\ast$ fails, which is a contradiction.
\end{proof}

Now, in the proofs of \cref{lem:registerlb_main,lem:caslb_main}, we can simply replace every occurrence of \WRead{} 
with \LL{}
, 
and every occurrence of \WWrite{} with a \LL{} followed by an \SC{}.
With those replacements, every step of the proofs of \cref{lem:registerlb_main,lem:caslb_main} holds vacuously.
(Observe, that in the execution constructed in the original proofs, only process 0 calls \WWrite{}, so if we make the replacements as described, each \SC{} must succeeds.)
This yields \cref{cor:mainlb:LLSC}.
%

	\section{Proof of \cref{Thm:ABAReg-from-Reg}}\label{sec:proof:ABAReg-from-reg}
To prove \cref{Thm:ABAReg-from-Reg} it is enough to show that the implementation of the ABA-detecting register in \cref{Fig:ABA-DetectingRegister} is linearizable.

In every line of the code, at most one shared memory operation is executed.
Consider some history $H$ in which processes execute \DRead{} and \DWrite{} operations.
For any operation $m$ by process $p$ and any line number $k$, we let $t^m_{k}$ denote the point in time when $p$ executes its shared memory operation in line $k$ during $m$.
Further, \inv{m} and \rsp{m} denote the points in time of the invocation respectively response of $m$.
We say operation $m$ \emph{completes} in some interval $[t,t']$, if $[\inv{m},\rsp{m}]\subseteq[t,t']$. 

Consider some history $H$ on the ABA-detecting register.
We define the linearization point $\lin{m}$ of each operation $m$ as follows.
A \DWrite{} operation $dw$ in $H$ linearizes when the value of $X$ is updated in $\cref{l:Write:writeX}$ of $dw$ (i.e.~$\lin{dw}=t^{dw}_{\ref{l:Write:writeX}}$).
For a \DRead{} operation $dr$ by some process $q$, we define $\lin{dr}=t^{dr}_{\ref{l:Read:readX}}$ if $b=\True$ at $\rsp{dr}$, and otherwise $\lin{dr}=t^{dr}_{\ref{l:Read:readX'}}$. 
Clearly the linearization point of each operation is between the invocation and response of the operation.
It remains to show that the history $S_H$ obtained by ordering all operations by their linearization points is valid.
For that, we first prove the following auxiliary claims.

\begin{claim}\label{Cl:invariant}
  For every complete $\DRead{}$ operation $dr$ in which process $q$ reads $(x,p,s)$ in \cref{l:Read:readX} and $(x',p',s')$ in \cref{l:Read:readX'},
   if $b=\True$ at \rsp{dr}, then some process writes to $X$ during $[\lin{dr},\rsp{dr}]$, and otherwise at \lin{dr}, we have  $A[q]=(p,s)=(p',s')$ and $(x,p,s)=(x',p',s')$.
\end{claim}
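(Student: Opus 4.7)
The plan is to analyse the claim by separating the two cases distinguished by the value of the local flag $b$ at $\rsp{dr}$, since the definition of $\lin{dr}$ branches on exactly this condition. The key preliminary observation is that the code itself dictates the value of $b$ at $\rsp{dr}$: by inspection of \crefrange{l:Read:X=X'}{l:Read:setbTrue}, process $q$ sets $b \gets \False$ precisely when $(x,p,s)=(x',p',s')$ and $b \gets \True$ otherwise. So the two cases of the claim correspond exactly to the two possible outcomes of the equality test in \cref{l:Read:X=X'}, and in particular the linearization rule from the definition of $\ell(\cdot)$ lines up cleanly with each case.

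For the first case, suppose $b=\True$ at $\rsp{dr}$. Then $(x,p,s) \neq (x',p',s')$ and $\lin{dr} = t^{dr}_{\ref{l:Read:readX}}$. The first \Read{} of $X$ occurs at $\lin{dr}$ and returns $(x,p,s)$, while the second \Read{} of $X$ occurs at $t^{dr}_{\ref{l:Read:readX'}} \in [\lin{dr},\rsp{dr}]$ and returns a different triple. Since the value of $X$ changes only when some process executes \cref{l:Write:writeX}, there must exist such a write step lying in the open interval between these two reads, and hence in $[\lin{dr},\rsp{dr}]$, which gives the conclusion.

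For the second case, suppose $b=\False$ at $\rsp{dr}$. Then by the equality test we immediately have $(x,p,s)=(x',p',s')$, which in particular yields $(p,s)=(p',s')$. By definition $\lin{dr} = t^{dr}_{\ref{l:Read:readX'}}$. It remains to verify $A[q]=(p,s)$ at this point. Process $q$ executes $A[q].\Write{p,s}$ in \cref{l:Read:writeA[q]} strictly before its second read of $X$, so $A[q]=(p,s)$ immediately after that write. Because only $q$ writes to $A[q]$, and $q$ is executing the single-threaded block between \cref{l:Read:writeA[q]} and \cref{l:Read:readX'} (no intervening \Write{} to $A[q]$ by $q$), the value of $A[q]$ remains $(p,s)$ at $\lin{dr}$.

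I do not expect any real obstacle here: the claim is essentially a code-walk through the \DRead{} method, combined with the invariant that $A[q]$ is single-writer. The only subtlety to be careful about is making sure the correspondence between the branches of the definition of $\lin{dr}$ (which depends on $b$ at $\rsp{dr}$) and the branches of the assignment to $b$ (\cref{l:Read:setbFalse} vs.\ \cref{l:Read:setbTrue}) is spelled out explicitly, so that the case analysis is justified and the two cases are in fact exhaustive and disjoint. All subsequent claims in the proof of linearizability will rely on this lemma, so precision here is worthwhile even though the argument itself is short.
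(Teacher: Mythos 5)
Your proof is correct and follows essentially the same route as the paper's: both arguments tie the value of $b$ at $\rsp{dr}$ to the branch taken at \cref{l:Read:X=X'}, derive the write to $X$ from the two reads returning different triples in the \True{} case, and use the fact that only $q$ writes $A[q]$ (and does so at \cref{l:Read:writeA[q]} before the second read) in the \False{} case. No gaps.
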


\begin{proof}
First suppose $b=\True$ at \rsp{dr}, and thus $\lin{dr}=t^{dr}_{\ref{l:Read:readX}}$.
This implies that $q$ executes \cref{l:Read:setbTrue} of $dr$, and so $(x,p,s)\neq(x',p',s')$.
I.e., $q$ reads different triples from $X$ in \cref{l:Read:readX} and in \cref{l:Read:readX'}.
Therefore, a process writes to register $X$ during $[t^{dr}_{\ref{l:Read:readX}},t^{dr}_{\ref{l:Read:readX'}}]\subseteq[\lin{dr},\rsp{dr}]$.

Now suppose $b=\False$ at \rsp{dr}, and thus $\lin{dr}=t^{dr}_{\ref{l:Read:readX'}}$.
It also implies that $q$ executes \cref{l:Read:setbFalse} of $dr$ and hence $(x,p,s)=(x',p',s')$.
Process $q$ writes the pair $(p,s)$ to $A[q]$ at $t^{dr}_{\ref{l:Read:writeA[q]}}$ and it does not change it before $t^{dr}_{\ref{l:Read:readX'}}$. 
Thus, $A[q]=(p,s)=(p',s')$ at $t^{dr}_{\ref{l:Read:readX'}}=\lin{dr}$.
\end{proof}

\begin{claim}\label{Cl:nGetSeq}
Consider two \GetSeq{} operations $gs_1$ and $gs_2$ by the same process $p$, which both return the same value $s$.
Then $p$ completes at least $n$ \GetSeq{} calls between $gs_1$ and $gs_2$.
\end{claim}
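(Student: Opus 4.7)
The plan is to exhibit an invariant on process $p$'s local queue $usedQ$ that forces any two \GetSeq{} calls by $p$ returning the same value to be separated by many intermediate calls. The queue has fixed capacity $n+1$, and each \GetSeq{} call executes exactly one \enq{} followed by one \deq{} (\cref{l:GetSeq:enqS,l:GetSeq:deqS}), so $usedQ$ behaves as a FIFO buffer of size $n+1$: one value is appended to the tail and one is evicted from the head per call.

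First I would pinpoint that at \cref{l:GetSeq:chooseS}, process $p$ explicitly excludes every element currently stored in $usedQ$ from the set of candidate return values. Then I would show that when $gs_1$ returns $s$, the execution of \crefrange{l:GetSeq:enqS}{l:GetSeq:deqS} inside $gs_1$ leaves $s$ at the tail of $usedQ$.

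Next I would prove, by a straightforward induction on $i \in \{0,1,\dots,n\}$, that at the moment $p$ executes \cref{l:GetSeq:chooseS} during its $(i+1)$-th \GetSeq{} call after $gs_1$, the value $s$ still occupies position $n+1-i$ of $usedQ$ (counting from the head). The base case $i=0$ follows from the observation about the end of $gs_1$. For the inductive step, any intervening \GetSeq{} call shifts every element of $usedQ$ one slot toward the head (via its \deq) and inserts a fresh value at the tail (via its \enq), so $s$ moves one position toward the head but is not evicted as long as $i<n+1$.

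Combining these facts, during each of the first $n+1$ \GetSeq{} calls of $p$ following $gs_1$, the value $s$ lies in $usedQ$ at the moment the choice of the return value is made, so none of those calls can return $s$. Consequently, $gs_2$ can be no earlier than the $(n+2)$-th \GetSeq{} call of $p$ after $gs_1$, which means at least $n+1 \ge n$ complete \GetSeq{} calls of $p$ occur strictly between $gs_1$ and $gs_2$. I do not anticipate a serious obstacle: the argument is a single-process local invariant on $usedQ$ and is completely independent of the behaviour of the other processes, of $A[\cdot]$, and of $na$.
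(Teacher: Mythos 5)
Your proposal is correct and follows essentially the same route as the paper: the paper's (much terser) proof likewise observes that $s$ is enqueued into the length-$(n+1)$ FIFO $usedQ$, that \cref{l:GetSeq:chooseS} forbids choosing any value still in the queue, and that each \GetSeq{} call dequeues exactly one element, so $s$ cannot be reused before $n+1$ further calls. Your explicit position-tracking induction just spells out the eviction count that the paper leaves implicit.
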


%
\begin{proof}
  This follows from the fact that before a process returns a sequence number $s$ in a \GetSeq{} call, it enqueues it in~\cref{l:GetSeq:enqS} in the queue $usedQ$ of length $n+1$.
  After that, according to \cref{l:GetSeq:chooseS}, it does not choose $s$ again until $s$ has been removed from the queue, and in every \GetSeq{} call only one element gets dequeued (in \cref{l:GetSeq:deqS}).
\end{proof}

\begin{claim}\label{Cl:announcedAtLinPoint}
Suppose  $X=(x,p,s)$ at some point $t$ for some triple $(x,p,s)$, and $A[q]=(p,s)$ throughout $[t,t']$, for some $t'\geq t$. 
Then, $p$ does not write $(x',p,s)$ into $X$ during $(t,t']$, for any value of $x'$.
\end{claim}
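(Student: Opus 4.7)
The plan is to prove \cref{Cl:announcedAtLinPoint} by contradiction. Suppose that $p$ writes $(x',p,s)$ into $X$ at some time $t''\in(t,t']$, inside a \DWrite{} call $dw_2$ whose \GetSeq{} sub-call $gs_2$ returns $s$. Since $X=(x,p,s)$ at $t$ and only $p$ ever writes triples with process id $p$ into $X$, there exists a last \DWrite{} $dw_1$ by $p$ that wrote $(x,p,s)$ to $X$ at some time $\tau_1^w\leq t$; let $gs_1$ be its \GetSeq{} sub-call, which also returned $s$. The goal is to derive that at the moment $gs_2$ picks $s$ in \cref{l:GetSeq:chooseS}, the pair $(q,s)$ still lies in $p$'s local set $na$, contradicting the fact that $s$ is excluded from the candidate set whenever some $(j,s)\in na$.

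The first step is to argue that $X=(x,p,s)$ continuously throughout $[\tau_1^w,t]$, so that \emph{no} write to $X$ occurs in $(\tau_1^w,t]$. If any process wrote to $X$ in this interval, then since only $p$ writes triples with process id $p$, restoring $X$ to $(x,p,s)$ by time $t$ would require a further \DWrite{} of $(x,p,s)$ by $p$ inside $(\tau_1^w,t]$, contradicting the choice of $dw_1$ as the latest such write. An immediate consequence is that every intermediate \DWrite{} call $dw'_k$ by $p$ between $dw_1$ and $dw_2$ must execute its $X.\Write{}$ step in \cref{l:Write:writeX} strictly after $t$.

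Next, I invoke \cref{Cl:nGetSeq} to obtain $N\geq n$ intermediate \GetSeq{} calls $g'_1,\dots,g'_N$ by $p$ that complete between $gs_1$ and $gs_2$. By the sequentiality of $p$'s program, for each $k\geq 2$ the read of $A[c]$ in $g'_k$ occurs after the preceding intermediate write $dw'_{k-1}.X.\Write{}$, which is strictly after $t$ by the previous paragraph; similarly, the read of $A[c]$ inside $gs_2$ occurs after $dw'_N$'s write to $X$ and hence after $t$. All of these read times are bounded above by $gs_2$'s choice of $s$, which is in turn bounded by $t''\leq t'$. Thus the $N$ \GetSeq{} calls $g'_2,\dots,g'_N,gs_2$ perform their reads of $A[c]$ at times lying in $(t,t']$, and since their scan indices form $N\geq n$ consecutive residues modulo $n$, they cover every value in $\{0,\dots,n-1\}$, including $q$.

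Let $gs^\ast$ be the \emph{last} \GetSeq{} among $g'_2,\dots,g'_N,gs_2$ whose scan index equals $q$, and let $\tau^\ast\in(t,t']$ be the time at which $gs^\ast$ reads $A[q]$. By the hypothesis $A[q]=(p,s)$ throughout $[t,t']$, this read returns $(p,s)$, so the update in \cref{l:GetSeq:addtoAnn} removes any prior $(q,\cdot)$ entry from $na$ and inserts $(q,s)$. By the maximality of $gs^\ast$, no \GetSeq{} call by $p$ between $gs^\ast$ and $gs_2$'s choice processes scan index $q$, so $(q,s)$ remains in $na$ at the moment $gs_2$ reaches \cref{l:GetSeq:chooseS}; but then $s$ is explicitly excluded from the candidate set, contradicting the assumption that $gs_2$ chose $s$. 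The main obstacle here is the careful timing argument in the third paragraph: one has to rule out scenarios in which an intermediate \GetSeq{} read of $A[q]$ sneaks into $(\tau_1^w,t]$ and sees an ``old'' value of $A[q]$, and this is handled precisely by using sequentiality of $p$ together with the fact that every intermediate write to $X$ is forced past~$t$.
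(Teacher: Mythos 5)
Your proposal is correct and follows essentially the same route as the paper's proof: identify the \GetSeq{} that produced the $s$ currently in $X$, apply \cref{Cl:nGetSeq} to obtain at least $n$ intervening \GetSeq{} calls, show that all but the first of them run inside the window where $A[q]=(p,s)$ (the paper phrases this as ``at most one additional \GetSeq{} completes before $t$,'' you derive it from the fact that no write to $X$ can occur between $dw_1$'s write and $t$), and then use the round-robin scan of $A$ to force $(q,s)$ into $na$ before the offending \GetSeq{} chooses $s$. The two timing justifications are interchangeable, so this counts as the same argument.
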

\begin{proof}
As $X=(x,p,s)$ at $t$, $p$ must have written that triple to $X$ before $t$ in a \DWrite{$x$} call.
Let $gs$ be the \GetSeq{} operation that $p$ executed during that \DWrite{$x$} call.
Then $gs$ responds before $t$ and returns $s$, and $p$ can complete at most one additional \GetSeq{} operation after $gs$ and before $t$ (this may happen during its following \DWrite{} call, just before it writes to $X$ again). 
Note that not more than one \GetSeq{} operation can be invoked after $gs$ and before $t$.

Suppose for the sake of contradiction that $p$ writes $(x',p,s)$ to $X$ during $(t,t']$ in \cref{l:Write:writeX} of some \DWrite{} operation, for some value $x'$.
Thus, $p$ completes a \GetSeq{} operation (in \cref{l:Write:getSeq} of the same \DWrite{} operation) during $[\rsp{gs},t']$, such that it returns $s$.
Let $gs'$ be the first such \GetSeq{} operation.

By \cref{Cl:nGetSeq}, $p$ completes at least $n$ \GetSeq{} operations $gs_1,\dots, gs_n$, executed in the same order, during $[\rsp{gs},\inv{gs'}]$.
As at most one \GetSeq{} operation can respond after $gs$ and before $t$, only $gs_1$ can get invoked and reponsd during $[\rsp{gs},t]$.
Thus, 
\begin{equation}\label{Eq:completeDuringtAndt'}
\text{$gs_2, \dots, gs_n, gs'$ all complete, in the same order, during $[t,t']$.}
\end{equation}
As $p$ increments its local variable $c$ by 1 modulo $n$ during each \GetSeq{} operation, $c=q$ at the invocation of some \GetSeq{} operation $gs''\in\{gs_2, \dots, gs_n, gs'\}$.
By the assumption, $A[q]=(p,s)$ throughout $[t,t']$, and therefore, by \cref{Eq:completeDuringtAndt'}, $A[q]=(p,s)$ throughout the execution of $gs''$.
Thus, $p$ reads $(p,s)$ from $A[q]$ and adds $(q,s)$ to its set $na$ in lines~\ref{l:GetSeq:readA}-\ref{l:GetSeq:addtoAnn} of $gs''$.
Process $p$ only removes $(q,s)$ from $na$, when it reads a new pair $(p,s')$, for $s'\neq s$, from $A[q]$ (lines~\ref{l:GetSeq:updateAnn}-\ref{l:GetSeq:removeFromAnn}), hence, $(q,s)$ remains in $p$'s set $na$ until some time after the value stored in $A[q]$ changes, which is after $t'$.
Therefore,
no \GetSeq{} operation by $p$ that executes \cref{l:GetSeq:chooseS} during  $[t^{gs''}_{\ref{l:GetSeq:addtoAnn}}, t']$ returns $s$.
But by \cref{Eq:completeDuringtAndt'}, $t_{\ref{l:GetSeq:chooseS}}^{gs'}\in [t^{gs''}_{\ref{l:GetSeq:addtoAnn}}, t']$, because $gs''\in\{gs_2, \dots, gs_n, gs'\}$ and $gs'$ is the last operation executed in this set.
Hence, $gs'$ does not return $s$, which contradicts the assumption.
%
\end{proof}

\begin{claim}\label{Cl:noWrite}
Consider two consecutive \DRead{} operations $dr_1$ and $dr_2$ by the same process $q$.
Suppose $b=\False$ at $\inv{dr_2}$, and the if-condition in \cref{l:Read:sameProcessSeq} of $dr_2$ evaluates to $\True$.
Then no process writes to $X$ during $[\lin{dr_1},\lin{dr_2}]$.
\end{claim}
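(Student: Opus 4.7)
The plan is to show that both endpoints of $[\lin{dr_1},\lin{dr_2}]$ see $X$ with second and third components equal to the same pair $(p_1,s_1)$, that $A[q]=(p_1,s_1)$ throughout the interval, and then to invoke \cref{Cl:announcedAtLinPoint} to forbid the only kind of write that could ever restore those components after an intervening change. Combined with the fact that the writer's ID is always placed in the second slot of the triple, this rules out any write to $X$ at all during the interval.

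First I would unpack the hypotheses. Let $(x_1,p_1,s_1)$ and $(x_1',p_1',s_1')$ denote the triples $q$ reads from $X$ in \cref{l:Read:readX} and \cref{l:Read:readX'} of $dr_1$. Because $b=\False$ at $\inv{dr_2}$, process $q$ must have executed \cref{l:Read:setbFalse} during $dr_1$, so these triples coincide and $\lin{dr_1}=t^{dr_1}_{\ref{l:Read:readX'}}$. \cref{Cl:invariant} then yields $X=(x_1,p_1,s_1)$ and $A[q]=(p_1,s_1)$ at $\lin{dr_1}$. Next let $(x_2,p_2,s_2)$ and $(r_2,s_{r_2})$ be the triples read in \cref{l:Read:readX} and \cref{l:Read:readA[q]} of $dr_2$. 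Since $q$ writes to $A[q]$ only in \cref{l:Read:writeA[q]}, the register $A[q]$ still equals $(p_1,s_1)$ at $t^{dr_2}_{\ref{l:Read:readA[q]}}$; hence $(r_2,s_{r_2})=(p_1,s_1)$, and the \True-branch at \cref{l:Read:sameProcessSeq} gives $(p_2,s_2)=(p_1,s_1)$. Therefore $X=(x_2,p_1,s_1)$ at $t^{dr_2}_{\ref{l:Read:readX}}$.

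I would then split according to the definition of $\lin{dr_2}$. If $b=\True$ at $\rsp{dr_2}$, then $\lin{dr_2}=t^{dr_2}_{\ref{l:Read:readX}}$, and the facts above already apply. If instead $b=\False$ at $\rsp{dr_2}$, then $\lin{dr_2}=t^{dr_2}_{\ref{l:Read:readX'}}$, and \cref{Cl:invariant} applied to $dr_2$ gives $A[q]=(p_2,s_2)=(p_1,s_1)$ and $X=(x_2',p_1,s_1)$ at $\lin{dr_2}$. In both subcases, $A[q]=(p_1,s_1)$ throughout $[\lin{dr_1},\lin{dr_2}]$ and $X$ has second and third components $(p_1,s_1)$ at $\lin{dr_2}$.

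Now \cref{Cl:announcedAtLinPoint} applied with $t=\lin{dr_1}$ and $t'=\lin{dr_2}$ forbids $p_1$ from writing any triple of the form $(x',p_1,s_1)$ to $X$ during $(\lin{dr_1},\lin{dr_2}]$. Suppose for contradiction that some process writes to $X$ in $[\lin{dr_1},\lin{dr_2}]$; since $\lin{dr_1}$ is a read step by $q$, such a write must lie in $(\lin{dr_1},\lin{dr_2}]$. Because the second component of every written triple equals the writer's ID, the only process that could produce a triple whose second and third components are $(p_1,s_1)$ is $p_1$ itself. But $X$ has the form $(\cdot,p_1,s_1)$ at $\lin{dr_2}$, so the last write to $X$ in the interval must be by $p_1$ with third component $s_1$, contradicting \cref{Cl:announcedAtLinPoint}. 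The main obstacle is keeping the two subcases for $\lin{dr_2}$ straight and invoking the correct instance of \cref{Cl:invariant} in each; once that is set up, \cref{Cl:announcedAtLinPoint} together with the writer-ID-in-second-slot invariant closes the argument directly.
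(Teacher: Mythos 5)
Your proof is correct and follows essentially the same route as the paper's: establish that $A[q]=(p_1,s_1)$ throughout the interval and that $X$ carries $(p_1,s_1)$ in its last two components at the relevant points, then combine \cref{Cl:announcedAtLinPoint} with the fact that every write places the writer's ID in the second slot to rule out the last write in the interval. If anything, your explicit case split on $\lin{dr_2}$ treats the subcase where the offending write falls between the two reads of $dr_2$ slightly more carefully than the paper, which argues only about the last write up to $t^{dr_2}_{\ref{l:Read:readX}}$.
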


\begin{proof}
Let $(x_1,p_1,s_1)$ and $(x_2,p_2,s_2)$ be the triples that process $q$ reads from $X$ in \cref{l:Read:readX} of $dr_1$ respectively $dr_2$.
As the value of $A[q]$ is only modified in \cref{l:Read:writeA[q]} of a \DRead{} operation by $q$, $A[q]=(p_1,s_1)$ throughout $(t^{dr_1}_{\ref{l:Read:writeA[q]}},t^{dr_2}_{\ref{l:Read:readA[q]}}]$, and so  $(r,s_r)=(p_1,s_1)$.
Since the if-condition in  \cref{l:Read:sameProcessSeq} of $dr_2$ evaluates to $\True$, $(p_1,s_1)=(p_2,s_2)$.
So let $p=p_1=p_2$ and $s=s_1=s_2$.

Then process $q$ writes $(p,s)$ to $A[q]$ at both $t^{dr_1}_{\ref{l:Read:writeA[q]}}$ and  $t^{dr_2}_{\ref{l:Read:writeA[q]}}$, and since $A[q]$ is not changed elsewhere, we have 
\begin{equation}\label{Lb:A[q]=(p,s)}
\text{$A[q]=(p,s)$ throughout $[t^{dr_1}_{\ref{l:Read:writeA[q]}},\rsp{dr_2}]$}.
\end{equation}

Suppose for the sake of contradiction that some process writes to $X$ during $[\lin{dr_1},\lin{dr_2}]$.
As $q$ reads $(x_2,p,s)$ from $X$ at $t^{dr_2}_{\ref{l:Read:readX}}$, the last write to $X$ during interval $[\lin{dr_1},t^{dr_2}_{\ref{l:Read:readX}}]$ must be by $p$ and it must write triple $(x_2,p,s)$ to $X$.
Process $q$ does not change the value stored in $b$ during $[\rsp{dr_1},\inv{dr_2}]$, and by the assumption $b=\False$ at $\inv{dr_2}$, thus, 
\begin{equation}\label{Lb:b=false}
b=\False \text{ at } \rsp{dr_1}.
\end{equation}
Thus, by the definition of $\ell$ we have 
\begin{equation}
\lin{dr_1}=t^{dr_1}_{\ref{l:Read:readX'}}.
\end{equation}
\Cref{Lb:b=false} also implies that if-condition in \cref{l:Read:X=X'} of $dr_1$ evaluates to \True.
Thus, $p$ reads the triple $(x_1,p,s)$ from $X$ at $t^{dr_1}_{\ref{l:Read:readX'}}$.
By~(\ref{Lb:A[q]=(p,s)}), we have $A[q]=(p,s)$ throughout $[t^{dr_1}_{\ref{l:Read:readX'}},\rsp{dr_2}]$, and so 
by Claim~\ref{Cl:announcedAtLinPoint}, $p$ does not write $(x',p,s)$ to $X$, for any value of $x'$, during $[t^{dr_1}_{\ref{l:Read:readX'}},\rsp{dr_2}]=[\lin{dr_1},\rsp{dr_2}]$, and so not during interval $[\lin{dr_1},t^{dr_2}_{\ref{l:Read:readX}}]$
---a contradiction.
\end{proof}

\begin{claim}\label{Cl:oneWrite}
Consider two consecutive \DRead{} operations $dr_1$ and $dr_2$ by some process $q$.
Suppose the if-condition in \cref{l:Read:sameProcessSeq} of $dr_2$ evaluates to $\False$.
Then a process writes to $X$ during $[\lin{dr_1},\lin{dr_2}]$.
\end{claim}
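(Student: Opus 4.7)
The plan is to show that the value of $X$ at $\lin{dr_1}$ differs from the value of $X$ at $t^{dr_2}_{\ref{l:Read:readX}}$, and since $t^{dr_2}_{\ref{l:Read:readX}}\leq\lin{dr_2}$, this will force some write to $X$ to occur in $[\lin{dr_1},\lin{dr_2}]$.

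First I would identify what $(r,s_r)$ is. Let $(x_1,p_1,s_1)$ be the triple $q$ reads from $X$ in \cref{l:Read:readX} of $dr_1$. Since $A[q]$ is written only by $q$ itself, and the last such write before $t^{dr_2}_{\ref{l:Read:readA[q]}}$ is the write in \cref{l:Read:writeA[q]} of $dr_1$ (which stores $(p_1,s_1)$), the value $(r,s_r)$ that $dr_2$ reads from $A[q]$ is exactly $(p_1,s_1)$. Let $(x,p,s)$ be the triple read from $X$ in \cref{l:Read:readX} of $dr_2$. The assumption that the condition in \cref{l:Read:sameProcessSeq} of $dr_2$ evaluates to \False then gives $(p,s)\neq(r,s_r)=(p_1,s_1)$, so the full triples $(x_1,p_1,s_1)$ and $(x,p,s)$ differ.

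Next I would pin down the value of $X$ at $\lin{dr_1}$ by a short case split on the definition of $\lin{dr_1}$. If $\lin{dr_1}=t^{dr_1}_{\ref{l:Read:readX}}$, then by definition $X=(x_1,p_1,s_1)$ at $\lin{dr_1}$. Otherwise $\lin{dr_1}=t^{dr_1}_{\ref{l:Read:readX'}}$, which happens iff $b=\False$ at $\rsp{dr_1}$; this only occurs if $dr_1$ took the \cref{l:Read:setbFalse} branch, i.e., the check in \cref{l:Read:X=X'} of $dr_1$ succeeded, so the second read of $X$ in $dr_1$ also returned $(x_1,p_1,s_1)$. Either way, $X=(x_1,p_1,s_1)$ at $\lin{dr_1}$. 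Since $X=(x,p,s)\neq(x_1,p_1,s_1)$ at $t^{dr_2}_{\ref{l:Read:readX}}\leq\lin{dr_2}$, some process must have written to $X$ strictly between $\lin{dr_1}$ and $t^{dr_2}_{\ref{l:Read:readX}}$, and this write lies in $[\lin{dr_1},\lin{dr_2}]$ as claimed. There is no real obstacle; the only subtlety is the two-case analysis for evaluating $X$ at $\lin{dr_1}$, which uses the condition defining when $b$ is reset to \False at the end of $dr_1$.
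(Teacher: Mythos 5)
Your proposal is correct and follows essentially the same route as the paper's proof: identify $(r,s_r)=(p_1,s_1)$ via the fact that only $q$ writes $A[q]$, deduce that the triples read in \cref{l:Read:readX} of $dr_1$ and $dr_2$ differ, and then handle the case $\lin{dr_1}=t^{dr_1}_{\ref{l:Read:readX'}}$ by noting that $b=\False$ at $\rsp{dr_1}$ forces the second read of $X$ in $dr_1$ to return the same triple. The only difference is presentational—you establish $X=(x_1,p_1,s_1)$ at $\lin{dr_1}$ directly in both cases, whereas the paper argues the second case by contradiction—but the substance is identical.
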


\begin{proof}
Suppose process $q$ reads some values $(x_1,p_1,s_1)$ and $(x_2,p_2,s_2)$  from $X$ in \cref{l:Read:readX} of $dr_1$ respectively $dr_2$.
Register $A[q]$ can only be modified by $q$ and only in \cref{l:Read:writeA[q]} of a \DRead{} operation, so $A[q]=(p_1,s_1)$ throughout $(t^{dr_1}_{\ref{l:Read:writeA[q]}},t^{dr_2}_{\ref{l:Read:readA[q]}}]$, and so  $(r,s_r)=(p_1,s_1)$.
By the assumption that the if-condition in  \cref{l:Read:sameProcessSeq} of $dr_2$ evaluates to $\False$, $(p_1,s_1)\neq(p_2,s_2)$.
Hence, the value $(x_2,p_2,s_2)$ gets written to $X$ during $(t^{dr_1}_{\ref{l:Read:readX}},t^{dr_2}_{\ref{l:Read:readX}})$.
If $\lin{dr_1}=t^{dr_1}_{\ref{l:Read:readX}}$, then the claim follows, as $\lin{dr_2}$ is either $t^{dr_2}_{\ref{l:Read:readX}}$ or $t^{dr_2}_{\ref{l:Read:readX'}}$.

Now suppose $\lin{dr_1}=t^{dr_1}_{\ref{l:Read:readX'}}$.
By the definition of $\ell$, $b=\False$ at $\rsp{dr_1}$.
Hence, $q$ executes \cref{l:Read:setbFalse}, and thus it reads the same triple $(x_1,p_1,s_1)$ from $X$ in \cref{l:Read:readX'}, as it did in~\cref{l:Read:readX} of $dr_1$.
Suppose for the sake of contradiction that no process writes to $X$ during $[\lin{dr_1},\lin{dr_2}]=[t^{dr_1}_{\ref{l:Read:readX'}},\lin{dr_2}]\supseteq[t^{dr_1}_{\ref{l:Read:readX'}},t^{dr_2}_{\ref{l:Read:readX}}]$.
Then $X$ remains unchanged throughout that interval, and in particular at $t_{\ref{l:Read:readX}}^{dr_2}$ process $q$ reads  $(x_1,p_1,s_1)$ from $X$, and so $(x_1,p_1,x_1)=(x_2,p_2,s_2)$---a contradiction. 
\end{proof}

Now, we prove that sequential history $S_H$ is valid. 
Consider the first \DRead{} $dr$ by some process $q$.
If no \DWrite{} linearizes before $t_{\ref{l:Read:readX'}}^{dr}$, then $X$ has its initial value, $(\bot,\bot,\bot)$, from the beginning of the execution until $t_{\ref{l:Read:readX'}}^{dr}$.
Hende, $q$ reads that triple from $X$ in \cref{l:Read:readX} and also in \cref{l:Read:readX'}, and so the if-condition in \cref{l:Read:X=X'} evaluates \True, and $d$ returns $(\bot,\False)$.
Since $\lin{dr}$ is before $t_{\ref{l:Read:readX'}}^{dr}$, and thus before any \DWrite{} linearizes, this return value is correct.
Now suppose some \DWrite{} operation linearizes before $t_{\ref{l:Read:readX'}}^{dr}$, and the last such operation uses parameter $x^\ast$.
If that happens before $t_{\ref{l:Read:readX}}^{dr}$, then $q$ reads a triple $(x^\ast,p,s)$ from $X$ in \cref{l:Read:readX}, where $(p,s)\neq(\bot,\bot)$.
But when $q$ executes \cref{l:Read:readA[q]}, $A[q]=(\bot,\bot)$, so $q$ assigns $ret$ the value $(x^\ast,\True)$ in \cref{l:Read:ret=xTrue}, and thus $dr$ later correctly returns that pair.
If the first \DWrite{} operation linearizes in $(t_{\ref{l:Read:readX}}^{dr},t_{\ref{l:Read:readX'}}^{dr})$, then $q$ reads $(\bot,\bot,\bot)$ from $X$ in \cref{l:Read:readX}, and $(x^\ast,p',s')$ in \cref{l:Read:readX'}, where $(p',s')\neq(\bot,\bot)$.
Hence, the if-condition in \cref{l:Read:sameProcessSeq} evaluates to \False, and $dr$ returns correctly $(x^\ast,\True)$.
 
Now suppose $dr$ is a \DRead{} by $q$, but it is not the first one.
For ease of notation, we write $dr_2$ instead of $dr$, and we let $dr_1$ be the \DRead{} by $q$ immediately preceding $dr_2$.
Let $(x^\ast,p^\ast,s^\ast)$ be the triple that $q$ reads from $X$ in \cref{l:Read:readX} of $dr_2$, and so $ret=(x^\ast,g)$ is the return value of $dr_2$, for some $g\in\{\True, \False\}$.
To prove that $S_H$ is a valid history, we show that
\begin{enumerate}[label=(\alph*)]
\item $X=(x^\ast,\cdot,\cdot)$ at $\lin{dr_2}$; and
\item $g=\True$ if and only if a \DWrite{} linearizes between $\lin{dr_1}$ and $\lin{dr_2}$.
\end{enumerate}

\paragraph{Proof of (a).} 
By definition, either $\lin{dr_2}=t^{dr_2}_{\ref{l:Read:readX}}$, or $\lin{dr_2}=t^{dr_2}_{\ref{l:Read:readX'}}$.
If $\lin{dr_2}=t^{dr_2}_{\ref{l:Read:readX}}$, then (a) is immediate, because $q$ reads $(x^\ast,p^\ast,s^\ast)$ from $X$ in that line.
If $\lin{dr_2}=t^{dr_2}_{\ref{l:Read:readX'}}$, then according to the definition of $\lin$, we have $b=\False$ at the response of $dr_2$.
Hence, $q$ executes \cref{l:Read:setbFalse} of $dr_2$, and thus it reads the same triple $(x^\ast,p^\ast,s^\ast)$ from $X$ in \cref{l:Read:readX'}.
It follows that $X=(x^\ast,\cdot,\cdot)$ when that happens, i.e., at $\lin{dr_2}=t^{dr_2}_{\ref{l:Read:readX'}}$.

\paragraph{Proof of (b).} 
First suppose $g=\False$.
This implies that \cref{l:Read:ret=xb} is executed during $dr_2$, and $b=\False$ at the invocation of $dr_2$.
Thus, by \cref{Cl:noWrite}, no process writes to $X$ during $[\lin{dr_1},\lin{dr_2}]$.

Now suppose $g=\True$.
Then either \cref{l:Read:ret=xb} of $dr_2$ is executed and $b=\True$ at the invocation of $dr_2$, or \cref{l:Read:ret=xTrue} of $dr_2$ is executed.
In the latter case, the if-condition in \cref{l:Read:sameProcessSeq} evaluates to \False, and so by \cref{Cl:oneWrite} a process writes to $X$ during $[\lin{dr_1},\lin{dr_2}]$.
Hence, consider the case that $b=\True$ at the invocation of $dr_2$.
Since $q$'s local variable $b$ does not change between consecutive \DRead{} method calls by $q$, we also have $b=\True$ at $\rsp{dr_1}$. 
Hence, by \cref{Cl:invariant}, a process writes to $X$ during $[\lin{dr_1},\rsp{dr_1}]\subseteq [\lin{dr_1},\lin{dr_2}]$.

	\section{Proof of \cref{Thm:LLSC-from-CAS}}\label{sec:proof:LLSCVLfromCAS}
\pwtodo{\cref{l:SC:returnFalse3} should be removed and instead the loop repeats $n+1$ times}
In this section, we prove \cref{Thm:LLSC-from-CAS} by showing that the implementation of LL/SC/VL object using CAS given in \cref{Fig:LL/SC/VL-CAS} is linearizable. 
Let \inv{m} and \rsp{m} denote the points in time of the invocation respectively response of some operation $m$.
First we show that if a process $p$ executes $n$ unsuccessful consecutive \CAS{} operations during a \LL{} or a \SC{} operation, then at least another process executes a successful \CAS{} during its \SC{} operation while the first process' \CAS{} operations fail.

\begin{claim}\label{Cl:nUnsuccessfulCAS}
Suppose a process $p$ executes n consecutive unsuccessful \CAS{} operations $c_1,\dots,c_n$ all either in \cref{l:LL:CAS} of a \LL{} or in \cref{l:SC:CAS} of a \SC{}.
Then during the time interval $I$ that starts when $p$ reads $X$ for the last time before $c_1$ (in \cref{l:LL:readX'}, respectively \cref{l:SC:readX}), and ends when $p$ finishes $c_n$, another process  executes a successful \CAS{} in~\cref{l:SC:CAS} of a \SC{} operation.
\end{claim}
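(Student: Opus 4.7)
The plan is to bound the number of successful \CAS{} operations on $X$ during the interval $I$ from two sides and derive a contradiction unless an \SC{}-line \CAS{} succeeds. Let $r_1,\dots,r_n$ denote the reads of $X$ in line~\ref{l:LL:readX'} (or line~\ref{l:SC:readX}) immediately preceding $c_1,\dots,c_n$ respectively. First I would argue the lower bound: for each $i$, between $r_i$ and $c_i$ process $p$ performs no other shared memory operation on $X$ (the only intervening step is the local bit-test in line~\ref{l:SC:ifBitIsSet}), so since $c_i$ fails, some other process must have executed a successful \CAS{} on $X$ in the open interval $(r_i,c_i)$. Because the code is a straight-line loop, these intervals are pairwise disjoint and ordered $r_1<c_1<r_2<c_2<\dots<r_n<c_n$, so they witness at least $n$ distinct successful \CAS{} operations on $X$ by processes other than $p$ during $I$.

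Next I would argue the upper bound, by contradiction: assume no successful \CAS{} in line~\ref{l:SC:CAS} of an \SC{} occurs during $I$, i.e.\ every successful \CAS{} on $X$ during $I$ lies in line~\ref{l:LL:CAS} of some \LL{}. Inspecting the code, a successful \LL{}-\CAS{} by process $q$ changes $X$ from $(x',a')$ to $(x',a'-2^q)$; this only flips $q$'s own bit from $1$ to $0$ and leaves every other bit (including $p$'s) unchanged. Therefore, within $I$: (i) $p$'s bit is never reset (no other process can flip it, and $p$'s own \CAS{} attempts $c_1,\dots,c_n$ all fail); and (ii) no bit ever goes from $0$ back to $1$, since only a successful \SC{}-\CAS{} (which we ruled out) writes the pattern $2^n-1$. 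Consequently, each process $q\neq p$ can contribute at most one successful \LL{}-\CAS{} during $I$, because after its first such success $q$'s bit is $0$ and remains $0$, so any later \LL{}-\CAS{} of the form $(x',a')\to(x',a'-2^q)$ by $q$ must fail (its compared-against value would require the $q$-th bit to be $1$).

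Combining the two bounds, the total number of successful \CAS{} operations on $X$ during $I$ is at least $n$ and at most $n-1$---contradiction. Hence at least one of them lies in line~\ref{l:SC:CAS} of an \SC{}, which is exactly what the claim asserts. The only place requiring care, and the main ``obstacle,'' is item~(ii): checking from the pseudocode that the \emph{only} write to $X$ that ever sets a previously cleared bit back to $1$ is a successful \SC{}-\CAS{}, and in particular that an \LL{}-\CAS{} by one process cannot indirectly re-enable a later \LL{}-\CAS{} by another; this follows cleanly from the observation that $a' - 2^q$ requires bit $q$ of $a'$ to be $1$, so repeated \LL{}-successes on the same bit without an intervening \SC{}-success are impossible.
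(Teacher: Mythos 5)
Your proof is correct and follows essentially the same argument as the paper's: at least $n$ successful \CAS{} operations must occur during $I$ to make all of $c_1,\dots,c_n$ fail, yet if none were in \cref{l:SC:CAS} then each of the $n$ bits of the second component of $X$ could be cleared at most once and $p$'s bit never changes, capping the count at $n-1$---a contradiction. Your explicit note that only an \SC{}-\CAS{} can set a bit back to $1$ is exactly the observation the paper relies on.
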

\begin{proof}
Let $r_i$ be the \Read{} operation $p$ executes just before it executes $c_i$ (in \cref{l:LL:readX'}, or \cref{l:SC:readX}).
Operation $c_i$ fails if only if a process executes a successful \CAS{} operation between $p$'s $r_i$ and $c_i$.
As all $c_1,\dots,c_n$ fail, $n$ successful \CAS{} operations must have happened during interval $I$.

Now suppose for the sake of contradiction, that none of these $n$ successful \CAS{} operations during $I$ was due to a \CAS{} in \cref{l:SC:CAS}.
Hence, all $n$ successful \CAS{} operations during $I$ are due to a \CAS{} in \cref{l:LL:CAS}.
Each successful \CAS{} operation by some process $q$ in \cref{l:LL:CAS} resets $q$'s bit in the second component of $X$ to 0.
The second component of $X$ has $n$ bits, and each of these $n$ bits can change to 0 at most once, as no \CAS{} in \cref{l:SC:CAS} succeeds to change any of these bits to 1.
Moreover, none of $p$'s \CAS{} operations are successful.
Hence, at most $n-1$ successful \CAS{} in \cref{l:LL:CAS} can be executed during $I$---contradiction.
\end{proof}

To prove \cref{Thm:LLSC-from-CAS}, it suffices to prove that any history $H$ on the implementation of the LL/SC/VL object given in~\cref{Fig:LL/SC/VL-CAS} is linearizable.
For each operation $m$, we define the linearization point of $m$, \lin{m}, as follows.
For an \emph{unsuccessful} \SC{} operation $sc$ (i.e.~it returns \False in any of \cref{l:SC:returnFalse1,l:SC:returnFalse2,l:SC:returnFalse3}), we define $\lin{sc}=\rsp{sc}$.
A \emph{successful} \SC{} operation $sc$ (i.e.,~it returns \True in \cref{l:SC:returnTrue}) linearizes at the point at which its \CAS{} in \cref{l:SC:CAS} succeeds.
For a \VL{} operation $vl$ by some process $p$, if it returns $\False$ (in \cref{l:VL:returnFalse}), then $\lin{vl}=\rsp{vl}$.
If $vl$ returns $\True$ (in \cref{l:VL:returnTrue}), then $vl$ linearizes at the point when $p$ reads $X$ in \cref{l:VL:readX} of $vl$.
For a \LL{} operation $ld$ by some process $p$, we let \lin{ld} be the point at which $p$ executes \cref{l:LL:readX} if $ld$ returns in either \cref{l:LL:returnX1}, or \cref{l:LL:returnX2}.
If $ld$ returns in \cref{l:LL:returnX'}, then \lin{ld} is the point at which its \CAS{} in \cref{l:LL:CAS} succeeds.
It is not hard to see that the linearization point of each operation is between its invocation and its response.
It only remains to show that the sequential history $S_H$ obtained by ordering operations in $H$ by their linearization points is valid.
For that we first prove the following auxiliary claims.

\begin{claim}\label{Cl:aSCLinearizes}
Consider some \LL{} operation $ld$ by some process $p$, such that at \rsp{ld}, process $p$'s bit in $X$ is set or $b=\True$.
Then-and-only-then some successful \SC{} operation linearizes during $(\lin{ld},\rsp{ld}]$.
\end{claim}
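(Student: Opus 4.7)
The plan is to case-split on which return line terminates $ld$ --- namely \cref{l:LL:returnX1}, \cref{l:LL:returnX'}, or \cref{l:LL:returnX2} --- and verify the biconditional in each case by combining \cref{Cl:nUnsuccessfulCAS} with one structural observation about how process $p$'s bit in the second component of $X$ can change during $(\lin{ld},\rsp{ld}]$.

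The observation I would establish first is that, throughout $(\lin{ld},\rsp{ld}]$, process $p$'s bit in $X$ can change only from $0$ to $1$, and any such change must be caused by a successful \SC{} linearizing in that interval. Three facts justify this: (i) the successful \CAS{} in \cref{l:SC:CAS} writes $2^n-1$ as the second component of $X$, which only sets bits; (ii) a successful \CAS{} in \cref{l:LL:CAS} by any process $q$ only clears $q$'s own bit; and (iii) $p$ performs no successful \CAS{} clearing its own bit in $(\lin{ld},\rsp{ld}]$ --- in the \cref{l:LL:returnX1} and \cref{l:LL:returnX2} branches $p$ executes no \CAS{} at all during $ld$, and in the \cref{l:LL:returnX'} branch $p$'s only such \CAS{} is the one at $\lin{ld}$ itself, which is excluded by the half-open interval.

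For the $(\Leftarrow)$ direction, suppose some successful \SC{} linearizes in $(\lin{ld},\rsp{ld}]$. Its \CAS{} sets $p$'s bit to $1$, and by the observation above $p$'s bit stays set until $\rsp{ld}$, so the conclusion holds. For the $(\Rightarrow)$ direction I argue case by case. If $ld$ returns in \cref{l:LL:returnX1}, then $b$ is set to \False{} in \cref{l:LL:setBFalse1} and never touched again before $\rsp{ld}$, and the if-test forces $p$'s bit to be $0$ at $\lin{ld}=t^{ld}_{\ref{l:LL:readX}}$; hence the hypothesis can only hold through $p$'s bit being set at $\rsp{ld}$, and the observation then supplies the desired successful \SC{}. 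The \cref{l:LL:returnX'} case is analogous, with $p$'s bit being $0$ immediately after $\lin{ld}$ (the successful \CAS{}). Finally, if $ld$ returns in \cref{l:LL:returnX2} then $b=\True$ at $\rsp{ld}$ by \cref{l:LL:setBTrue}, and this branch is reached only after $n$ consecutive failed \CAS{} attempts by $p$ in \cref{l:LL:CAS}; \cref{Cl:nUnsuccessfulCAS} then yields a successful \CAS{} in \cref{l:SC:CAS} --- i.e., a successful \SC{} --- occurring within the attempt window and hence within $(\lin{ld},\rsp{ld}]$, since $\lin{ld}=t^{ld}_{\ref{l:LL:readX}}$ in this case.

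The main obstacle I expect is fact~(ii) of the observation, which relies on a small auxiliary invariant that I would have to verify separately: within the else-branch of any \LL{} by $q$, the value $a'$ read in \cref{l:LL:readX'} always has $q$'s bit set, so the intended \CAS{} with $(x',a'-2^p)$ really does clear only that one bit. This holds because $q$'s bit was $1$ at \cref{l:LL:readX} (otherwise $q$ would have taken the if-branch) and, within the loop, can only become $0$ via $q$'s own \CAS{}, which would cause $q$ to exit the loop before performing another read. With this invariant in hand the three-case analysis above goes through cleanly.
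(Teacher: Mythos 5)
Your proof is correct and takes essentially the same route as the paper's: the forward direction splits on the return line exactly as the paper does (invoking \cref{Cl:nUnsuccessfulCAS} for the \cref{l:LL:returnX2} branch and the 0-to-1 transition of $p$'s bit for the other two), while your monotonicity observation combined with your $(\Leftarrow)$ direction is just the contrapositive of the paper's ``only-then'' argument. The only slip is in fact~(iii): in the \cref{l:LL:returnX2} branch $p$ does execute $n$ \CAS{} operations during $ld$, but since all of them fail they leave $X$ unchanged, so the property you actually need---that $p$ never clears its own bit in $(\lin{ld},\rsp{ld}]$---still holds.
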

\begin{proof}
First we prove the if-then statement.
The value of the local variable $b$ is updated during each \LL{} operation, just before the operation returns (\cref{l:LL:setBFalse1,l:LL:setBFalse2,l:LL:setBTrue}).
First consider the case at which $b=\True$ at \rsp{ld}.
This case can only happen if $ld$ returns in \cref{l:LL:returnX2}.
In this case all $n$ \CAS{} operations in \cref{l:LL:CAS} are unsuccessful.
By \cref{Cl:nUnsuccessfulCAS}, some process executes a successful \CAS{} operation during a \SC{} operation $sc$, while $p$ executes its $n$ unsuccessful \CAS{}. 
As in this case, \lin{ld} is when $p$ executes \cref{l:LL:readX}, operation $sc$ linearizes at its successful \CAS{} operation during $(\lin{ld},\rsp{ld}]$.

Now, suppose $b=\False$, but $p$'s bit  in $X$ is set at \rsp{ld}.
If $ld$ returns in \cref{l:LL:returnX1}, then $p$'s bit in $X$ is not set when $p$ reads $X$ in \cref{l:LL:readX} at \lin{ld}.
However, by the assumption, $p$'s bit in X is set when $ld$ responds.
This bit is only set when a \CAS{} operation succeeds during a \SC{} operation.
Hence, a process executes a successful \CAS{} during a \SC{} operation (and thus its \SC{} linearizes) during $(\lin{ld},\rsp{ld}]$.
If operation $ld$ returns in \cref{l:LL:returnX'}, $p$'s last \CAS{} operation in \cref{l:LL:CAS} of $ld$ must have been successful, and so $p$'s bit in $X$ must have changed to 0.
But by the assumption, $p$'s bit is set at \rsp{ld}, so some other process must have changed it back to 1 after $p$'s successful \CAS{}.
As \lin{ld} is when $p$'s \CAS{} succeeds, the value of $p$'s bit changes after \lin{ld} and before \rsp{ld}.
Recall that $p$'s bit is only set when a process executes a successful \CAS operation during a \SC{}.
Therefore, some process must have executed a successful \CAS{} operation during a \SC{} and thus its \SC{} linearized during $(\lin{ld},\rsp{ld}]$.

Now we prove the only-then statement.
For that we show if $p$'s bit in $X$ is not set and $b=\False$ at \rsp{ld}, then no \SC{} linearizes during $(\lin{ld},\rsp{ld}]$.
Local variable $b=\False$ at \rsp{ld}, therefore $ld$ returns either in \cref{l:LL:returnX1} or in \cref{l:LL:returnX'}.
In the case that $ld$ returns in \cref{l:LL:returnX1}, $p$'s bit is 0 when $p$ reads $X$ in \cref{l:LL:readX} (at the linearization point of $ld$).
Thus, $p$'s bit is 0 at both \lin{ld} and \rsp{ld}.
This bit can only be changed to 0 by $p$ and only in \cref{l:LL:CAS}, which is not executed during $ld$ in this case.
Hence $p$'s bit has value 0 throughout $(\lin{ld},\rsp{ld}]$.
As all processes' bits in $X$ change to 1 when a \SC{} linearizes at its successful \CAS{} in \cref{l:SC:CAS}, no successful \CAS{} of a \SC{} happens throughout $(\lin{ld},\rsp{ld}]$.

Now suppose $ld$ returns in \cref{l:LL:returnX'}.
In this case $ld$ linearizes when its successful \CAS{} changes $p$'s bit to 0.
Process $p$'s bit is not set at \rsp{ld} and as $p$ does not try to change the value of its bit after its successful \CAS{}, and thus after \lin{ld}, $p$'s bit has value 0 throughout  $(\lin{ld},\rsp{ld}]$, and so with the same argument as before, no \SC{} linearizes at its successful \CAS{} operation in this interval.
\end{proof}

\begin{claim}\label{Cl:noSCBetweenReadAndCAS}
Consider a successful \CAS{} operation $cas$ in \cref{l:SC:CAS} of a \SC{} operation $sc$ and the last \Read{} operation $r$ executed before $cas$ in \cref{l:SC:readX} of $sc$. 
Then no successful \SC{} linearizes between $r$ and $cas$.
\end{claim}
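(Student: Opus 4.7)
The plan is to argue by contradiction, exploiting the fact that in this implementation only process $p$ itself can clear its own bit in the second component of $X$. Let $p$ be the process executing $sc$, let $(y,a)$ be the pair $p$ reads from $X$ during $r$, and assume for contradiction that some successful \SC{} operation $sc'$ linearizes in the interval $(r, cas)$.

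First, I would observe two facts about $p$'s bit in the second component. (i) Since $sc$ reaches its \CAS{} in \cref{l:SC:CAS} rather than returning \False in \cref{l:SC:returnFalse2}, the test in \cref{l:SC:ifBitIsSet} must have failed, so the $p$-th bit of $a$ is $0$. (ii) Since $cas$ is a successful \CAS{$(y,a),(x,2^n-1)$}, the value of $X$ at the moment $cas$ takes effect must equal $(y,a)$; in particular, the $p$-th bit of $X$'s second component is $0$ at that instant.

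Next, I would track how the $p$-th bit of $X$ can change over time. The only operations that modify $X$ are successful \CAS{} steps in \cref{l:LL:CAS} and in \cref{l:SC:CAS}. A successful \CAS{} in \cref{l:SC:CAS} overwrites the second component with $2^n-1$, and so sets the $p$-th bit to $1$. A successful \CAS{} in \cref{l:LL:CAS} executed by a process $q$ writes $(x',a'-2^q)$, which only clears $q$'s own bit. Therefore the $p$-th bit of $X$ can be changed from $1$ to $0$ only by $p$ itself, via a successful \CAS{} in \cref{l:LL:CAS}.

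Now I would close the argument. The linearization point of the hypothesized successful $sc'$ lies in $(r, cas)$, and immediately after that point the $p$-th bit of $X$ is $1$. Between $r$ and $cas$, process $p$ executes only the conditional test in \cref{l:SC:ifBitIsSet} and the \CAS{} $cas$ itself, so $p$ performs no step in \cref{l:LL:CAS} during this interval and consequently cannot clear its bit. Hence the $p$-th bit of $X$ remains $1$ throughout $[\lin{sc'}, cas]$, contradicting fact (ii). The only real subtlety is ruling out a sequence of intermediate \CAS{} steps that ``restores'' $X$ to $(y,a)$, and that is precisely what the bit-monotonicity observation rules out; once that observation is in hand, no further calculation is needed.
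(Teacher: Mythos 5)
Your proof is correct, and its core is the same contradiction the paper uses: for $cas$ to succeed the test in \cref{l:SC:ifBitIsSet} must fail, so $p$'s bit in the value $(y,a)$ read at $r$ is $0$, whereas a successful \SC{} linearizing in between would set the second component of $X$ to $2^n-1$. The difference is in how the contradiction is closed. The paper argues on the whole value of $X$: since $X=(y,a)$ just before $cas$, the last successful \SC{} in the interval ``must update the value of $X$ to $(y,a)$,'' hence $a=2^n-1$ --- a step that is slightly imprecise as written, because after that \SC{} other processes could still clear their own bits via successful \CAS{} steps in \cref{l:LL:CAS}, so $X$ need not retain the value $2^n-1$ in its second component. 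You instead track only the $p$-th bit and observe that it can be cleared only by $p$ itself in \cref{l:LL:CAS}, which $p$ does not execute between $r$ and $cas$; this bit-monotonicity argument is exactly what is needed to rule out the ``restoration'' scenario the paper's phrasing glosses over, and it makes your version the more airtight of the two.
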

\begin{proof}
Let $p$ be the process which executes $sc$, and $(y^\ast,a^\ast)$ be the value $p$ reads from $X$ when it executes $r$.
As $cas$ succeeds, the if-condition in \cref{l:SC:ifBitIsSet} cannot be evaluated to $\True$.
Hence, $p$'s bit in $X$ must be 0 when $p$ executes $r$, and so $a^\ast\neq 2^n-1$.
Moreover, since $cas$ is successful, the value of $X$ is $(y^\ast,a^\ast)$ just before $cas$ is executed.
Suppose for the sake of contradiction that at least one successful \SC{} operation linearizes between $r$ and $cas$.
Note that the value of $X$ is updated at the linearization point of a successful \SC{} operation.
Thus, the last successful \SC{} executed between $r$ and $cas$ must update the value of $X$ to $(y^\ast,a^\ast)$.
However, a successful \SC{} operation changes the second component of $X$ to $2^n-1$, and so $a^\ast=2^n-1$---contradiction.
\end{proof}

\begin{claim}\label{Cl:SCisValid}
Consider a \SC{} operation $sc$ by some process $p$ and let $ld$ be the last \LL{} operation by the same process $p$ executed before $sc$.
Then $sc$ is successful if and only if no successful \SC{} operation linearizes between $ld$ and $sc$.
\end{claim}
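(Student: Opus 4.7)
The plan is to prove the claim by tracking two quantities throughout the interval from $\lin{ld}$ to $\lin{sc}$: process $p$'s bit in the second component of $X$, and $p$'s local flag $b$. Two structural observations underlie the whole argument. First, $p$'s bit in $X$ can only be reset from $1$ to $0$ by $p$ itself, via a successful \CAS{} in \cref{l:LL:CAS} of an \LL{} call; no other line and no other process decreases $p$'s bit. Second, $b$ is written only inside \LL{} calls (\cref{l:LL:setBFalse1,l:LL:setBFalse2,l:LL:setBTrue}), so its value does not change between $\rsp{ld}$ and $\inv{sc}$. Since $ld$ is $p$'s last \LL{} before $sc$, a short case analysis on how $ld$ returns shows that after $\lin{ld}$ the tail of $ld$ never resets $p$'s bit: if $\lin{ld}$ is the \Read{} of \cref{l:LL:readX} in the branch returning at \cref{l:LL:returnX1}, no further \CAS{} is issued; in the branch returning at \cref{l:LL:returnX2} all subsequent \CAS{} operations by $p$ fail by assumption; and if $\lin{ld}$ is the successful \CAS{} leading to \cref{l:LL:returnX'}, only the local update of $b$ remains. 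Consequently, from $\lin{ld}$ onward until $\lin{sc}$ the value of $p$'s bit in $X$ can only change from $0$ to $1$.

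For the ``only if'' direction, assume $sc$ is successful, let $c$ denote its successful \CAS{} at \cref{l:SC:CAS} (so $\lin{sc}$ is the moment of $c$), and let $r$ denote the immediately preceding \Read{} of $X$ in \cref{l:SC:readX}. Because $sc$ does not return at \cref{l:SC:returnFalse1}, $b=\False$ at $\inv{sc}$ and hence also at $\rsp{ld}$; because $sc$ does not return at \cref{l:SC:returnFalse2}, $p$'s bit is $0$ at $r$. \Cref{Cl:noSCBetweenReadAndCAS} rules out a successful \SC{} linearizing in $(r,\lin{sc})$. If some successful \SC{} were to linearize at a time $t\in(\lin{ld},r]$, it would set $p$'s bit to $1$ at $t$, and by the monotonicity observation the bit would still be $1$ at $r$, contradicting the fact that $r$ read a $0$. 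Hence no successful \SC{} linearizes in $(\lin{ld},\lin{sc})$.

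For the ``if'' direction, assume $sc$ fails, so $\lin{sc}=\rsp{sc}$, and split on the line of $sc$ that returns \False. If $sc$ returns at \cref{l:SC:returnFalse1}, then $b=\True$ at $\inv{sc}$ and therefore at $\rsp{ld}$, and \cref{Cl:aSCLinearizes} yields a successful \SC{} linearizing in $(\lin{ld},\rsp{ld}]\subseteq(\lin{ld},\lin{sc})$. If $sc$ returns at \cref{l:SC:returnFalse2}, let $r$ be the \Read{} in \cref{l:SC:readX} of $sc$ that witnesses $p$'s bit being set; since $b=\False$ at $\inv{sc}$ (otherwise we are in the previous case), \cref{Cl:aSCLinearizes} forces $p$'s bit to be $0$ at $\rsp{ld}$. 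It therefore flipped to $1$ at some point $t\in(\rsp{ld},r]$; only a successful \SC{} can cause such a change, so that \SC{} linearizes at $t\in(\lin{ld},\lin{sc})$. If $sc$ returns at \cref{l:SC:returnFalse3}, then $p$ executed $n$ consecutive unsuccessful \CAS{} operations in \cref{l:SC:CAS} during $sc$, and \cref{Cl:nUnsuccessfulCAS} directly produces a successful \SC{} linearizing inside $sc$, hence in $(\lin{ld},\lin{sc})$.

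The main obstacle, which the opening observations are designed to dispatch, is the behavior of $p$ during the tail of $ld$ after $\lin{ld}$: a priori $p$ might still be attempting to reset its own bit in \cref{l:LL:CAS} between $\lin{ld}$ and $\rsp{ld}$, which could break the monotonicity on which both directions rely. The case analysis on the return line of $ld$, combined with \cref{Cl:aSCLinearizes} to interchange ``$p$'s bit set at $\rsp{ld}$'' and ``$b=\True$ at $\rsp{ld}$'' whenever needed, is the crucial bookkeeping; once it is in place, the remaining arguments reduce to direct appeals to \cref{Cl:noSCBetweenReadAndCAS} and \cref{Cl:nUnsuccessfulCAS}.
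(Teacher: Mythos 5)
Your proof follows the same overall skeleton as the paper's: the failure direction is a case analysis on which of \cref{l:SC:returnFalse1,l:SC:returnFalse2,l:SC:returnFalse3} returns, discharged via \cref{Cl:aSCLinearizes} and \cref{Cl:nUnsuccessfulCAS}, and the success direction uses \cref{Cl:noSCBetweenReadAndCAS} for the segment between the last \Read{} $r$ and the successful \CAS{}. The one genuinely different ingredient is your treatment of the segment $(\lin{ld},r]$ in the success direction: the paper splits it into $(\lin{ld},\rsp{ld}]$ (handled by the only-then part of \cref{Cl:aSCLinearizes}, which needs both $b=\False$ and the bit being $0$ at $\rsp{ld}$) and $[\rsp{ld},t]$ (handled by a bit-stays-zero argument), whereas you prove once and for all that $p$'s bit is never reset anywhere in $(\lin{ld},\lin{sc}]$ --- via the case analysis on $ld$'s return line --- and then derive a single contradiction from the bit being $0$ at $r$. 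That monotonicity argument is correct and arguably cleaner, since it makes the forward direction independent of \cref{Cl:aSCLinearizes}.

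There is, however, one step that is wrong as written: in the \cref{l:SC:returnFalse2} case you assert that ``since $b=\False$ at $\inv{sc}$, \cref{Cl:aSCLinearizes} forces $p$'s bit to be $0$ at $\rsp{ld}$.'' That implication does not hold. \Cref{Cl:aSCLinearizes} is an equivalence between \emph{(bit set at $\rsp{ld}$ or $b=\True$ at $\rsp{ld}$)} and \emph{(some successful \SC{} linearizes in $(\lin{ld},\rsp{ld}]$)}; knowing $b=\False$ says nothing about the bit, and indeed the bit can be $1$ at $\rsp{ld}$ with $b=\False$ (e.g., a successful \SC{} by another process linearizes during the tail of $ld$ after $\lin{ld}$, when $ld$ is about to return in \cref{l:LL:returnX1} or \cref{l:LL:returnX'}). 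The repair is exactly the case split the paper performs: if $p$'s bit is set at $\rsp{ld}$, then \cref{Cl:aSCLinearizes} \emph{directly} yields a successful \SC{} linearizing in $(\lin{ld},\rsp{ld}]\subseteq(\lin{ld},\lin{sc}]$ and you are done; only if the bit is $0$ at $\rsp{ld}$ do you argue that it must flip to $1$ somewhere in $(\rsp{ld},r]$, which witnesses the required successful \SC{}. The conclusion is unaffected, but the deduction as you stated it is false and needs this two-case replacement.
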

\begin{proof}
First we prove the if-then statement.
Operation $sc$ is successful, if one of its \CAS{} operations $c^*$ succeeds in \cref{l:SC:CAS} and so $sc$ returns in \cref{l:SC:returnTrue}.
Hence, $b=\False$ at the \inv{sc}.
As $p$'s local variable $b$ is only changed during a \LL{} operation, 
\begin{equation}\label{Eq:b=False}
\text{$b=\False$ at the \rsp{ld}.}
\end{equation}
Moreover, since $sc$ returns in \cref{l:SC:returnTrue}, $p$ reads value $0$ from its bit when it reads $X$ in \cref{l:SC:readX} the last time before $cas$ at some point $t$.
This bit can only be reset in \cref{l:LL:CAS} of a \LL{} operation by $p$, hence, 
\begin{equation}\label{Eq:bitIs0}
\text{$p$'s bit is 0 throughout  $[\rsp{ld},t]$.}
\end{equation}
Hence by \cref{Eq:b=False}, \cref{Eq:bitIs0}, and \cref{Cl:aSCLinearizes}, no successful \SC{} operation linearizes during $(\lin{ld},\rsp{ld}]$.
Moreover, by \cref{Eq:bitIs0}, no successful \SC{} linearizes throughout $[\rsp{ld},t]$, as otherwise the value of $p$'s bit would change to 1.
Moreover, by \cref{Cl:noSCBetweenReadAndCAS} no successful \SC{} linearizes during $[t,\lin{sc}]$, as \lin{sc} is when cas succeeds.
Therefore, no successful \SC{} linearizes throughout $(\lin{ld},\lin{sc}]$.

Now we show the only-if statement is also true, by showing that if $sc$ is not successful, then at least one successful \SC{} operation linearizes between $ld$ and $sc$.
There are three cases where $sc$ can return.
The first case is if $sc$ returns in \cref{l:SC:returnFalse1} and so $b=\True$ at \inv{sc}.
Process $p$'s local variable $b$ does not change outside a \LL{} operation, hence, $b=\True$ at the \rsp{ld}.
By \cref{Cl:aSCLinearizes}, a successful \SC{} operation linearizes during $(\lin{ld},\rsp{ld}]\subseteq (\lin{ld},\lin{sc}]$.

The second case happens when $sc$ returns in \cref{l:SC:returnFalse2}.
In this case, $p$'s bit is set when $p$ reads $X$ in \cref{l:SC:readX} for the last time during $sc$ at some point $t$.
Note that $\lin{sc}=\rsp{sc} \geq t$.
Now suppose $p$'s bit is 0 at \rsp{ld}.
Hence, some process sets this bit with a successful \CAS{} at the linearization point of a successful \SC{} operation during $(\rsp{ld},t]\subseteq(\lin{ld},\lin{sc}]$.
If $p$'s bit is 1 at \rsp{ld}, then by \cref{Cl:aSCLinearizes}, a successful \SC{} operation linearizes during  $(\lin{ld},\rsp{ld}]\subseteq(\lin{ld},\lin{sc}]$.

The last case is when $sc$ returns in \cref{l:SC:returnFalse3}.
This implies that all $n$ \CAS{} operations of $p$ during $sc$ failed.
Thus by \cref{Cl:nUnsuccessfulCAS}, a successful \CAS happens during $[\inv{sc},\rsp{sc}]$ and so a successful \SC{} linearizes during $(\lin{ld},\lin{sc}]$.
\end{proof}

\begin{claim}\label{Cl:VLisValid}
Consider a \VL{} operation $vl$ by some process $p$ and let $ld$ be the last \LL{} operation by the same process $p$ executed before $sc$.
Then $vl$ returns $\True$ if and only if no successful \SC{} operation linearizes between $ld$ and $sc$.
\end{claim}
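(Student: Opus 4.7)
The plan is to mirror the structure of the proof of Claim~\ref{Cl:SCisValid}, using Claim~\ref{Cl:aSCLinearizes} as the main tool, and exploiting one simple monotonicity fact: after $\rsp{ld}$ and before $\rsp{vl}$, process $p$ does not execute any step that can reset its own bit in $X$ (since the only line that resets $p$'s bit is \cref{l:LL:CAS}, executed inside an \LL{} by $p$, and $ld$ is $p$'s last \LL{} before $vl$). Together with the fact that $p$'s bit can only be \emph{set} at the linearization point of a successful \SC{} (via \cref{l:SC:CAS}), this will let me convert statements about the value of $p$'s bit at various time points into statements about the existence of successful \SC{} linearizations. I will also use that $p$'s local variable $b$ is only modified inside \LL{}, so its value at $\inv{vl}$ equals its value at $\rsp{ld}$.

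For the forward direction, I would assume $vl$ returns \True, so it linearizes at time $t$ when $p$ executed \cref{l:VL:readX}, with $p$'s bit equal to $0$ and $b=\False$ observed at $t$. From $b=\False$ at $\inv{vl}$ I would conclude $b=\False$ at $\rsp{ld}$, and from ``$p$'s bit is $0$ at $t$'' plus the monotonicity observation I would conclude $p$'s bit is $0$ at $\rsp{ld}$ as well. Applying Claim~\ref{Cl:aSCLinearizes} then rules out any successful \SC{} linearizing in $(\lin{ld},\rsp{ld}]$. Finally, since $p$'s bit remained $0$ throughout $[\rsp{ld},t]=[\rsp{ld},\lin{vl}]$, and any successful \SC{} sets every process's bit to $1$, no successful \SC{} can linearize in that interval either, giving the desired conclusion on $(\lin{ld},\lin{vl}]$.

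For the reverse direction I would prove the contrapositive: if $vl$ returns \False then some successful \SC{} linearizes in $(\lin{ld},\lin{vl}]$. Here $\lin{vl}=\rsp{vl}$, and the condition in \cref{l:VL:returnTrue} failed, so either $b=\True$ at $t$ or $p$'s bit was $1$ at $t$. If $b=\True$ at $\inv{vl}$ then $b=\True$ at $\rsp{ld}$, and Claim~\ref{Cl:aSCLinearizes} directly supplies a successful \SC{} in $(\lin{ld},\rsp{ld}]\subseteq(\lin{ld},\lin{vl}]$. Otherwise $p$'s bit is $1$ at $t$; splitting on whether $p$'s bit is $0$ or $1$ at $\rsp{ld}$: in the first subcase the bit changed from $0$ to $1$ during $[\rsp{ld},t]$, which can only happen at the linearization point of a successful \SC{} inside that subinterval of $(\lin{ld},\lin{vl}]$; in the second subcase Claim~\ref{Cl:aSCLinearizes} again yields one in $(\lin{ld},\rsp{ld}]$.

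I do not expect a real obstacle here: the proof is essentially a simpler variant of Claim~\ref{Cl:SCisValid}, because \VL{} has no \CAS{} loop and therefore the last case of Claim~\ref{Cl:SCisValid} (the one that needed Claim~\ref{Cl:nUnsuccessfulCAS}) does not arise. The only thing to be careful about is aligning the four possible combinations of $b$ and $p$'s bit with the appropriate time points ($\rsp{ld}$, $t$, $\rsp{vl}$) and recalling that the linearization point of $vl$ is $t$ in the \True case and $\rsp{vl}$ in the \False case.
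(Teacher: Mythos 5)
Your proposal is correct and follows essentially the same route as the paper's own proof: both directions rest on Claim~\ref{Cl:aSCLinearizes}, the fact that $b$ changes only inside \LL{}, and the observation that $p$'s bit in $X$ can be reset only by $p$ in \cref{l:LL:CAS} and set only at the linearization point of a successful \SC{}, with the identical case split ($b=\True$ versus the bit being set, and the sub-split on the bit's value at $\rsp{ld}$) in the reverse direction. Your closing remark is also on target --- the paper's proof is indeed just Claim~\ref{Cl:SCisValid} minus the \CAS{}-loop case that required Claim~\ref{Cl:nUnsuccessfulCAS}.
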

\begin{proof}\todo{this is too similar to the previous claim, the proof can be removed if we need space}
First we prove the if-then statement.
Operation $vl$ returns $\True$, hence, $b=\False$ at the \inv{vl}.
As $p$'s local variable $b$ is only changed during a \LL{} operation, 
\begin{equation}\label{Eq:VL:b=False}
\text{$b=\False$ at the \rsp{ld}.}
\end{equation}
Moreover, since $vl$ returns $\True$, $p$ reads value $0$ from its bit when it reads $X$ in \cref{l:VL:readX} of $vl$ at the linearization point of $vl$.
This bit can only be reset in \cref{l:LL:CAS} of a \LL{} operation by $p$, hence, 
\begin{equation}\label{Eq:VL:bitIs0}
\text{$p$'s bit is 0 throughout  $[\rsp{ld},\lin{vl}]$.}
\end{equation}
Hence by \cref{Eq:VL:b=False}, \cref{Eq:VL:bitIs0}, and \cref{Cl:aSCLinearizes}, no successful \SC{} operation linearizes during $(\lin{ld},\rsp{ld}]$.
Moreover, by \cref{Eq:VL:bitIs0}, no successful \SC{} linearizes throughout $[\rsp{ld},\lin{vl}]$, as otherwise the value of $p$'s bit would change to 1.
Therefore, no successful \SC{} linearizes throughout $(\lin{ld},\lin{vl}]$.

Now we show the only-if statement is also true, by showing that if $vl$ returns $\False$, then at least one successful \SC{} operation linearizes between $ld$ and $vl$.
There are two cases that can cause $vl$ to return $\False$.
The first case is if $b=\True$ at \inv{sc}.
Process $p$'s local variable $b$ does not change outside a \LL{} operation, hence, $b=\True$ at the \rsp{ld}.
By \cref{Cl:aSCLinearizes}, a successful \SC{} operation linearizes during $(\lin{ld},\rsp{ld}]\subseteq (\lin{ld},\lin{vl}]$.
The second case is when $p$'s bit is set when $p$ reads $X$ in \cref{l:VL:readX} of $vl$ at some point $t$.
Note that $\lin{vl}=\rsp{vl} \geq t$.
Now suppose $p$'s bit is 0 at \rsp{ld}.
Hence, some process sets this bit with a successful \CAS{} at the linearization point of a successful \SC{} operation during $(\rsp{ld},t]\subseteq(\lin{ld},\lin{vl}]$.
If $p$'s bit is 1 at \rsp{ld}, then by \cref{Cl:aSCLinearizes}, a successful \SC{} operation linearizes during  $(\lin{ld},\rsp{ld}]\subseteq(\lin{ld},\lin{vl}]$.
\end{proof}

Now we can quickly argue why $S_H$ is valid.
Consider some \LL{} operation $ld$ in $S_H$ that returns some value $y$.
For $S_H$ to be valid, register $X$ must contain value $y$ at the linearization point of $ld$.
Recall that \lin{ld} is the point at which $p$ executes \cref{l:LL:readX}, if $ld$ returns in either \cref{l:LL:returnX1} or \cref{l:LL:returnX2}.
Moreover, if $ld$ returns in \cref{l:LL:returnX'}, then \lin{ld} is the point at which $p$'s \CAS{} in \cref{l:LL:CAS} succeeds.
In the former case, $y$ is the value that $p$ reads from $X$ in \cref{l:LL:readX}  at \lin{ld}.
In the latter case, $y$ is the value that $p$ writes into $X$ during its successful \CAS{} operation in \cref{l:LL:CAS} at \lin{ld}.
Hence $ld$ returns a valid value.
This in addition to the results of \cref{Cl:SCisValid} and \cref{Cl:VLisValid} complete the proof that the resulting history $S_H$ is valid.

\end{document}